\newtheorem{theorem}{Theorem}[section]
\newtheorem{corollary}[theorem]{Corollary}
\newtheorem{lemma}[theorem]{Lemma}
\newtheorem{proposition}[theorem]{Proposition}
\newtheorem{question}[theorem]{Question}
\newtheorem{observation}[theorem]{Observation}
\theoremstyle{definition}
\newtheorem{defn}[theorem]{Definition}
\newtheorem{construction}[theorem]{Construction}
\newcommand\DELETE[1]{}
\tikzset{middlearrow/.style={
        decoration={markings,
            mark= at position 0.5 with {\arrow{#1}} ,
        },
        postaction={decorate}
    }
}
\begin{document}


\title{{\bf Monitoring arc-geodetic sets of oriented graphs}}
\author{
{\sc Tapas Das}$\,^{a}$, {\sc Florent Foucaud}$\,^{b}$,\\ {\sc Clara Marcille}$\,^{c}$, {\sc Pavan P D}$\,^{a,d}$, {\sc Sagnik Sen}$\,^{a}$ \\
\mbox{}\\
{\small $(a)$ Indian Institute of Technology Dharwad, Karnataka, India}\\
{\small $(b)$ Université Clermont Auvergne, CNRS, Clermont Auvergne INP, Mines Saint-\'Etienne,}\\ {\small LIMOS, 63000 Clermont-Ferrand, France}\\
{\small $(c)$ Univ. Bordeaux, CNRS,  Bordeaux INP, LaBRI, UMR 5800, F-33400, Talence, France}\\
{\small $(d)$ University of Turku, FI-20014, Finland}
}

\date{}

\maketitle

\begin{abstract}
Monitoring edge-geodetic sets in a graph are subsets of vertices such that every edge of the graph must lie on all the shortest paths between two vertices of the monitoring set. These objects were introduced in a work by Foucaud, Krishna and Ramasubramony Sulochana with relation to several prior notions in the area of network monitoring like distance edge-monitoring.

In this work, we explore the extension of those notions unto oriented graphs, modelling oriented networks, and call these objects monitoring arc-geodetic sets. We also define the lower and upper monitoring arc-geodetic number of an undirected graph as the minimum and maximum of the monitoring arc-geodetic number of all orientations of the graph. We determine the monitoring arc-geodetic number of fundamental graph classes such as bipartite graphs, trees, cycles, etc. Then, we characterize the graphs for which every monitoring arc-geodetic set is the entire set of vertices, and also characterize the solutions for tournaments. We also cover some complexity aspects by studying two algorithmic problems. We show that the problem of determining if an undirected graph has an orientation with the minimal monitoring arc-geodetic set being the entire set of vertices, is NP-hard. We also show that the problem of finding a monitoring arc-geodetic set of size at most $k$ is NP-complete when restricted to oriented graphs with maximum degree $4$.

\end{abstract}

\noindent \textbf{Keywords:} oriented graph, geodetic set, monitoring edge-geodetic set.

\section{Introduction}
In the area of network monitoring, one may wish to detect faults in a network. We have in our hand probes that can detect the distances between each other through the 
standard procedure of \emph{ping}~\cite{bampas2015network, bejerano2003robust}. If we model the networks by finite, undirected simple graphs, whose vertices represent nodes and whose edges represent the connections between them, then the fault in the network is considered detected, if for example, the fault causes the distance between some pair of probes to increase. Hence, the goal is to choose a subset of vertices (representing the probes) of the simple graph, with the property that if any edge in the graph is removed (the connection fails), then it is detected by at least one pair of probes.

To solve such problems, Foucaud, Krishna and Ramasubramony Sulochana introduced the concept of monitoring edge-geodetic sets~\cite{foucaud2023monitoring,foucaud2023monitoring2} by taking inspiration from two areas: the concept of geodetic sets in graphs and its variants~\cite{harary1993geodetic}, and
the concept of distance edge-monitoring sets~\cite{foucaud2022monitoring}.
Let $G$ be a simple undirected graph. Two vertices $x, y$ \textit{monitor} an edge $e$ of $G$ if $e$ belongs to all shortest paths between $x$ and $y$. A set $M$ of vertices of $G$ is called a \textit{monitoring edge-geodetic set} (\textit{MEG-set} for short) of $G$ if, for every edge $e$ of $G$, there is a pair $x, y$ of vertices of $M$ that monitors $e$ (see Definition 1.1 in \cite{foucaud2023monitoring}). 
The size of a minimum MEG-set of $G$ is its \textit{monitoring edge-geodetic number}, denoted by $meg(G)$. Note that $V(G)$ is always an MEG-set of $G$, thus $meg(G)$ is always well-defined. 

The theory of monitoring edge-geodetic sets of graphs has been developed in several works including~\cite{foucaud2023monitoring,foucaud2024monitoring,foucaud2023monitoring2}. Some complexity aspects have been addressed in~\cite{bilo2024inapproximability,HASLEGRAVE202379} and monitoring edge-geodetic sets have been studied on networks in \cite{ma2024monitoring,tan2023monitoring,xu2024monitoring}.

In an attempt to broaden the scope of the application of monitoring edge-geodetic sets, we generalize the MEG-set problem to oriented graphs. Similar generalisations have been done in the past for geodetic sets~\cite{chartrand2000geodetic,lu2007geodetic,chang2004geodetic,kim2004geodetic,hung2006hull,hung2009hull,araujo2022hull,farrugia2005orientable} which are also defined using shortest paths between vertices, and our work draws inspiration from them. As a generalisation, we model the networks by finite, oriented graphs, whose vertices represent nodes and whose arcs represent the connections between them. To understand the significance of this generalisation, observe that while in undirected graphs, a shortest path from a vertex $u$ to a vertex $v$ is the same as a shortest path from $v$ to $u$, it is no longer the case in oriented graphs. That is, in an oriented graph, an arc $\overrightarrow{a}$ might lie on a shortest path from $u$ to $v$ but not on a shortest path from $v$ to $u$. Note that in an oriented graph, by a shortest path, we specifically mean a shortest \textit{directed} path, that is, an oriented path with all the arcs oriented in the same direction.

\begin{defn}
    In an oriented graph $\overrightarrow{G}$, two vertices $x$ and $y$ are said to \textit{monitor} an arc $\overrightarrow{a}$ if $\overrightarrow{a}$ belongs to the intersection of all shortest paths from $x$ to $y$ or the intersection of all shortest paths from $y$ to $x$. A subset $M$ of $V(\overrightarrow{G})$ is a \textit{monitoring arc-geodetic set} (\textit{MAG-set} for short) of $\overrightarrow{G}$ if for every arc $\overrightarrow{a} \in A(\overrightarrow{G})$, there exist two distinct vertices $u, v\in M$ such that $u$ and $v$ monitor $\overrightarrow{a}$. The \textit{monitoring arc-geodetic number}, denoted by $mag(\overrightarrow{G})$, is the minimum size of a MAG-set of $\overrightarrow{G}$.
\end{defn}
    Note that $V(\overrightarrow{G})$ is always an MAG-set of $\overrightarrow{G}$, thus $mag(\overrightarrow{G})$ is always well-defined. Also, by the definition, since an arc is monitored by a pair of distinct vertices, $mag(\overrightarrow{G}) \geq 2$ unless $\overrightarrow{G}$ is an oriented graph without any arcs, in which case $mag(\overrightarrow{G}) = 0$.

    While studying the geodetic sets for oriented graphs, Chartrand and Zhang~\cite{chartrand2000geodetic} defined the \textit{lower orientable geodetic number} $g^-(G)$ of $G$ as the minimum geodetic number of an orientation of $G$ and the \textit{upper orientable geodetic number} $g^+(G)$ as the maximum geodetic number of an orientation of $G$. Many works~\cite{lu2007geodetic,chang2004geodetic,dong2009upper,farrugia2005orientable,hung2006hull} have contributed to the study of geodetic sets in oriented graphs and lower and upper orientable geodetic numbers. For instance, Chartrand and Zhang~\cite{chartrand2000geodetic} and Chang, Tong and Wang~\cite{chang2004geodetic} determined upper and lower orientable geodetic numbers for some common graph classes like trees, paths, cycles, $r$-partite graphs and complete graphs. The characterization of oriented graphs with geodetic number equal to its order was given by Chartrand and Zhang~\cite{chartrand2000geodetic} and they also related the geodetic number with the diameter of the graph. Complexity aspects were studied by Ara{\'u}jo and Arraes~\cite{araujo2022hull} who showed that determining whether the geodetic number of an oriented graph $\overrightarrow{G}$ is at most $k$ is NP-hard, even if $\overrightarrow{G}$ has no directed cycle and its underlying graph is either bipartite or cobipartite. Our work will take inspiration from all these works due to the conceptual similarities between geodetic sets on oriented graphs and monitoring arc-geodetic sets. We define the following in the context of monitoring arc-geodetic sets.

\begin{defn}
    For an undirected graph $G$, the \textit{monitoring arc-geodetic spectrum} of $G$, denoted by $S(G)$, is the set $\{mag(\overrightarrow{G}) \colon\ \overrightarrow{G} \text{ is an orientation of } $G$\}$. The \textit{lower monitoring arc-geodetic number} of $G$ is denoted by $mag^-(G)$ and is equal to $\min (S(G))$. The \textit{upper monitoring arc-geodetic number} of $G$ is denoted by $mag^+(G)$ and is equal to $\max (S(G))$.
\end{defn}

An oriented graph $\overrightarrow{G}$ is said to be \textit{connected} if the underlying undirected graph is connected. A maximal connected subgraph of an oriented graph is called a \textit{component}. If $\overrightarrow{G}$ is an oriented graph with components $\{\overrightarrow{C_1}, \ldots, \overrightarrow{C_n}\}$, then finding an MAG-set of $\overrightarrow{G}$ is the same as finding MAG-sets of the individual components $\overrightarrow{C_i}$, since the concept of a shortest path between two distinct vertices only makes sense if they are in the same component. The union of the minimum MAG-sets of each component $\overrightarrow{C_i}$ gives us a minimum MAG-set of $\overrightarrow{G}$. Hence, in the forthcoming discussion, we will assume that all (oriented) graphs are connected unless otherwise mentioned. 

\subsection{Overview and organization of the paper}
In this work, we lay the groundwork for further exploration on MAG-sets and the monitoring arc-geodetic number.
\begin{itemize}
    \item Following the groundwork laid in this section by introducing essential concepts and giving key definitions, in Section~\ref{sec2}, we determine fundamental results which underpin our entire research work. We first pinpoint the vertices which are always a part of every MAG-set. Building upon this, we give detailed explanations on finding MAG-sets of graphs by determining them for some fundamental graph classes such as trees, paths and cycles. We also study how to determine the upper and lower monitoring arc-geodetic numbers for graph classes, by carrying out an analysis and determining the precise values of $mag^-$ and $mag^+$ for each class studied above.
    
    \item In Section~\ref{sec3}, we try to learn more about the behaviour of MAG-sets in a graph by focusing on those oriented graphs for which the minimimal MAG-set is always the entire set of vertices. We call such graphs MAG-extremal, drawing an analogy with the MEG-extremal graphs defined with respect to monitoring edge-geodetic sets~\cite{foucaud2023monitoring}. We prove a characterization result for MAG-extremal graphs which identifies the properties of a vertex which is always in every MAG-set of the graph. We also illustrate via examples, the lack of a straightforward relation between $meg$ and $mag^+$ of an undirected graph.
    
    \item In Section~\ref{sec4}, we deal with MAG-sets within tournaments, another fundamental graph class. We completely characterize the MAG-sets of tournaments by proving that the monitoring arc-geodetic number of a tournament of order $n$ is at least $n - 1$.
    We demonstrate that this bound is tight by proving the existence of tournaments which achieve the lower and upper bounds (trivially $n$). This finding contrasts with the behavior of monitoring edge-geodetic sets of complete graphs, which always include all the vertices~\cite{foucaud2023monitoring}. It is also distinct from the geodetic number of tournaments of order $n$, which can take any value in the set $\{2, \ldots, n\}$~\cite{chartrand2000geodetic}.
    
    \item In Section~\ref{sec5}, we focus on complexity aspects of some problems related to MAG-sets. The study of MAG-extremal graphs inspires the \textsc{MAG$^+$-set} problem, which asks whether for an undirected graph $G$, $mag^+(G) = |V(G)|$. We show that this problem is NP-hard. Additionally, we study the \textsc{$k$-MAG-set} problem, which asks whether there exists an MAG-set of size $k$ in an oriented graph $\overrightarrow{G}$. Our results prove that this problem is NP-complete even when restricted to acyclic oriented graphs with maximum degree 4.
    
    \item Finally, we conclude in Section~\ref{sec6} by proposing several open questions and potential research directions, paving the way for future exploration in this area.
\end{itemize}

\section{Preliminary results}
\label{sec2}
We begin with some results which are analogous to the known results about the monitoring edge-geodetic number~\cite{foucaud2023monitoring,foucaud2024monitoring}. Let $\overrightarrow{G}$ be an oriented graph. We say that a vertex $u$ is an \textit{in-neighbor} (resp., \textit{out-neighbor}) of a vertex $v$ in $\overrightarrow{G}$ if $\overrightarrow{uv}$ (resp., $\overrightarrow{vu}$) is an arc in $\overrightarrow{G}$. The set of in-neighbors (resp., out-neighbors) of $u$ in $\overrightarrow{G}$ is denoted by $N^-(u)$ (resp., $N^+(u)$). A vertex $u \in V(\overrightarrow{G})$ is a source (resp., sink) if $N^-(u) = \emptyset$ (resp., $N^+(u) = \emptyset$).

\subsection{Useful propositions}

\begin{proposition}\label{prop:SnS}
Let $\overrightarrow{G}$ be an oriented graph and $u$ be a source or a sink of $\overrightarrow{G}$. Then $u$ is in every MAG-set of $\overrightarrow{G}$.
\end{proposition}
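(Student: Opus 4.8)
The plan is to argue by contradiction: suppose $u$ is a source (the sink case being symmetric, by reversing all arcs, which preserves shortest directed paths and hence MAG-sets) and suppose $M$ is an MAG-set of $\overrightarrow{G}$ with $u \notin M$. Since $u$ is a source, it has at least one out-neighbor $v$, so the arc $\overrightarrow{uv}$ exists and must be monitored by some pair $x, y \in M$; say $\overrightarrow{uv}$ lies on the intersection of all shortest paths from $x$ to $y$ (the other case is analogous). The key observation is that any shortest path from $x$ to $y$ that uses the arc $\overrightarrow{uv}$ must \emph{enter} the vertex $u$ before traversing $\overrightarrow{uv}$, unless $x = u$ itself. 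But $u$ has no in-neighbors, so no directed path can enter $u$; hence the only way a directed path can use $\overrightarrow{uv}$ is to start at $u$. Therefore $x = u$, contradicting $u \notin M$.

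First I would set up the reduction to the source case via arc-reversal, noting explicitly that the reversal bijection sends shortest paths from $x$ to $y$ to shortest paths from $y$ to $x$, so $M$ is an MAG-set of $\overrightarrow{G}$ if and only if it is an MAG-set of the reverse. Next I would pick an arc $\overrightarrow{a} = \overrightarrow{uv}$ incident to $u$ (exists since we assume $\overrightarrow{G}$ is connected with at least one arc; if $\overrightarrow{G}$ has no arcs the statement is vacuous), and invoke the monitoring condition to get the pair $x,y \in M$. Then I would carry out the main step: in the case $\overrightarrow{a}$ is on all shortest $x$-to-$y$ paths, take any such shortest path $P$; it contains $\overrightarrow{uv}$, so it contains $u$; the subpath of $P$ from $x$ to $u$ is a directed path ending at $u$, which forces $x = u$ since $N^-(u) = \emptyset$. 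In the symmetric case where $\overrightarrow{a}$ is on all shortest $y$-to-$x$ paths, the same argument gives $y = u$. Either way $u \in \{x,y\} \subseteq M$, the desired contradiction.

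I do not anticipate a serious obstacle here; the argument is short and the only thing to be careful about is the degenerate cases (no arcs, or $u$ being both a source and isolated) and making the arc-reversal symmetry fully rigorous rather than hand-waving it. One subtlety worth stating cleanly is that the definition of ``monitor'' is disjunctive — the arc lies on all shortest paths in \emph{one} of the two directions — so the proof must handle both disjuncts, but by the reversal symmetry it suffices to treat one of them in detail and remark that the other is identical.
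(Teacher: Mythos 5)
Your proof is correct and is essentially the paper's argument: the paper likewise observes that a source (or sink) $u$ can never be an internal vertex of a directed shortest path, so the pair monitoring an arc incident to $u$ must include $u$ itself. You simply spell out the same observation in more detail (contradiction setup, reversal symmetry, degenerate cases), which is fine.
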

\begin{proof}
    Let $u$ be a source (resp., sink) and let $\overrightarrow{uv}$ (resp., $\overrightarrow{vu}$) be an arc in $\overrightarrow{G}$. For the arc $\overrightarrow{uv}$ (resp., $\overrightarrow{vu}$) to be monitored, $u$ must belong to every MAG-set since there is no shortest path between two distinct vertices of $\overrightarrow{G}$ with $u$ as an intermediate vertex.
\end{proof}

We can give the following simple result for bipartite graphs as a consequence of Proposition~\ref{prop:SnS}.

\begin{proposition}
\label{prop:bipartite}
    Let $G$ be a bipartite graph. Then $mag^+(G) = |V(G)|$.
\end{proposition}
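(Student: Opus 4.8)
The plan is to exhibit a single orientation $\overrightarrow{G}$ of $G$ in which every vertex is forced into each MAG-set, so that $mag(\overrightarrow{G}) = |V(G)|$; combined with the trivial upper bound $mag(\overrightarrow{G}) \le |V(G)|$ this yields $mag^+(G) = |V(G)|$. Let $(X, Y)$ be the bipartition of $G$. The natural choice is the orientation that directs every edge from $X$ to $Y$. Then every vertex of $X$ is a source and every vertex of $Y$ is a sink, since all arcs go from $X$ to $Y$.

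With this orientation in hand, the result follows immediately from Proposition~\ref{prop:SnS}: every vertex is either a source (the vertices of $X$) or a sink (the vertices of $Y$), hence every vertex lies in every MAG-set of $\overrightarrow{G}$. Therefore $V(\overrightarrow{G})$ is the unique MAG-set and $mag(\overrightarrow{G}) = |V(G)|$, so $|V(G)| \in S(G)$ and $mag^+(G) = |V(G)|$.

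The only point requiring a moment's care is the degenerate/connectivity convention: we are assuming $G$ is connected (nonempty with at least one edge, else the statement is trivial since then $mag^+(G)=0=|V(G)|$ only if $G$ has no vertices, but an isolated vertex has $mag = 0$ by the stated convention, so one should note the statement is intended for graphs with at least one edge, or handle the edgeless case separately). Since the bipartition of a connected bipartite graph with an edge has both parts nonempty, the "$X$ to $Y$" orientation genuinely makes each vertex a source or a sink, and there is no real obstacle — the main "work" is simply recognizing that the all-edges-one-way orientation of a bipartite graph turns the vertex set into sources and sinks, which is exactly the hypothesis of Proposition~\ref{prop:SnS}.
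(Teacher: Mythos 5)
Your proof is correct and follows essentially the same approach as the paper: orient all edges from one part of the bipartition to the other, so that every vertex becomes a source or a sink, and apply Proposition~\ref{prop:SnS}. The extra remark about the edgeless/degenerate case is fine but not needed given the paper's standing convention that graphs are connected.
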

\begin{proof}
    Let $G$ be a bipartite graph with parts $A$ and $B$. We can always orient the edges of $G$ from $A$ to $B$ in such a way that the vertices in the part $A$ are sources and the vertices in the part $B$ are sinks. By Proposition~\ref{prop:SnS}, the unique minimum MAG-set of this orientation is the set of all vertices.
\end{proof}

Two distinct vertices $u$ and $v$ in an oriented graph $\overrightarrow{G}$ are \textit{twins} if $N^+(u) = N^+(v)$ and $N^-(u) = N^-(v)$.

\begin{proposition}
    In an oriented graph $\overrightarrow{G}$, every pair of twins is in every MAG-set of $\overrightarrow{G}$.
\end{proposition}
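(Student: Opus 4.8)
The plan is to show that if $u,v$ are twins, then no shortest directed path strictly between two *other* vertices can use $u$ as an internal vertex without also "having a choice" to use $v$ instead, so $u$ can only be monitored by a pair that includes $u$ itself; the same holds for $v$. First I would fix twins $u,v$ with $N^+(u)=N^+(v)$ and $N^-(u)=N^-(v)$, and note that since $\overrightarrow{G}$ is an oriented graph (no digons), $u\notin N^+(v)$, so there is no arc between $u$ and $v$; moreover they have a common in-neighbour or common out-neighbour as soon as either has positive in- or out-degree. I will treat the case where $u$ has an out-neighbour (so $\overrightarrow{ux}$ is an arc for some $x$, hence also $\overrightarrow{vx}$ is an arc); the case with an in-neighbour is symmetric, and if $u$ has neither then $u$ is isolated and the statement is vacuous for that component.

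The key step is the following claim: for any two distinct vertices $a,b$ with $a\neq u$ and $b\neq u$, if some shortest $a$-to-$b$ directed path $P$ passes through $u$, then there is another shortest $a$-to-$b$ directed path not through $u$ (obtained by rerouting through $v$). Indeed, write $P = a \cdots p\, u\, q \cdots b$, where $p\in N^-(u)=N^-(v)$ and $q\in N^+(u)=N^+(v)$ (the endpoints of the two edges of $P$ incident to $u$; these exist and are distinct from $u$, and are distinct from $v$ because $v$ has no arc to/from $u$ so $v\notin\{p,q\}$). Then $p v$ and $v q$ are arcs, so replacing $u$ by $v$ yields a directed walk $a\cdots p\, v\, q\cdots b$ of the same length; it is a path provided $v$ did not already appear on $P$. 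If $v$ does appear on $P$, then because $P$ is a shortest path it is in particular an induced-enough subgraph that $v$'s position is constrained — more carefully, one checks $v$ cannot lie on a shortest $a$-$b$ path that also contains $u$: the subpath of $P$ from $a$ to $v$ followed by an edge to a neighbour of $u$ would shortcut $P$, contradicting minimality, because $v$ and $u$ have the same neighbourhoods. This is the main obstacle and the step requiring the most care: ruling out that $v$ already sits on $P$. I expect this to follow from a short length/minimality argument: if $v$ precedes $u$ on $P$, say $P = a\cdots v\, r\cdots p\, u\, q\cdots b$, then since $r\in N^+(v)=N^+(u)$ the arc $\overrightarrow{vr}=\overrightarrow{ur}$... rather, the arc $v r$ together with $r\in N^+(u)$ gives a shorter $a$-$b$ path $a\cdots v\, q\cdots b$ skipping the segment $r\cdots p\, u$, contradiction (and symmetrically if $v$ follows $u$).

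With the claim in hand, the conclusion is immediate: since $\overrightarrow{ux}$ is an arc and no pair of vertices other than one containing $u$ can monitor it — any pair $a,b$ with $a,b\neq u$ has, by the claim, a shortest $a$-$b$ path avoiding $u$ and hence avoiding the arc $\overrightarrow{ux}$, and likewise a shortest $b$-$a$ path avoiding it — the arc $\overrightarrow{ux}$ forces $u$ into every MAG-set. By the same reasoning applied to the arc $\overrightarrow{vx}$, the vertex $v$ lies in every MAG-set. Hence every pair of twins is contained in every MAG-set, as claimed. (If $u,v$ have in-neighbours instead of out-neighbours, apply the argument to an arc $\overrightarrow{yu}$ and $\overrightarrow{yv}$ with $y\in N^-(u)=N^-(v)$.)
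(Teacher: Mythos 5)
Your proof is correct and takes essentially the same route as the paper's: substitute the twin $v$ for $u$ on a shortest path witnessing that an arc incident to $u$ is monitored by a pair avoiding $u$, and derive a contradiction. You are in fact slightly more careful than the paper, since you explicitly rule out (by a shortcut/minimality argument) the possibility that $v$ already lies on the path, a point the paper's proof passes over silently.
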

\begin{proof}
    Let $u, v$ be twins in $\overrightarrow{G}$. If $u$ and $v$ are sources or sinks, then the result is true by Proposition~\ref{prop:SnS}. Hence, let $u$ and $v$ be neither sources nor sinks and assume without loss of generality that $u$ is not a part of an MAG-set $M$ of $\overrightarrow{G}$. Let $u'$ be an out-neighbor of $u$ and let the arc $\overrightarrow{uu'}$ be monitored by the vertices $a$ and $b$ in $M$. This means that there exists a shortest path $\overrightarrow{P} = a \ldots x u u' \ldots b$ between $a$ and $b$ (where $u'$ can be the same as $b$). Consider the path $\overrightarrow{P'} = a \ldots x v u' \ldots b$, where the subpaths between $a$ and $x$, and $u'$ and $b$ are the same in $\overrightarrow{P}$ and $\overrightarrow{P'}$. Clearly $\overrightarrow{P}$ and $\overrightarrow{P'}$ are of the same length, but $\overrightarrow{uu'}$ is not an arc in $\overrightarrow{P'}$, contradicting the fact that $\overrightarrow{uu'}$ is monitored by $a$ and $b$. Hence, $u$ must be a part of every MAG-set of $\overrightarrow{G}$.
\end{proof}

\subsection{Some basic oriented graph classes}

Orientations of trees have the property that there exists a unique shortest path between any two vertices in the graph. Hence, the calculation of the monitoring arc-geodetic numbers of these graphs is straightforward. 

\begin{lemma}
\label{lemma:MAGtrees}
Let $\overrightarrow{G}$ be an orientation of a tree. Then there is a unique minimum MAG-set of $\overrightarrow{G}$, and it is the set of its sources and sinks.
\end{lemma}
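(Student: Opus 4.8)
The plan is to establish two facts and then combine them with Proposition~\ref{prop:SnS}. The key structural feature of an orientation $\overrightarrow{G}$ of a tree that I would exploit throughout is this: between any two vertices $x$ and $y$ there is exactly one path in the underlying tree, so there is \emph{at most one} directed path from $x$ to $y$, and it exists precisely when that underlying path happens to be oriented consistently from $x$ to $y$. Consequently, whenever a directed $x$--$y$ path exists it is automatically the unique shortest one, and a pair $\{x,y\}$ monitors an arc $\overrightarrow{a}$ if and only if $\overrightarrow{a}$ lies on the (unique) directed path joining $x$ and $y$ in one of the two orientations. I would also record the trivial observation that an orientation of a tree is acyclic, since the underlying graph has no cycles.

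First I would show that the set $S$ of sources and sinks of $\overrightarrow{G}$ is a MAG-set. Fix an arc $\overrightarrow{uv}$. Starting from $u$, repeatedly move to an in-neighbor as long as one exists; by acyclicity this walk never repeats a vertex, so by finiteness it terminates at a source $s$, and the vertices visited form a directed path $s \to \cdots \to u$. Symmetrically, following out-neighbors from $v$ produces a sink $t$ together with a directed path $v \to \cdots \to t$. The concatenation $s \to \cdots \to u \to v \to \cdots \to t$ is again a directed path (a repeated vertex would yield a closed directed walk, contradicting acyclicity), it passes through the arc $\overrightarrow{uv}$, and it has positive length so $s \neq t$. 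By the first paragraph it is the unique directed $s$--$t$ path, hence $s$ and $t$ monitor $\overrightarrow{uv}$. Since $\overrightarrow{uv}$ was arbitrary, $S$ is a MAG-set.

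Next, by Proposition~\ref{prop:SnS}, every source and every sink of $\overrightarrow{G}$ belongs to every MAG-set, so $S \subseteq M$ for every MAG-set $M$. Combining this with the previous paragraph, $S$ is a MAG-set that is contained in all of them, hence it is a minimum MAG-set; and any MAG-set $M$ with $|M| = |S|$ must equal $S$ because $S \subseteq M$. Therefore $S$ is the unique minimum MAG-set, which is exactly the assertion of the lemma.

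I do not expect a genuine obstacle here: every step is short, and the only points needing a little care are the termination of the backward/forward walks and the claim that the walks yield honest paths (no repeated vertices) — both are immediate from acyclicity of tree orientations. If anything, the step deserving the most careful phrasing is why a directed path in a tree orientation is the \emph{unique} shortest path between its endpoints, since this is where the tree hypothesis is essential and where the analogous statement would fail for general oriented graphs.
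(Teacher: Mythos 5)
Your proposal is correct and follows essentially the same route as the paper: show that the sources and sinks form a MAG-set (each arc lies on the unique directed path from some source to some sink, which is the unique shortest path in a tree orientation), then invoke Proposition~\ref{prop:SnS} to get containment in every MAG-set and hence uniqueness of the minimum. You merely spell out the construction of the source--sink path and the acyclicity argument that the paper leaves implicit.
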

\begin{proof}
    Consider the set of sources and sinks of $\overrightarrow{G}$. Let $\overrightarrow{a}$ be an arc of $\overrightarrow{G}$. The arc $\overrightarrow{a}$ lies on the unique path between some source $u$ and some sink $v$, and this path is the only shortest path between $u$ and $v$. Hence $\overrightarrow{a}$ is monitored, and therefore, the set of sources and sinks is an MAG-set of $\overrightarrow{G}$. Proposition~\ref{prop:SnS} proves the lower bound on the size of a minimum MAG-set of $\overrightarrow{G}$, and completes the proof.
\end{proof}

For trees, the values of $mag^-$ and $mag^+$ can hence be easily determined by minimising and maximising the number of sources and sinks in the graph, respectively. For trees, it is possible to give an orientation such that the minimum MAG-set of the graph will consist only of the vertices of degree 1 in the tree.

\begin{corollary}
\label{cor:mag+tree}
Let $G$ be a tree, then $mag^+(G) = |V(G)|$ and $mag^-(G)$ is equal to the number of vertices of degree 1.
\end{corollary}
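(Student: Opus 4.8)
The plan is to reduce everything to counting sources and sinks. By Lemma~\ref{lemma:MAGtrees}, for \emph{every} orientation $\overrightarrow{G}$ of a tree $G$ we have $mag(\overrightarrow{G}) = s(\overrightarrow{G})$, where $s(\overrightarrow{G})$ is the number of vertices of $\overrightarrow{G}$ that are sources or sinks. Hence $mag^+(G)$ and $mag^-(G)$ are obtained by maximizing, resp.\ minimizing, $s(\overrightarrow{G})$ over all orientations of $G$. For the upper number: trivially any MAG-set lies in $V(G)$, so $mag^+(G)\le|V(G)|$; and since a tree is bipartite, Proposition~\ref{prop:bipartite} (equivalently, 2-colour $G$ and orient every edge from one colour class to the other, so that every vertex becomes a source or a sink) gives $mag^+(G)=|V(G)|$.

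For the lower number, let $L$ denote the number of vertices of degree $1$ in $G$. The lower bound $mag^-(G)\ge L$ is immediate: a degree-$1$ vertex has a single incident edge, which in any orientation points either into it or out of it, so every leaf is a source or a sink in every orientation; thus $s(\overrightarrow{G})\ge L$ always. For the matching upper bound I would exhibit an orientation $\overrightarrow{G}$ whose only sources and sinks are the leaves of $G$. Fix a longest path $D=x_0x_1\cdots x_d$ in $G$; its endpoints $x_0,x_d$ must be leaves (otherwise $D$ could be extended, as $G$ is acyclic). Orient $D$ as the directed path $x_0\to x_1\to\cdots\to x_d$, and orient every edge not on $D$ ``towards $D$'': regarding $G$ as $D$ with rooted subtrees hanging off its vertices, orient each non-$D$ edge from the endpoint farther from $D$ to the endpoint nearer $D$. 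Then every internal vertex $x_i$ of $D$ ($0<i<d$) has in-arc $\overrightarrow{x_{i-1}x_i}$ and out-arc $\overrightarrow{x_ix_{i+1}}$, and every vertex $u\notin V(D)$ of degree at least $2$ has an out-arc to its (unique) neighbour on the $D$-side and an in-arc from a child, so no internal vertex is a source or a sink; while every leaf is automatically one. Hence $s(\overrightarrow{G})=L$, and by Lemma~\ref{lemma:MAGtrees}, $mag(\overrightarrow{G})=L$, giving $mag^-(G)\le L$.

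This is essentially a direct consequence of Lemma~\ref{lemma:MAGtrees}, so I do not anticipate a real obstacle; the only point requiring a line of care is checking that the ``towards $D$'' rule indeed turns every degree-$\ge 2$ vertex outside $D$ into a transit vertex, which uses that such a vertex has exactly one neighbour on the side of $D$ (its parent) together with at least one further neighbour (a child). The edge case where $G$ is a single edge is covered by the same construction, with $D=G$ and $L=2$.
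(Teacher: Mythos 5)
Your proposal is correct and follows essentially the same route as the paper: reduce to counting sources and sinks via Lemma~\ref{lemma:MAGtrees}, use Proposition~\ref{prop:bipartite} for $mag^+$, and exhibit an orientation whose only sources and sinks are the leaves for $mag^-$. The only difference is your witness orientation (a longest path oriented as a directed path with all other edges oriented towards it), which works but is more elaborate than the paper's: root the tree at a leaf and orient every edge away from the root, so the root is the unique source and the remaining leaves are exactly the sinks.
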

\begin{proof}
    Since $G$ is a tree, it is bipartite, and by Proposition~\ref{prop:bipartite}, we have $mag^+(G) = |V(G)|$.

    Let $u$ be a vertex of degree 1 in $G$. Choose $u$ as the root of $G$ and orient all the arcs away from $u$. In this orientation, between any two degree 1 vertices (leaves), there is a unique shortest path, and hence, we get our result using Lemma~\ref{lemma:MAGtrees}.
\end{proof}

We define an undirected path $P_n$ of length $n$ to be a graph with $V(P_n) = \{v_1, \ldots, v_n\}$ and $E(P_n) = \{v_1v_2, \ldots, v_{n - 1}v_n\}$. Since paths are trees, the monitoring arc-geodetic number of paths depends only on the number of sources or sinks in the paths. Hence, we have the following immediate corollary. 

\begin{corollary}
For $n \geq 1$, $mag^-(P_n) = 2$ and $mag^+(P_n) = n$.
\end{corollary}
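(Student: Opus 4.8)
The plan is to obtain both equalities as immediate consequences of Corollary~\ref{cor:mag+tree}, exploiting the fact that every path is a tree. For the upper value, the first part of that corollary gives $mag^+(G) = |V(G)|$ for any tree $G$; applying this to $G = P_n$ yields $mag^+(P_n) = n$ directly. One may equally well see this from Proposition~\ref{prop:bipartite} together with the trivial bound $mag(\overrightarrow{G}) \le |V(\overrightarrow{G})|$.

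For the lower value, the second part of Corollary~\ref{cor:mag+tree} states that $mag^-(G)$ equals the number of vertices of degree $1$ in the tree $G$. The only remaining step is the elementary observation that the path $P_n = v_1 v_2 \cdots v_n$ has exactly two vertices of degree $1$ whenever $n \geq 2$ --- the endpoints $v_1$ and $v_n$ (for $n = 2$ these are simply the two vertices, each of degree $1$) --- so that $mag^-(P_n) = 2$. As a cross-check one can exhibit the extremal orientation explicitly: orienting $P_n$ as the directed path $v_1 \to v_2 \to \cdots \to v_n$ makes $v_1$ the unique source and $v_n$ the unique sink, whence by Lemma~\ref{lemma:MAGtrees} the unique minimum MAG-set of this orientation is $\{v_1, v_n\}$, of size $2$; and since $mag(\overrightarrow{G}) \ge 2$ for every orientation carrying at least one arc, no orientation of $P_n$ does better.

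There is essentially no obstacle here: the entire content is already packaged in Lemma~\ref{lemma:MAGtrees} and Corollary~\ref{cor:mag+tree}, and what is left is a one-line degree count on a path. The only point requiring a word of care is the small-$n$ boundary --- the degenerate case $n = 1$ (a single vertex with no arc, for which $mag = 0$) does not fit the stated formulas and should be read as excluded --- together with the purely cosmetic mismatch between the phrase ``path of length $n$'' and the fact that $P_n$ as defined carries $n$ vertices.
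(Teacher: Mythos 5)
Your proposal is correct and follows essentially the same route as the paper: since $P_n$ is a tree, both values drop out immediately from Lemma~\ref{lemma:MAGtrees} and Corollary~\ref{cor:mag+tree} (equivalently, counting sources and sinks over orientations of a path). Your remark about the degenerate case $n=1$ is a fair observation about the statement's phrasing, but it does not affect the argument, which matches the paper's intended one-line derivation.
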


The cycle of length $n$, where $n \geq 3$, is the graph $C_n$ with $V(C_n) = \{1, 2, \ldots, n\}$ and $E(C_n) = \{v_1v_2, \ldots, v_{n - 1}v_n\} \cup \{v_nv_1\}$. Unlike in trees, between any two distinct vertices in a cycle, there are two paths. Hence, an MAG-set of a cycle does not necessarily consist of only sources and sinks. In the next result, we characterize the monitoring arc-geodetic number of cycles. For this purpose, we classify all the possible orientations of a cycle $C_n$ into $\overrightarrow{C_n^0}, \overrightarrow{C_n^1}, \overrightarrow{C_n^2}$ and $\overrightarrow{C_n^3}$, and describe them as follows:

\sloppy
\begin{itemize}
    \item $\overrightarrow{C_n^0}$ is an oriented cycle with no sources and sinks. That is, $A(\overrightarrow{C_n^0}) = \{\overrightarrow{v_1v_2}, \ldots, \overrightarrow{v_{n - 1}v_n}\} \cup \{\overrightarrow{v_n v_1}\}$.
    \item $\overrightarrow{C_n^1}$ is an oriented even cycle with exactly one source and one sink which are at distance $\frac{n}{2}$ from each other. That is, $A(\overrightarrow{C_n^1}) = \{\overrightarrow{v_1v_2}, \ldots, \overrightarrow{v_{\frac{n}{2}}v_{\frac{n}{2} + 1}}\} \cup \{\overrightarrow{v_{\frac{n}{2} + 2}v_{\frac{n}{2} + 1}}, \ldots, \overrightarrow{v_n,v_{n - 1}}\} \cup \{\overrightarrow{v_1 v_n}\}$.
    \item $\overrightarrow{C_n^2}$ is an oriented cycle with exactly one source and one sink which are at a distance $d$ such that $d \neq \frac{n}{2}$. That is, $A(\overrightarrow{C_n^2}) = \{\overrightarrow{v_1v_2}, \ldots, \overrightarrow{v_{d}v_{d + 1}}\} \cup \{\overrightarrow{v_{d + 2}v_{d + 1}}, \ldots, \overrightarrow{v_n,v_{n - 1}}\} \cup \{\overrightarrow{v_1 v_n}\}$, where $d \neq \frac{n}{2}$.
    \item $\overrightarrow{C_n^3}$ is an oriented cycle with more than one source and more than one sink.
\end{itemize}

Figure~\ref{fig:orexcep} gives an illustration of $\overrightarrow{C_n^0}, \overrightarrow{C_n^1}$ and $\overrightarrow{C_n^2}$. Observe that $\overrightarrow{C_n^0}, \overrightarrow{C_n^1}, \overrightarrow{C_n^2}$ and $\overrightarrow{C_n^3}$ describe all possible orientations of the cycle $C_n$.

\begin{figure}
    \centering
        \begin{subfigure}{0.3\textwidth}
    \centering
    \begin{tikzpicture}[scale=0.5]
        \fill[black] (0,0) (0:3cm) circle[radius=3pt];
        \fill[black] (0,0) (15:3cm) circle[radius=3pt];
        \fill[black] (0,0) (30:3cm) circle[radius=3pt];
        \fill[black] (0,0) (60:3cm) circle[radius=3pt];
        \fill[black] (0,0) (90:3cm) circle[radius=3pt];
        \fill[black] (0,0) (120:3cm) circle[radius=3pt];
        \fill[black] (0,0) (150:3cm) circle[radius=3pt];
        \fill[black] (0,0) (165:3cm) circle[radius=3pt];
        \fill[black] (0,0) (180:3cm) circle[radius=3pt];
        \fill[black] (0,0) (195:3cm) circle[radius=3pt];
        \fill[black] (0,0) (210:3cm) circle[radius=3pt];
        \fill[black] (0,0) (240:3cm) circle[radius=3pt];
        \fill[black] (0,0) (270:3cm) circle[radius=3pt];
        \fill[black] (0,0) (300:3cm) circle[radius=3pt];
        \fill[black] (0,0) (330:3cm) circle[radius=3pt];
        \fill[black] (0,0) (345:3cm) circle[radius=3pt];
        \node at (90:3.6) {$v_1$};
        \node at (60:3.6) {$v_2$};
        \node at (120:3.6) {$v_{n}$};
        \node at (-90:3.6) {$v_{k + 1}$};
        \node at (-60:3.6) {$v_k$};
        \node at (-120:3.6) {$v_{k + 2}$};
        \draw[very thick,black,middlearrow={>}] ([shift=(60:3cm)]0,0) arc (60:30:3cm);
        \draw[very thick,black,middlearrow={>}] ([shift=(90:3cm)]0,0) arc (90:60:3cm);
        \draw[very thick,black,middlearrow={>}] ([shift=(120:3cm)]0,0) arc (120:90:3cm);
        \draw[very thick,black,middlearrow={>}] ([shift=(150:3cm)]0,0) arc (150:120:3cm);
        \draw[very thick,black,middlearrow={>}] ([shift=(270:3cm)]0,0) arc (270:240:3cm);
        \draw[very thick,black,middlearrow={>}] ([shift=(240:3cm)]0,0) arc (240:210:3cm);
        \draw[very thick,black,middlearrow={>}] ([shift=(300:3cm)]0,0) arc (300:270:3cm);
        \draw[very thick,black,middlearrow={>}] ([shift=(330:3cm)]0,0) arc (330:300:3cm);
    \end{tikzpicture}
    \caption{$\overrightarrow{C_n^{0}}$}
    \label{fig:first}
    \end{subfigure}
    \begin{subfigure}{0.3\textwidth}
    \centering
    \begin{tikzpicture}[scale=0.5]
        \fill[black] (0,0) (0:3cm) circle[radius=3pt];
        \fill[black] (0,0) (15:3cm) circle[radius=3pt];
        \fill[black] (0,0) (30:3cm) circle[radius=3pt];
        \fill[black] (0,0) (60:3cm) circle[radius=3pt];
        \fill[black] (0,0) (90:3cm) circle[radius=3pt];
        \fill[black] (0,0) (120:3cm) circle[radius=3pt];
        \fill[black] (0,0) (150:3cm) circle[radius=3pt];
        \fill[black] (0,0) (165:3cm) circle[radius=3pt];
        \fill[black] (0,0) (180:3cm) circle[radius=3pt];
        \fill[black] (0,0) (195:3cm) circle[radius=3pt];
        \fill[black] (0,0) (210:3cm) circle[radius=3pt];
        \fill[black] (0,0) (240:3cm) circle[radius=3pt];
        \fill[black] (0,0) (270:3cm) circle[radius=3pt];
        \fill[black] (0,0) (300:3cm) circle[radius=3pt];
        \fill[black] (0,0) (330:3cm) circle[radius=3pt];
        \fill[black] (0,0) (345:3cm) circle[radius=3pt];
        \node at (90:3.6) {$v_1$};
        \node at (60:3.6) {$v_2$};
        \node at (120:3.6) {$v_{n}$};
        \node at (-90:3.6) {$v_{\frac{n}{2} + 1}$};
        \node at (-60:3.6) {$v_{\frac{n}{2}}$};
        \node at (-120:3.6) {$v_{\frac{n}{2} + 2}$};
        \draw[very thick,black,middlearrow={>}] ([shift=(60:3cm)]0,0) arc (60:30:3cm);
        \draw[very thick,black,middlearrow={>}] ([shift=(90:3cm)]0,0) arc (90:60:3cm);
        \draw[very thick,black,middlearrow={>}] ([shift=(90:3cm)]0,0) arc (90:120:3cm);
        \draw[very thick,black,middlearrow={>}] ([shift=(120:3cm)]0,0) arc (120:150:3cm);
        \draw[very thick,black,middlearrow={>}] ([shift=(240:3cm)]0,0) arc (240:270:3cm);
        \draw[very thick,black,middlearrow={>}] ([shift=(210:3cm)]0,0) arc (210:240:3cm);
        \draw[very thick,black,middlearrow={>}] ([shift=(300:3cm)]0,0) arc (300:270:3cm);
        \draw[very thick,black,middlearrow={>}] ([shift=(330:3cm)]0,0) arc (330:300:3cm);
    \end{tikzpicture}
    \caption{$\overrightarrow{C_n^{1}}$}
    \label{fig:second}
    \end{subfigure}
    \begin{subfigure}{0.3\textwidth}
    \centering
    \begin{tikzpicture}[scale=0.5]
        \fill[black] (0,0) (0:3cm) circle[radius=3pt];
        \fill[black] (0,0) (15:3cm) circle[radius=3pt];
        \fill[black] (0,0) (30:3cm) circle[radius=3pt];
        \fill[black] (0,0) (60:3cm) circle[radius=3pt];
        \fill[black] (0,0) (90:3cm) circle[radius=3pt];
        \fill[black] (0,0) (120:3cm) circle[radius=3pt];
        \fill[black] (0,0) (150:3cm) circle[radius=3pt];
        \fill[black] (0,0) (165:3cm) circle[radius=3pt];
        \fill[black] (0,0) (180:3cm) circle[radius=3pt];
        \fill[black] (0,0) (195:3cm) circle[radius=3pt];
        \fill[black] (0,0) (210:3cm) circle[radius=3pt];
        \fill[black] (0,0) (240:3cm) circle[radius=3pt];
        \fill[black] (0,0) (270:3cm) circle[radius=3pt];
        \fill[black] (0,0) (300:3cm) circle[radius=3pt];
        \fill[black] (0,0) (330:3cm) circle[radius=3pt];
        \fill[black] (0,0) (345:3cm) circle[radius=3pt];
        \node at (90:3.6) {$v_1$};
        \node at (60:3.6) {$v_2$};
        \node at (120:3.6) {$v_{n}$};
        \node at (-90:3.6) {$v_{d + 2}$};
        \node at (-60:3.6) {$v_{d + 1}$};
        \node at (-120:3.6) {$v_{d + 3}$};
        \draw[very thick,black,middlearrow={>}] ([shift=(60:3cm)]0,0) arc (60:30:3cm);
        \draw[very thick,black,middlearrow={>}] ([shift=(90:3cm)]0,0) arc (90:60:3cm);
        \draw[very thick,black,middlearrow={>}] ([shift=(90:3cm)]0,0) arc (90:120:3cm);
        \draw[very thick,black,middlearrow={>}] ([shift=(120:3cm)]0,0) arc (120:150:3cm);
        \draw[very thick,black,middlearrow={>}] ([shift=(240:3cm)]0,0) arc (240:270:3cm);
        \draw[very thick,black,middlearrow={>}] ([shift=(210:3cm)]0,0) arc (210:240:3cm);
        \draw[very thick,black,middlearrow={>}] ([shift=(270:3cm)]0,0) arc (270:300:3cm);
        \draw[very thick,black,middlearrow={>}] ([shift=(330:3cm)]0,0) arc (330:300:3cm);
    \end{tikzpicture}
    \caption{$\overrightarrow{C_n^{2}}$}
    \label{fig:third}
    \end{subfigure}
    \caption{The orientations $\overrightarrow{C_n^0}, \overrightarrow{C_n^1}$ and $\overrightarrow{C_n^2}$ of $C_n$.}
    \label{fig:orexcep}
\end{figure}

\begin{proposition}
\label{prop:magcycle} Let $\overrightarrow{C_n^0}$, $\overrightarrow{C_n^1}$, $\overrightarrow{C_n^2}$ and $\overrightarrow{C_n^3}$ be orientations of $C_n$ as described above. Then,
    \begin{enumerate}[(i)]
        \item $mag(\overrightarrow{C_n^0}) = 2$.
        \item $mag(\overrightarrow{C_n^1}) = 4$.
        \item $mag(\overrightarrow{C_n^2}) = 3$.
        \item $mag(\overrightarrow{C_n^3})$ is equal to the number of sources and sinks in $\overrightarrow{C_n^3}$.
    \end{enumerate}
\end{proposition}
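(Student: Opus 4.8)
The plan is to treat the four orientations separately, using one structural fact throughout: between any two vertices $x,y$ of an oriented cycle there are at most two directed paths (one along each of the two arcs of the underlying cycle joining $x$ and $y$), so a pair $\{x,y\}$ monitors an arc $\overrightarrow{a}$ precisely when $\overrightarrow{a}$ lies on a \emph{unique} shortest directed path in one of the two directions. In particular, if there is exactly one directed $x$-to-$y$ path, it monitors all of its arcs, whereas two directed $x$-to-$y$ paths of equal length monitor no arc in that direction. I would also invoke Proposition~\ref{prop:SnS} freely: every source and every sink lies in every MAG-set.

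For part~(i), I would take any two distinct vertices $u,v$ of $\overrightarrow{C_n^0}$; the directed path from $u$ to $v$ and the directed path from $v$ to $u$ are each unique and between them contain all $n$ arcs, so $\{u,v\}$ is an MAG-set and $mag(\overrightarrow{C_n^0})=2$. For part~(iv), the observation is that along an oriented cycle sources and sinks alternate, so the arc set decomposes into maximal directed subpaths, each running from a source $s$ to a sink $t$; for any such $s,t$ there is only one directed $s$-to-$t$ path (the other side of the cycle passes through further sources and sinks), and it therefore monitors every arc of that subpath. Hence the set of sources and sinks is an MAG-set, minimum by Proposition~\ref{prop:SnS}, which gives $mag(\overrightarrow{C_n^3})$ equal to the number of sources and sinks.

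The content is in parts~(ii) and~(iii). Let $s$ and $t$ be the unique source and sink, and let $P_1,P_2$ be the two directed $s$-to-$t$ paths (the two sides of the cycle), so every arc lies on $P_1$ or $P_2$. For part~(iii) I may assume (up to relabeling; since $d\neq n/2$ the two lengths differ) that $|P_1|<|P_2|$. Then $P_1$ is the unique shortest $s$-to-$t$ path, so $\{s,t\}$ already monitors every arc of $P_1$. Now consider the first arc $\overrightarrow{a}$ of $P_2$ (leaving $s$): any directed path through $\overrightarrow{a}$ begins at $s$ and continues along $P_2$, so its other endpoint is a vertex of $P_2$; it cannot be $t$ (that path is $P_2$, which is not shortest), hence it is an internal vertex $w$ of $P_2$, and for such $w$ the path $s\to\cdots\to w$ along $P_2$ is the unique shortest $s$-to-$w$ path. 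Thus $\overrightarrow{a}$ is monitored only by pairs $\{s,w\}$ with $w$ internal to $P_2$; together with $s,t$ this gives $mag(\overrightarrow{C_n^2})\geq 3$. Conversely, for any fixed internal vertex $r$ of $P_2$, the set $\{s,t,r\}$ monitors all arcs of $P_1$ via $\{s,t\}$ and all arcs of $P_2$ via $\{s,r\}$ and $\{r,t\}$ (splitting $P_2$ at $r$), so $mag(\overrightarrow{C_n^2})=3$.

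Part~(ii) runs along the same lines with $|P_1|=|P_2|=n/2$: now $\{s,t\}$ monitors no arc, so the first arcs of \emph{both} $P_1$ and $P_2$ must be handled. As above, the first arc of $P_1$ is monitored only by a pair $\{s,w\}$ with $w$ internal to $P_1$ (the endpoint $t$ is now reached by two equally short paths and so is excluded), and symmetrically for $P_2$; with $s,t$ from Proposition~\ref{prop:SnS} this forces four distinct vertices, so $mag(\overrightarrow{C_n^1})\geq 4$, and picking internal vertices $p$ of $P_1$ and $q$ of $P_2$ one checks that $\{s,t,p,q\}$ monitors all arcs of $P_1$ via $\{s,p\},\{p,t\}$ and all arcs of $P_2$ via $\{s,q\},\{q,t\}$, giving $mag(\overrightarrow{C_n^1})=4$. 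Here one must remember that $n$ is even and $n\geq 3$, hence $n\geq 4$, so $P_1$ and $P_2$ do have internal vertices. The hard part is precisely this lower-bound reasoning in~(ii) and~(iii): pinning down exactly which pairs can monitor the arc leaving the source on the longer side, which is where the distinction between $d=n/2$ and $d\neq n/2$ becomes the difference between needing four and needing three vertices.
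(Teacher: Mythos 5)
Your proof is correct and follows essentially the same approach as the paper's: case analysis over the four orientations, forcing sources and sinks via Proposition~\ref{prop:SnS}, and tracking which arcs lie on unique shortest directed paths along each side of the cycle. If anything, your lower-bound arguments for (ii) and (iii) — pinning down that the arc leaving the source on a given side can only be monitored by a pair consisting of the source and an internal vertex of that side — are more explicit than the paper's ``clearly, these three vertices are not enough.''
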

\begin{proof}
    \begin{enumerate}[(i)]
        \item As $\overrightarrow{C_n^0}$ has no sources and sinks, we can claim that $\{v_i, v_j\}$ is a minimal MAG-set of $\overrightarrow{C_n^0}$, where $v_i$ and $v_j$ are arbitrary distinct vertices of $\overrightarrow{C_n^0}$. This is straightforward to see as the arcs $\{\overrightarrow{v_i v_{i + 1}}, \ldots, \overrightarrow{v_{j - 1} v_j}\}$ are monitored by the unique shortest path from $v_i$ to $v_j$ and the remaining arcs $\{\overrightarrow{v_j v_{j + 1}}, \ldots, \overrightarrow{v_{i - 1} v_i}\}$ of $\overrightarrow{C_n^0}$ are monitored by the unique shortest path from $v_j$ to $v_i$. Hence, $mag(\overrightarrow{C_n^0}) = 2$.
        \item As $\overrightarrow{C_n^1}$ has a source $v_1$ and a sink $v_{\frac{n}{2}}$, by Proposition~\ref{prop:SnS}, $v_1$ and $v_{\frac{n}{2}}$ are part of every MAG-set. However, there are two shortest paths between these two vertices, and hence, this pair of vertices does not monitor any arc in $\overrightarrow{C_n^1}$. Without loss of generality, we can first choose an arbitrary vertex $v_i, 1 < i < \frac{n}{2}$ to be a part of the MAG-set. Now, $v_1$ and $v_i$ monitor all arcs $\{\overrightarrow{v_1 v_2}, \ldots, \overrightarrow{v_{i - 1}v_i}\}$, and $v_i$ and $v_{\frac{n}{2}}$ monitor all arcs $\{\overrightarrow{v_i v_{i + 1}}, \ldots, \overrightarrow{v_{\frac{n}{2} - 1} v_{\frac{n}{2}}}\}$. Clearly, these three vertices are not enough to monitor all arcs of $\overrightarrow{C_n^1}$. Hence, we must choose an arbitrary vertex $v_j, \frac{n}{2} < j \leq n$ to also be a part of the MAG-set. By a similar argument as above, we can see that these four vertices monitor all arcs of $\overrightarrow{C_n^1}$. Hence, $mag(\overrightarrow{C_n^1}) = 4$.
        
        \item As $\overrightarrow{C_n^2}$ has a source $v_1$ and a sink $v_{d+1}$, by Proposition~\ref{prop:SnS}, $v_1$ and $v_{d + 1}$ are part of every MAG-set. Also $v_1 \ldots v_{d + 1}$ is the unique shortest path from $v_1$ to $v_{d + 1}$ and hence all arcs in this path are monitored by these two vertices. To monitor the remaining arcs of the cycle, we choose an arbitrary vertex $v_i, d + 1 < i \leq n$ to be a part of the MAG-set. Clearly, $v_1$ and $v_i$ monitor the arcs $\{\overrightarrow{v_1 v_n}\} \cup \{v_n v_{n - 1}, \ldots, \overrightarrow{v_{i + 1}v_i}\}$, and $v_i$ and $v_{d + 1}$ monitor the arcs $\{\overrightarrow{v_i v_{i - 1}}, \ldots, \overrightarrow{v_{d + 2} v_{d + 1}}\}$. Hence, $mag(\overrightarrow{C_n^2}) = 3$.
        
        \item As $\overrightarrow{C_n^3}$ has strictly more than one source and more than one sink, every arc lies on a unique shortest path between a source and a sink. Hence, the sources and sinks of $\overrightarrow{C_n^3}$ are necessary and sufficient to monitor all arcs of the graph.
    \end{enumerate}\end{proof}

The following corollary is a  direct observation from Proposition~\ref{prop:magcycle}
\begin{corollary}
\label{cor:cycle}
    For $n \geq 3$, $mag^-(C_n) = 2$ and $mag^+(C_3) = 3$. For $n \geq 4$, if $n$ is odd, then $mag^+(C_n) = n - 1$, and if $n$ is even, then $mag^+(C_n) = n$. 
\end{corollary}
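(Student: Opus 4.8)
The plan is to extract every value in the spectrum $S(C_n)$ directly from Proposition~\ref{prop:magcycle}: for each of the four orientation types $\overrightarrow{C_n^0}, \overrightarrow{C_n^1}, \overrightarrow{C_n^2}, \overrightarrow{C_n^3}$ I would record both the value of $mag$ it produces and exactly for which $n$ that type actually occurs as an orientation of $C_n$. Since every orientation of $C_n$ is of precisely one of these four types, this pins down $S(C_n)$, and then $mag^-(C_n)=\min S(C_n)$ and $mag^+(C_n)=\max S(C_n)$ are read off.

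For the lower bound, I would note that $\overrightarrow{C_n^0}$ is a legitimate orientation for every $n\ge 3$ and has $mag=2$ by Proposition~\ref{prop:magcycle}(i); since $mag(\overrightarrow{G})\ge 2$ for any oriented graph with at least one arc, this gives $mag^-(C_n)=2$ for all $n\ge 3$.

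For the upper bound the work is to compare the four candidate values. Type $\overrightarrow{C_n^0}$ contributes $2$; type $\overrightarrow{C_n^2}$ occurs for every $n\ge 3$ (take a source and a sink at any distance $d\neq n/2$, e.g.\ $d=1$) and contributes $3$; type $\overrightarrow{C_n^1}$ occurs only when $n$ is even, since it requires a source and a sink at distance $n/2$, and contributes $4$; and type $\overrightarrow{C_n^3}$ contributes its number of sources and sinks. The key step is to maximize the number of sources and sinks over orientations of $C_n$: going cyclically around $C_n$, sources and sinks alternate, so there are equally many of each, say $k$ of each, and these $2k$ vertices cut the cycle into $2k$ directed subpaths of length at least $1$, whence $2k\le n$; conversely $k=\lfloor n/2\rfloor$ is realized by making the orientation alternate as much as possible. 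Hence an orientation of type $\overrightarrow{C_n^3}$ (which needs $k\ge 2$) exists exactly when $\lfloor n/2\rfloor\ge 2$, i.e.\ $n\ge 4$, and the largest number of sources and sinks it can carry is $2\lfloor n/2\rfloor$, which equals $n$ for even $n$ and $n-1$ for odd $n$.

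Combining: for $n=3$ only the types $\overrightarrow{C_n^0}$ and $\overrightarrow{C_n^2}$ occur, so $mag^+(C_3)=\max\{2,3\}=3$; for even $n\ge 4$ all four types occur and $mag^+(C_n)=\max\{2,3,4,n\}=n$; and for odd $n\ge 5$ the types $\overrightarrow{C_n^0},\overrightarrow{C_n^2},\overrightarrow{C_n^3}$ occur and $mag^+(C_n)=\max\{2,3,n-1\}=n-1$ because $n-1\ge 4$. I do not expect any real obstacle beyond the alternation argument bounding the number of sources and sinks; the only care needed is to handle the small case $n=3$ (where type $\overrightarrow{C_n^3}$ is impossible) and the parity split correctly.
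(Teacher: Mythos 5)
Your proposal is correct and takes essentially the same route as the paper: the lower bound comes from the orientation $\overrightarrow{C_n^0}$, the upper bound from maximizing the number of sources and sinks (at most $2\lfloor n/2\rfloor$ by the alternation argument), and $C_3$ is handled separately via the transitive triangle, which is of type $\overrightarrow{C_3^2}$ and gives value $3$. Your write-up is somewhat more explicit than the paper's---you list which of the four orientation types exist for each $n$ and compare all the resulting values---but the underlying argument is the same.
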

\begin{proof}
    If the cycle $C_n$ is oriented as $\overrightarrow{C_n^0}$, then $mag(\overrightarrow{C_n^0}) = 2$ implies $mag^-(C_n) = 2$. To obtain the orientation with maximum possible monitoring arc-geodetic number, we maximise the number of sources and sinks in the orientation of $C_n$. This maximum number is $n - 1$ for odd cycles and $n$ for even cycles. The cycle $C_3$ is an exception since the transitive triangle, which has the maximum number of sources and sinks, has monitoring arc-geodetic number equal to 3.
\end{proof}

This result can be used to get a trivial bound for $mag^+(G)$ in terms of its \emph{girth}, that is, in terms of the length of a shortest cycle in $G$.

\begin{corollary}
    Let $G$ be a graph with girth $g \geq 3$. If $g$ is odd, then, $mag^+(G) \geq g - 1$ and if $g$ is even, then $mag^+(G) \geq g$.
\end{corollary}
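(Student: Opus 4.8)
The plan is to exhibit, for each parity of $g$, a single orientation of $G$ in which many vertices become sources or sinks, so that Proposition~\ref{prop:SnS} forces all of them into every MAG-set. Fix a shortest cycle $C = u_1 u_2 \cdots u_g u_1$ of $G$. The key preliminary observation is that $C$ is chordless: a chord would split $C$ into two paths of lengths $\ell_1, \ell_2 \ge 2$ with $\ell_1 + \ell_2 = g$, and one of these paths together with the chord would form a cycle of length $\ell_1 + 1 \le g - 1 < g$, contradicting the girth. This chordlessness is exactly what makes the construction below conflict-free.

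Suppose first that $g$ is even. Properly $2$-colour $V(C)$ with colour classes $X$ and $Y$. Orient $G$ by the rule: every edge incident to a vertex of $X$ is oriented away from that vertex; every edge incident to a vertex of $Y$ is oriented towards that vertex; every edge with both endpoints outside $V(C)$ is oriented arbitrarily. One then checks this is well defined: an edge with both endpoints in $V(C)$ is an edge of $C$ (by chordlessness), and since its endpoints lie in different colour classes the two prescriptions coincide; an edge with exactly one endpoint in $V(C)$ receives a prescription only from that endpoint; the remaining edges are unconstrained. In the resulting orientation $\overrightarrow{G}$, every vertex of $X$ is a source and every vertex of $Y$ is a sink, so by Proposition~\ref{prop:SnS} all $g$ vertices of $C$ belong to every MAG-set, whence $mag(\overrightarrow{G}) \ge g$ and therefore $mag^+(G) \ge g$.

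When $g$ is odd, I would run the same argument on the chordless sub-path $P = u_1 u_2 \cdots u_{g-1}$, which has an even number $g-1$ of vertices. Properly $2$-colour $V(P)$, declare one class to be sources and the other sinks, orient all edges incident to $V(P)$ accordingly (note that $u_1$ and $u_{g-1}$ then receive opposite roles, since their distance $g-2$ along $P$ is odd), orient the two remaining cycle edges $u_{g-1}u_g$ and $u_g u_1$ as forced by this rule — which leaves $u_g$ neither a source nor a sink and creates no conflict — and orient all other edges arbitrarily. The same case check (the only edges inside $\{u_1,\dots,u_{g-1}\}$ are edges of $P$, and an endpoint outside $V(P)$ imposes no constraint) shows the orientation is well defined, and then $u_1, \dots, u_{g-1}$ are all sources or sinks, so Proposition~\ref{prop:SnS} gives $mag^+(G) \ge g - 1$.

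The only point requiring care — and the one I would present as the crux — is well-definedness of the orientation: the rule ``push everything out of (resp.\ into) a cycle vertex'' could in principle demand opposite orientations of a single edge, but this is ruled out precisely by the chordlessness of a girth cycle, so that inside the chosen cycle (resp.\ path) the only relevant edges are its own edges, where a proper $2$-colouring guarantees consistency, while every other edge has at most one constrained endpoint. Everything else is routine bookkeeping.
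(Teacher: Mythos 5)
Your proof is correct and follows essentially the same route as the paper's: select a shortest (hence chordless) cycle, orient its vertices as alternating sources and sinks (omitting one vertex when $g$ is odd), and invoke Proposition~\ref{prop:SnS}. The only difference is that you spell out the well-definedness of the orientation in more detail than the paper, which simply notes that a chord would contradict the girth.
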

\begin{proof}
    An orientation $\overrightarrow{G}$ of $G$ with $mag(\overrightarrow{G}) \geq g - 1$ can be obtained by selecting an arbitrary induced cycle $C$ of length $g$, then orienting the arcs such that the vertices of this cycle are alternating sources and sinks (except at most one vertex if the cycle is of odd length) in $\overrightarrow{G}$. This is always possible, for if not, then there exist two vertices of the cycle which are non-adjacent in the induced cycle but are adjacent in $G$. This however, contradicts the girth of the graph $G$. Thus, the number of sources and sinks in $\overrightarrow{G}$ by Corollary~\ref{cor:cycle} is at least $g - 1$ if $g$ is odd, and is at least $g$ if $g$ is even. Proposition~\ref{prop:SnS} completes the proof.
\end{proof}

\section{MAG-extremal oriented graphs}
\label{sec3}
In this section, we characterize the graphs for which the minimum MAG-set is the entire set of vertices, that is, $mag(\overrightarrow{G}) = n$. We shall call such graphs \textit{MAG-extremal graphs}, by taking inspiration from the \textit{MEG-extremal} graphs which were first defined in~\cite{foucaud2024monitoring}. They characterized all MEG-extremal graphs by the following result.

\begin{theorem}[\cite{foucaud2024monitoring}, Corollary 4.2]
    Let $G$ be a graph of order $n$. Then, $meg(G) = n$ if and only if for every $v \in V(G)$, there exists $u \in N(v)$ such that any induced $2$-path $uvx$ is part of a $4$-cycle.
\end{theorem}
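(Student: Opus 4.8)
The plan is to deduce the global equality $meg(G)=n$ from a local condition at each vertex. Since $V(G)$ is always an MEG-set and every superset of an MEG-set is an MEG-set, $meg(G)=n$ holds if and only if, for every $v\in V(G)$, the set $V(G)\setminus\{v\}$ is not an MEG-set. The first observation is that any edge $xy$ with $v\notin\{x,y\}$ is already monitored by its own endpoints, since the edge is the unique shortest $x$--$y$ path. Hence $V(G)\setminus\{v\}$ fails to be an MEG-set precisely when there is an edge \emph{incident} to $v$, say $uv$ with $u\in N(v)$, that no pair of vertices avoiding $v$ monitors. So the whole theorem reduces to the following local claim: for $u\in N(v)$, the edge $uv$ is monitored by some pair of vertices of $V(G)\setminus\{v\}$ if and only if there exists an induced $2$-path $uvx$ lying in no $4$-cycle of $G$.

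The ``if'' direction of this claim is easy: if $uvx$ is an induced $2$-path contained in no $4$-cycle, then $u$ and $x$ have $v$ as their only common neighbour, so $d(u,x)=2$ and $uvx$ is the \emph{unique} shortest $u$--$x$ path; hence $\{u,x\}\subseteq V(G)\setminus\{v\}$ monitors $uv$. For the ``only if'' direction I would argue by contraposition: assume every induced $2$-path $uvx$ lies in a $4$-cycle, take any pair $\{a,b\}$ with $a,b\neq v$, and produce a shortest $a$--$b$ path avoiding $uv$. One first checks that all shortest $a$--$b$ paths traverse $uv$ in a common direction, say $\ldots u\,v\ldots$; writing $p=d(a,u)$, $q=d(v,b)$ this gives $d(a,v)=p+1$, $d(u,b)=q+1$, $d(a,b)=p+q+1$, and $q\ge 1$. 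Let $x_1$ be the first internal vertex of a shortest $v$--$b$ path; then $x_1\in N(v)\setminus\{u\}$, and since $d(a,x_1)=p$ would already exhibit a $uv$-avoiding shortest $a$--$b$ path while $d(a,x_1)=p+1$ would force $d(a,b)\le p+q$, we get $d(a,x_1)=p+2$, hence $u\not\sim x_1$. Thus $uvx_1$ is an induced $2$-path, so by hypothesis some $w\neq v$ is adjacent to both $u$ and $x_1$; a quick computation gives $d(a,w)=p+1$ and $d(w,b)=q$, so concatenating a shortest $a$--$w$ path with a shortest $w$--$b$ path yields a shortest $a$--$b$ path which, for length reasons, cannot use the edge $uv$ --- the desired contradiction. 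Assembling the two directions with the reductions above gives the theorem.

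The hard part will be the ``only if'' direction of the local claim, and within it the verification that the rerouted path through the $4$-cycle apex $w$ really does avoid $uv$: this rests on the distance bookkeeping $d(a,u)=p<p+1=d(a,v)$ (and its mirror image $d(v,b)=q<q+1=d(u,b)$), which forces any subpath through $uv$ to be too long. One also has to treat the degenerate cases cleanly --- notably $a=u$ (i.e. $p=0$), and vertices such as leaves through which there is no induced $2$-path at all, where the stated local condition holds vacuously and $v$ is indeed forced. The remaining ingredients (edges not incident to $v$ being self-monitored, and the translation between ``$v$ lies in every MEG-set'' and ``some incident edge is unmonitorable from outside $v$'') are routine.
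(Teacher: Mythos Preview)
The paper does not give its own proof of this statement: it is quoted as Corollary~4.2 of \cite{foucaud2024monitoring} and used as a black box to motivate the analogous oriented result (Theorem~\ref{theorem:mageqn}). So there is nothing in the present paper to compare your argument against.

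That said, your proposal is a correct self-contained proof. The reduction to the local question ``is some edge incident to $v$ unmonitorable from $V(G)\setminus\{v\}$?'' is exactly right, and your treatment of the hard direction is sound: from $d(a,v)=p+1$ and $d(v,b)=q$ you correctly force $d(a,x_1)=p+2$, hence $u\not\sim x_1$, and the reroute through the $4$-cycle apex $w$ gives a shortest $a$--$b$ walk of length $p+q+1$ on which $v$ cannot appear (since $d(a,v)=p+1$ and $d(v,b)=q$ would force $w=v$ on either segment), so the edge $uv$ is absent. The degenerate cases you flag ($p=0$, or $v$ a leaf so that $x_1$ would have to equal $u$) collapse immediately to contradictions within the same framework, so no separate treatment is really needed. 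Structurally your argument is the natural undirected counterpart of the paper's proof of Theorem~\ref{theorem:mageqn}, which handles the oriented analogue by the same ``replace the middle vertex by a detour of length at most $2$'' idea.
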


The analogous characterization for MAG-extremal graphs is as follows.

\begin{theorem}\label{theorem:mageqn}
    An oriented graph $\overrightarrow{G}$ is an MAG-extremal graph if and only if every vertex $v \in V(\overrightarrow{G})$ satisfies one of the following conditions:
    \begin{enumerate}[(i)]
        \item $v$ is either a source or a sink,
        \item there exists a vertex $u \in N^-(v)$ such that for every vertex $w \in N^+(v)$, there is a directed path of length at most 2 from $u$ to $w$ not visiting $v$,
        \item there exists a vertex $w \in N^+(v)$ such that for every vertex $u \in N^-(v)$, there is a directed path of length at most 2 from $u$ to $w$ not visiting $v$.
    \end{enumerate}
\end{theorem}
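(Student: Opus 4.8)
The goal is a characterization of when $mag(\overrightarrow{G}) = n$, i.e.\ when every vertex lies in every MAG-set. So the natural strategy is to prove the contrapositive-style equivalence at the level of a single vertex: *$v$ is forced into every MAG-set if and only if $v$ satisfies (i), (ii) or (iii).* Since $mag(\overrightarrow{G})=n$ exactly when every vertex is forced, the theorem follows by quantifying over all vertices. This reduces everything to a local statement about one vertex $v$.

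Let me think about what "forced" means. A vertex $v$ that is not a source or sink has at least one in-neighbor and one out-neighbor. If $v \notin M$ for some MAG-set $M$, then every arc incident to $v$ — in particular every arc $\overrightarrow{vw}$ with $w \in N^+(v)$ and every arc $\overrightarrow{uv}$ with $u \in N^-(v)$ — must be monitored by a pair in $M$ via a shortest path passing *through* $v$ as an internal vertex. The key observation (which I'd isolate as a small lemma, in the spirit of the twins proposition already proved) is: an internal vertex $v$ of a shortest $x$–$y$ path is "reroutable" — i.e.\ $v$ is *not* forced — essentially when the two arcs of that path at $v$, say $\overrightarrow{uv}$ and $\overrightarrow{vw}$, can be bypassed by an alternate $u$-to-$w$ route of length $\le 2$ avoiding $v$; such a bypass produces a second shortest path not using $\overrightarrow{uv}$ or $\overrightarrow{vw}$, so the pair fails to monitor those arcs. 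Conversely, to monitor *all* arcs at $v$ without $v \in M$, one needs a single in-neighbor $u$ through which all the in-to-out traffic can be funneled, or a single out-neighbor $w$ likewise — this is exactly where conditions (ii) and (iii) come from. I would formalize this by noting: if $v\notin M$, pick an out-neighbor $w$; the arc $\overrightarrow{vw}$ is monitored by some pair whose shortest path enters $v$ from some $u\in N^-(v)$; then *that same $u$* must handle every out-neighbor $w'$ of $v$ (otherwise the arc $\overrightarrow{vw'}$ needs a path through $v$ entering from a different in-neighbor $u'$, and one shows the two monitored shortest paths can be spliced to contradict monitoring), and "handling" $w'$ forces a $u$-to-$w'$ directed path of length $\le 2$ avoiding $v$ — which is (ii). The symmetric argument with the roles of in/out swapped gives (iii).

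For the converse direction (if $v$ satisfies (i), (ii), or (iii) then $v$ is forced into every MAG-set), case (i) is immediate from Proposition~\ref{prop:SnS}. For (ii): suppose $v\notin M$ and fix the witness $u\in N^-(v)$. Take any out-neighbor $w$ of $v$ and any pair $a,b\in M$ monitoring $\overrightarrow{vw}$ via a shortest path $P$ through $v$; its predecessor of $v$ on $P$ is some in-neighbor $u'$. Using the length-$\le 2$ bypass from $u$ to $w$ avoiding $v$, together with a shortest route from $a$ to $u$ (or from $u'$ back toward $a$), one constructs a second shortest $a$–$b$ path avoiding the arc $\overrightarrow{vw}$, contradicting that $a,b$ monitor it; hence $v$ must be in $M$. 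The case (iii) is symmetric. The main obstacle I anticipate is the length/shortest-path bookkeeping in this splicing argument: one must verify that replacing the $u$–$v$–$w$ (or $u'$–$v$–$w$) portion by the $\le 2$-length bypass does not create a *strictly shorter* path (which would make $P$ itself non-shortest, a different but also usable contradiction) and does not inadvertently revisit vertices in a way that matters — since we only care about arc membership on shortest paths, revisits are not fatal, but the length comparison (bypass length $\le 2$ versus the two arcs it replaces, also length $2$) must be handled carefully, including the degenerate cases where $w = b$ or $u = a$ or the bypass has length $1$. I would treat the "length $1$ bypass" and "length $2$ bypass" cases separately and dispose of the boundary cases ($w=b$, $u'=a$, etc.) explicitly. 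Once the single-vertex equivalence is established with these case splits, assembling the global statement is routine.
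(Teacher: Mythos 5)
Your overall frame is the right one and matches the paper's: reduce to the local equivalence ``$v$ lies in every MAG-set if and only if $v$ satisfies (i), (ii) or (iii)'', with (i) handled by Proposition~\ref{prop:SnS} and the other cases driven by the observation that a directed bypass of length at most $2$ from $u$ to $w$ avoiding $v$ destroys any monitoring that routes through $u,v,w$. However, both implications are argued incorrectly as written. For ``(ii) implies $v$ is forced'' you pick an arbitrary out-arc $\overrightarrow{vw}$, let $u'$ be the predecessor of $v$ on a monitoring shortest path $P=a\ldots u'vw\ldots b$, and then try to splice in the bypass anchored at the \emph{witness} $u$ of (ii). This cannot work: nothing connects the segment $a\ldots u'$ to $u$, and in fact $\overrightarrow{vw}$ may genuinely be monitored by a pair avoiding $v$ even when (ii) holds. (Take $N^-(v)=\{u,u'\}$, $N^+(v)=\{w\}$, with the extra arc $\overrightarrow{uw}$ and $u'$ non-adjacent to $w$: condition (ii) holds with witness $u$, yet $u'$ and $w$ monitor $\overrightarrow{vw}$ via the unique shortest path $u'vw$.) The arc that cannot be monitored without $v$ is $\overrightarrow{uv}$ itself: since $v\notin M$, $v$ is internal on any monitoring shortest path through $\overrightarrow{uv}$, which therefore continues $u,v,w$ for some $w\in N^+(v)$, and the bypass from $u$ to that $w$ yields an equally short or shorter $a$--$b$ route avoiding $\overrightarrow{uv}$ --- a contradiction either way. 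That is the paper's argument, and it is the correct target.

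The other direction is inverted. You assert that if $v\notin M$ then the monitoring of each $\overrightarrow{vw'}$ through $u$ ``forces a $u$-to-$w'$ directed path of length at most $2$ avoiding $v$'', i.e., condition (ii). This is exactly backwards: for $u$ and $w'$ to monitor $\overrightarrow{vw'}$ one needs the \emph{absence} of such a bypass, since a bypass makes $uvw'$ no longer the unique shortest $u$--$w'$ path. (Were your claim correct, it would combine with ``(ii) implies forced'' to show that every vertex that is not a source or a sink is forced, hence that every oriented graph is MAG-extremal, contradicting $mag(\overrightarrow{C_n^0})=2$.) The correct route to ``forced implies (i), (ii) or (iii)'' is by contraposition: if $v$ fails all three conditions, then for every $u\in N^-(v)$ there is a $w\in N^+(v)$ with \emph{no} bypass of length at most $2$ avoiding $v$, so $uvw$ is the unique shortest $u$--$w$ path and $u,w$ monitor $\overrightarrow{uv}$; symmetrically every out-arc $\overrightarrow{vw}$ is monitored; all arcs not incident to $v$ are monitored by their own endpoints; hence $V(\overrightarrow{G})\setminus\{v\}$ is an MAG-set and $v$ is not forced. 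This explicit construction is what your proposal is missing.
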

\begin{proof}
    To prove necessity, let us assume that for every vertex $v$, (i) (ii) or (iii) holds, but that, by contradiction, $mag(\overrightarrow{G}) < n$. Hence, there exists a vertex $v \in V(\overrightarrow{G})$ which is not in some MAG-set $M$. By Proposition~\ref{prop:SnS}, $v$ cannot be a source or sink. Suppose that there exists a vertex $u \in N^-(v)$ that satisfies condition (ii) of the hypothesis. Let the arc $\overrightarrow{uv} \in A(\overrightarrow{G})$ be monitored by some distinct vertices $a, b$ in $M$ with $\overrightarrow{P} = a, \ldots u, v, w, \ldots, b$ being a shortest path from $a$ to $b$. But by condition (ii), there exists another path from $a$ to $b$ which is of equal or shorter (by one unit) length than $\overrightarrow{P}$, which does not pass through $v$, creating a contradiction to the fact that $\overrightarrow{uv}$ is monitored by $a$ and $b$. Similarly, if we suppose that there exists a vertex $w \in N^+(v)$ that satisfies condition (iii) of the hypothesis, then a contradiction arises along similar lines.

    We prove sufficiency by proving its contrapositive. Suppose there exists a vertex $v$ of $\overrightarrow{G}$ that does not satisfy any of (i) (ii), nor (iii). Thus, $v$ is neither a source nor a sink, so, $v$ has both incoming and outgoing arcs. Moreover, for every vertex $u \in N^-(v)$, there exists a vertex $w \in N^+(v)$, such that $u$ and $w$ are non-adjacent and every directed path of length 2 from $u$ to $w$ passes through $v$. We claim that $M = V(\overrightarrow{G}) \setminus \{v\}$ is an MAG-set of $\overrightarrow{G}$. To see this, observe that every arc $\overrightarrow{uv} \in A(\overrightarrow{G})$, is monitored by $u$ and $w$, since $uvw$ is a shortest path from $u$ to $w$, with our assumption ensuring that any other shortest path from $u$ to $w$ passes through $v$. Similarly, the assumption that for every vertex $w \in N^+(v)$, there exists a vertex $u \in N^-(v)$, such that $u$ and $w$ are non-adjacent and every directed path of length 2 from $u$ to $w$ passes through $v$, ensures that all arcs $\overrightarrow{vw} \in A(\overrightarrow{G})$ are monitored by $u$ and $w$. Hence, $M$ is an MAG-set of $\overrightarrow{G}$, that is, $mag(\overrightarrow{G}) < n$, as required. 
\end{proof}

We have seen in Proposition~\ref{prop:bipartite} that bipartite graphs can be oriented as MAG-extremal graphs. Some other examples of graph families that can be oriented to be MAG-extremal are comparability graphs or transitively orientable graphs, and complete graphs. We elaborate on the monitoring arc geodetic number of complete graphs in Section~\ref{sec:tournaments}. On the other hand, we can conclude from Theorem~\ref{theorem:mageqn} that 2-connected non-bipartite graphs of girth at least 5 cannot be oriented to be MAG-extremal. It would be interesting to find out more examples of families that can or cannot be oriented to be MAG-extremal.

\subsection{Relation between $meg$ and $mag^+$ of an undirected graph}

Dev, Dey, Foucaud, Krishna and Ramasubramony Sulochana~\cite{foucaud2023monitoring} gave the following result for the monitoring edge-geodetic number of cycles.

\begin{theorem}[{\cite[Theorem 3.4]{foucaud2023monitoring}}]
    Given an $n$-cycle graph $C_n$, for $n = 3$ and $n \geq 5$, $meg(C_n) = 3$. Moreover, $meg(C_4) = 4$.
\end{theorem}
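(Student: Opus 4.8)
The plan is to work in $C_n$ with vertices $v_1,\dots,v_n$ in cyclic order and first isolate exactly which pairs of vertices can monitor a given edge. The key observation I would record is: a pair $\{x,y\}$ of vertices of $C_n$ monitors an edge $e$ if and only if $d(x,y)<n/2$ and $e$ lies on the shorter of the two arcs joining $x$ and $y$. Indeed, if $d(x,y)<n/2$ that arc is the unique shortest $x$--$y$ path, so the edges it monitors are exactly its own; if $d(x,y)=n/2$ (possible only for even $n$), the two arcs are both shortest and are edge-disjoint, so $\{x,y\}$ monitors no edge; and $d(x,y)>n/2$ is impossible. Note also that $V(C_n)$ is always a MEG-set, so $meg(C_n)\le n$.

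For the small cases I would show every MEG-set equals $V(C_n)$. Fix an edge $e=v_iv_{i+1}$. When $n\le 4$, any pair at distance $1$ other than $\{v_i,v_{i+1}\}$ has its unique shortest path equal to a single \emph{different} edge, hence does not cover $e$; and the only pairs at distance $\ge 2$ are the antipodal pairs of $C_4$, which monitor nothing by the observation above. Thus the only pair monitoring $e$ is $\{v_i,v_{i+1}\}$ itself, forcing both endpoints into $M$; ranging over all edges gives $M=V(C_n)$, so $meg(C_3)=3$ and $meg(C_4)=4$.

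For $n\ge 5$ I would prove the two inequalities separately. Lower bound $meg(C_n)\ge 3$: a single vertex monitors nothing, and a pair $\{x,y\}$ monitors at most $d(x,y)\le\lfloor n/2\rfloor -1 < n$ edges when $d(x,y)<n/2$, and $0$ edges when $d(x,y)=n/2$, so no two-element set covers all $n$ edges of $C_n$. Upper bound $meg(C_n)\le 3$: pick three vertices $x,y,z$ splitting the cycle into arcs of lengths $a,b,c$ with $a+b+c=n$; writing $n=3q+r$ with $r\in\{0,1,2\}$, take $r$ of the lengths equal to $q+1$ and the rest equal to $q$, so each length lies in $\{\lfloor n/3\rfloor,\lceil n/3\rceil\}$, is positive (as $n\ge 5$), and is $<n/2$ (the arithmetic fact $\lceil n/3\rceil<n/2$ holds precisely for $n\ge 5$). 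Then each arc is the unique shortest path between its two endpoints among $\{x,y,z\}$, so every edge of $C_n$ — lying on exactly one of the three arcs — is monitored, and $\{x,y,z\}$ is a MEG-set of size $3$. Hence $meg(C_n)=3$.

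The only steps needing any care are the routine verifications that the two arcs between antipodal vertices of an even cycle are edge-disjoint and that $\lceil n/3\rceil < n/2$ for $n\ge 5$; the substantive content is the monitoring characterization in the first paragraph, after which everything reduces to bookkeeping, so I do not anticipate a real obstacle.
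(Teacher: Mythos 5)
This statement is imported from \cite{foucaud2023monitoring} (their Theorem~3.4) and the present paper gives no proof of it, so there is nothing in-paper to compare against; your argument must stand on its own, and it does. The characterization of which pairs monitor which edges of $C_n$ is correct (shorter arc is the unique geodesic when $d(x,y)<n/2$; antipodal pairs in even cycles monitor nothing since the two geodesics are edge-disjoint), the forced-endpoints argument for $n\le 4$ is correct, and the three near-equal arcs of length at most $\lceil n/3\rceil<n/2$ for $n\ge 5$ give a valid MEG-set of size $3$, with the lower bound $meg(C_n)\ge 3$ following since any single pair monitors strictly fewer than $n$ edges. One small inaccuracy: for odd $n$ a pair with $d(x,y)<n/2$ can have $d(x,y)=\lfloor n/2\rfloor$, not $\lfloor n/2\rfloor-1$ as you wrote; this does not affect the conclusion, since $\lfloor n/2\rfloor<n$ still rules out a two-element MEG-set.
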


Proposition~\ref{prop:magcycle} shows us that for the family of cycles $C_n$, we can have for some orientations $\overrightarrow{C_n}$, $mag(\overrightarrow{C_n}) < meg(C_n)$. Thus, it is natural to ask if  there also exist graphs $G$ such that for any orientation $\overrightarrow{G}$ of $G$, we have $mag(\overrightarrow{G}) > meg(G)$. We answer this question positively through the following 
construction. 

\begin{construction}
\label{cons:mag<mag-}
    Let us  start by taking a path $x'xyy'$ on four
    vertices. Then, take $j$ vertices
    $z_1, z_2, \ldots, z_j$ and make each of them adjacent to $x$ and $y$, 
    for $j \geq 1$. 
    Furthermore, for each $i$, make $z_i$ adjacent to a new vertex $z'_i$. The so-obtained graph is called $G_j$. This graph is illustrated in Figure~\ref{fig:mag<mag-}.
\end{construction}

\begin{figure}
    \centering
    \begin{tikzpicture}[inner sep=0.7mm, scale = 0.6]
    \node[draw, circle, line width=1pt, fill=black] (y') at (-2, 1) [label=below: $y'$]{};
    \node[draw, circle, line width=1pt, fill=black] (y) at (0, 1) [label=below: $y$]{};
    \node[draw, circle, line width=1pt, fill=black] (x) at (0, 11) [label=above: $x$]{};
    \node[draw, circle, line width=1pt, fill=black] (x') at (-2, 11) [label=below: $x'$]{};
    \node[draw, circle, line width=1pt, fill=black] (z1) at (3, 6) [label=below: $z_1$]{};
    \node[draw, circle, line width=1pt, fill=black] (z1') at (5, 6) [label=below: $z_1'$]{};
    \node[draw, circle, line width=1pt, fill=black] (z2) at (7, 6) [label=below: $z_2$]{};
    \node[draw, circle, line width=1pt, fill=black] (z2') at (9, 6) [label=below: $z_2'$]{};
    \node[draw, circle, line width=1pt, fill=black] (zj) at (13, 6) [label=below: $z_j$]{};
    \node[draw, circle, line width=1pt, fill=black] (zj') at (15, 6) [label=below: $z_j'$]{};
    \node[] (xn1) at (10, 6){\Huge.};
    \node[] (xn1) at (11, 6){\Huge.};
    \node[] (xn1) at (12, 6){\Huge.};
    \draw[black] (x) -- (x');
    \draw[black] (x) -- (y);
    \draw[black] (y) -- (y');
    \draw[black] (y) -- (z1);
    \draw[black] (z1) -- (z1');
    \draw[black] (x) -- (z2);
    \draw[black] (y) -- (z2);
    \draw[black] (z2) -- (z2');
    \draw[black] (x) -- (z1);
    \draw[black] (y) -- (zj);
    \draw[black] (zj) -- (zj');
    \draw[black] (x) -- (zj);
    \end{tikzpicture}
    \caption{The graph $G_j$.}
    \label{fig:mag<mag-}
\end{figure}

To find $meg(G_j)$ we make use of Lemma 2.1 from \cite{foucaud2023monitoring}.

\begin{lemma}[{\cite[Lemma 2.1]{foucaud2023monitoring}}]
\label{lem:megsimp}
    In a graph $G$ with at least one edge, any vertex of degree 1 belongs to any edge-geodetic set and thus, to any MEG-set of $G$.
\end{lemma}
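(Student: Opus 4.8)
The plan is to exploit the single structural fact that a vertex of degree~$1$ can never be an \emph{internal} vertex of a path in $G$: passing through a vertex as an internal vertex of a path requires two distinct incident edges, while a degree-$1$ vertex has only one. Everything else is just unwinding the definitions (recall that $S$ being an edge-geodetic set means every edge of $G$ lies on at least one shortest path between two vertices of $S$, while in an MEG-set every edge lies on \emph{all} shortest paths between some pair of its vertices).

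First I would fix a vertex $v$ of degree~$1$ and let $u$ be its unique neighbour, so that $e = uv$ is an edge of $G$; this is exactly where the hypothesis that $G$ has at least one edge is used, since it guarantees that such an $e$ exists and must be ``covered''. Let $S$ be any edge-geodetic set of $G$. By definition there exist vertices $x, y \in S$ and a shortest $x$--$y$ path $P$ with $e \in E(P)$. If $v$ were an internal vertex of $P$, then $P$ would contain two distinct edges incident to $v$, contradicting $\deg(v) = 1$. Hence $v$ is an endpoint of $P$, i.e.\ $v \in \{x, y\} \subseteq S$, which proves the first assertion.

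For the second assertion I would note that every MEG-set is in particular an edge-geodetic set: if a pair $x, y$ of an MEG-set $M$ monitors an edge $f$, then $f$ lies on every shortest $x$--$y$ path, and since a monitored edge lies on at least one such path, $f$ is covered by $\{x, y\}$ in the edge-geodetic sense. Applying the previous paragraph to $M$ then gives $v \in M$. (Equivalently, one can repeat the argument verbatim for $M$: the edge $e = uv$ is monitored by some pair $x, y \in M$, it lies on some shortest $x$--$y$ path $P$, and since $\deg(v) = 1$ the vertex $v$ must be an endpoint of $P$, whence $v \in M$.)

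There is no genuine obstacle here; the lemma is essentially a one-line consequence of the observation that degree-$1$ vertices are never internal on shortest paths. The only point requiring a moment's care is the convention that a pair of vertices monitoring an edge is actually joined by a path containing that edge (so that ``all shortest paths'' is not vacuously satisfied), which is implicit in the definition of monitoring and is precisely what makes the degree-$1$ argument bite.
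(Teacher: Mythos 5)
Your proof is correct and is exactly the standard argument: a degree-$1$ vertex can never be internal to a path, so it must be an endpoint of any shortest path covering its unique incident edge, and every MEG-set is in particular an edge-geodetic set. The paper itself imports this lemma from prior work without reproving it, so there is nothing further to compare against.
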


\begin{observation}
    Let $j \geq 1$, be an integer and let $G_j$ be as per Construction~\ref{cons:mag<mag-}. Then $meg(G_j) = j + 2$.
\end{observation}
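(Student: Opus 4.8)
The plan is to prove $meg(G_j) \geq j+2$ and $meg(G_j) \leq j+2$ separately.

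For the lower bound, I would first identify the vertices of degree $1$ in $G_j$. By Construction~\ref{cons:mag<mag-}, these are exactly $x'$, $y'$ and $z'_1, \dots, z'_j$, so there are $j+2$ of them. Lemma~\ref{lem:megsimp} then immediately gives that every one of these vertices lies in every MEG-set of $G_j$, whence $meg(G_j) \geq j+2$.

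For the upper bound, the plan is to exhibit the explicit MEG-set $M = \{x', y'\} \cup \{z'_i : 1 \leq i \leq j\}$, which has size $j+2$. I would begin by recording the distances and geodesics that matter: $d(x',y') = d(x',z'_i) = d(y',z'_i) = 3$ with \emph{unique} shortest paths $x'xyy'$, $x'xz_iz'_i$, and $y'yz_iz'_i$ respectively, while $d(z'_i,z'_k) = 4$ for $i \neq k$. The key point here is uniqueness: from $x'$ (a leaf) one is forced into $x$, and from $x$ the only way to reach $z'_i$ within two more steps is $x z_i z'_i$, since any route through $y$ costs at least $3$ extra steps.

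Then I would check all $3 + 3j$ edges one by one. The edges $x'x$, $yy'$ and each $z_iz'_i$ are incident to a leaf of $M$, so they lie on every shortest path starting at that leaf and are trivially monitored. The edge $xy$ is monitored by the pair $\{x',y'\}$ because $x'xyy'$ is the unique $x'$--$y'$ geodesic. Each edge $xz_i$ is monitored by $\{x',z'_i\}$, and each edge $yz_i$ by $\{y',z'_i\}$, using the uniqueness of the corresponding geodesics above. Hence $M$ is an MEG-set and $meg(G_j) \leq j+2$, which together with the lower bound yields $meg(G_j) = j+2$.

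The only subtlety — and it is mild — is justifying the uniqueness of these geodesics, which is exactly what forces the internal vertices $x$, $y$ and the $z_i$ to be monitored even though they are not in $M$; this rests on the observation that the only detour from $x'xz_iz'_i$ runs through $y$ and is strictly longer, and similarly with the roles of $x$ and $y$ exchanged.
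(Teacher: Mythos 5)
Your proof is correct and follows the same approach as the paper, which simply asserts that the set of degree-$1$ vertices (forced into every MEG-set by Lemma~\ref{lem:megsimp}) is itself an MEG-set; you merely spell out the edge-by-edge verification that the paper leaves implicit. (The only nitpick: the detour $x'xyz_iz'_i$ costs one extra step, not three, but uniqueness of the geodesics holds either way.)
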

\begin{proof}
    The set of all vertices of degree $1$ is an MEG-set of $G_j$. 
\end{proof}

Recall that for an undirected graph $G$, the monitoring arc-geodetic spectrum $S(G)$ of $G$, is the set $\{mag(\overrightarrow{G}) \colon\ \overrightarrow{G} \text{ is an orientation of } $G$\}$. We next show that the graph $G_j$ constructed using Construction~\ref{cons:mag<mag-} has the property $meg(G_j) \notin S(G_j)$, that is, $meg^{-}(G_j) \geq j+3$.

\begin{figure}
     \centering
     \begin{subfigure}[b]{0.3\textwidth}
         \centering
         \begin{tikzpicture}[inner sep=0.7mm, scale = 0.25]
    \node[draw, circle, line width=1pt, fill=black] (y') at (-3, 1) [label=below: $y'$]{};
    \node[draw, circle, line width=1pt, fill=black] (y) at (0, 1) [label=below: $y$]{};
    \node[draw, circle, line width=1pt, fill=black] (x) at (0, 11) [label=above: $x$]{};
    \node[draw, circle, line width=1pt, fill=black] (x') at (-3, 11) [label=below: $x'$]{};
    \node[draw, circle, line width=1pt, fill=black] (z1) at (4, 6) [label=below: $z_1$]{};
    \node[draw, circle, line width=1pt, fill=black] (z1') at (7, 6) [label=below: $z_1'$]{};
    \node[draw, circle, line width=1pt, fill=black] (zj) at (12, 6) [label=below: $z_2$]{};
    \node[draw, circle, line width=1pt, fill=black] (zj') at (15, 6) [label=below: $z_2'$]{};
    \draw[black] (x) -- (x');
    \draw[black, middlearrow={<}] (x) -- (y);
    \draw[black] (y) -- (y');
    \draw[black, middlearrow={<}] (y) -- (z1);
    \draw[black, middlearrow={<}] (z1) -- (z1');
    \draw[red, middlearrow={>}] (x) -- (z1);
    \draw[red, middlearrow={<}] (y) -- (zj);
    \draw[black, middlearrow={>}] (zj) -- (zj');
    \draw[black, middlearrow={>}] (x) -- (zj);
    \end{tikzpicture}
         \caption{$z_1$ is of type $(+,-,+)$ and $z_2$ is of type $(+,-,-)$.}
         \label{fig:case1}
     \end{subfigure}
     \hfill
     \begin{subfigure}[b]{0.3\textwidth}
         \centering
         \begin{tikzpicture}[inner sep=0.7mm, scale = 0.25]
    \node[draw, circle, line width=1pt, fill=black] (y') at (-3, 1) [label=below: $y'$]{};
    \node[draw, circle, line width=1pt, fill=black] (y) at (0, 1) [label=below: $y$]{};
    \node[draw, circle, line width=1pt, fill=black] (x) at (0, 11) [label=above: $x$]{};
    \node[draw, circle, line width=1pt, fill=black] (x') at (-3, 11) [label=below: $x'$]{};
    \node[draw, circle, line width=1pt, fill=black] (z1) at (4, 6) [label=below: $z_1$]{};
    \node[draw, circle, line width=1pt, fill=black] (z1') at (7, 6) [label=below: $z_1'$]{};
    \node[draw, circle, line width=1pt, fill=black] (zj) at (12, 6) [label=below: $z_2$]{};
    \node[draw, circle, line width=1pt, fill=black] (zj') at (15, 6) [label=below: $z_2'$]{};
    \draw[black, middlearrow={<}] (x) -- (x');
    \draw[black, middlearrow={<}] (x) -- (y);
    \draw[black, middlearrow={>}] (y) -- (y');
    \draw[red, middlearrow={>}] (y) -- (z1);
    \draw[black, middlearrow={>}] (z1) -- (z1');
    \draw[black, middlearrow={>}] (x) -- (z1);
    \draw[black, middlearrow={<}] (y) -- (zj);
    \draw[black, middlearrow={<}] (zj) -- (zj');
    \draw[red, middlearrow={<}] (x) -- (zj);
    \end{tikzpicture}
         \caption{$z_1$ is of type $(+,+,-)$ and $z_2$ is of type $(-,-,+)$.}
         \label{fig:case2}
     \end{subfigure}
     \hfill
     \begin{subfigure}[b]{0.3\textwidth}
         \centering
         \begin{tikzpicture}[inner sep=0.7mm, scale = 0.25]
    \node[draw, circle, line width=1pt, fill=black] (y') at (-3, 1) [label=below: $y'$]{};
    \node[draw, circle, line width=1pt, fill=black] (y) at (0, 1) [label=below: $y$]{};
    \node[draw, circle, line width=1pt, fill=black] (x) at (0, 11) [label=above: $x$]{};
    \node[draw, circle, line width=1pt, fill=black] (x') at (-3, 11) [label=below: $x'$]{};
    \node[draw, circle, line width=1pt, fill=black] (z1) at (12, 6) [label=below: $z_i$]{};
    \node[draw, circle, line width=1pt, fill=black] (z1') at (15, 6) [label=below: $z_i'$]{};
    \draw[black, middlearrow={<}] (x) -- (x');
    \draw[red, middlearrow={<}] (x) -- (y);
    \draw[black, middlearrow={<}] (y) -- (y');
    \draw[black, middlearrow={>}] (y) -- (z1);
    \draw[black, middlearrow={>}] (z1) -- (z1');
    \draw[black, middlearrow={>}] (x) -- (z1);
    \end{tikzpicture}
         \caption{All $z_i$'s are of type $(+,+,-)$.}
         \label{fig:case3}
     \end{subfigure}
     \begin{subfigure}[b]{0.3\textwidth}
         \centering
         \begin{tikzpicture}[inner sep=0.7mm, scale = 0.25]
    \node[draw, circle, line width=1pt, fill=black] (y') at (-3, 1) [label=below: $y'$]{};
    \node[draw, circle, line width=1pt, fill=black] (y) at (0, 1) [label=below: $y$]{};
    \node[draw, circle, line width=1pt, fill=black] (x) at (0, 11) [label=above: $x$]{};
    \node[draw, circle, line width=1pt, fill=black] (x') at (-3, 11) [label=below: $x'$]{};
    \node[draw, circle, line width=1pt, fill=black] (z1) at (4, 6) [label=below: $z_i$]{};
    \node[draw, circle, line width=1pt, fill=black] (z1') at (7, 6) [label=below: $z_i'$]{};
    \node[draw, circle, line width=1pt, fill=black] (zj) at (12, 6) [label=below: $z_k$]{};
    \node[draw, circle, line width=1pt, fill=black] (zj') at (15, 6) [label=below: $z_k'$]{};
    \draw[black, middlearrow={<}] (x) -- (x');
    \draw[red, middlearrow={<}] (x) -- (y);
    \draw[black, middlearrow={>}] (y) -- (y');
    \draw[red, middlearrow={>}] (y) -- (z1);
    \draw[black, middlearrow={>}] (z1) -- (z1');
    \draw[black, middlearrow={>}] (x) -- (z1);
    \draw[black, middlearrow={<}] (y) -- (zj);
    \draw[black, middlearrow={<}] (zj) -- (zj');
    \draw[black, middlearrow={>}] (x) -- (zj);
    \end{tikzpicture}
         \caption{$z_k$ is of type $(+,-,+)$ and all other $z_i$'s are of type $(+,+,-)$.}
         \label{fig:case4}
     \end{subfigure}
     \hspace{5mm}
     \begin{subfigure}[b]{0.3\textwidth}
         \centering
         \begin{tikzpicture}[inner sep=0.7mm, scale = 0.25]
    \node[draw, circle, line width=1pt, fill=black] (y') at (-3, 1) [label=below: $y'$]{};
    \node[draw, circle, line width=1pt, fill=black] (y) at (0, 1) [label=below: $y$]{};
    \node[draw, circle, line width=1pt, fill=black] (x) at (0, 11) [label=above: $x$]{};
    \node[draw, circle, line width=1pt, fill=black] (x') at (-3, 11) [label=below: $x'$]{};
    \node[draw, circle, line width=1pt, fill=black] (z1) at (4, 6) [label=below: $z_i$]{};
    \node[draw, circle, line width=1pt, fill=black] (z1') at (7, 6) [label=below: $z_i'$]{};
    \node[draw, circle, line width=1pt, fill=black] (zj) at (12, 6) [label=below: $z_k$]{};
    \node[draw, circle, line width=1pt, fill=black] (zj') at (15, 6) [label=below: $z_k'$]{};
    \draw[black, middlearrow={<}] (x) -- (x');
    \draw[red, middlearrow={<}] (x) -- (y);
    \draw[black, middlearrow={>}] (y) -- (y');
    \draw[black, middlearrow={<}] (y) -- (z1);
    \draw[black, middlearrow={>}] (z1) -- (z1');
    \draw[black, middlearrow={>}] (x) -- (z1);
    \draw[black, middlearrow={>}] (y) -- (zj);
    \draw[black, middlearrow={>}] (zj) -- (zj');
    \draw[black, middlearrow={>}] (x) -- (zj);
    \end{tikzpicture}
         \caption{$z_k$ is of type $(+,-,-)$ and all other $z_i$'s are of type $(+,+,-)$.}
         \label{fig:case5}
     \end{subfigure}
        \caption{Some of the different possibilities for the orientation of arcs in $\overrightarrow{G_j}$.}
        \label{fig:meg<mag-cases}
\end{figure}

\begin{theorem}\label{thm:meg<mag-}
    Let $j \geq 1$, be an integer and let $G_j$ be as per Construction~\ref{cons:mag<mag-}. Then $mag^-(G_j) = j + 3$.
\end{theorem}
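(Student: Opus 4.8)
The plan is to establish the two bounds $mag^-(G_j)\le j+3$ and $mag^-(G_j)\ge j+3$ separately. Write $D=\{x',y',z_1',\ldots,z_j'\}$ for the set of the $j+2$ vertices of degree $1$ in $G_j$. In any orientation of $G_j$ each vertex of $D$ is a source or a sink, so by Proposition~\ref{prop:SnS} we have $D\subseteq M$ for every MAG-set $M$ of $\overrightarrow{G_j}$; hence $mag(\overrightarrow{G_j})\ge j+2$ always, and $mag(\overrightarrow{G_j})\ge j+3$ as soon as $D$ itself fails to be a MAG-set. So the lower bound reduces to showing that \emph{in every orientation of $G_j$ some arc is monitored by no pair of vertices of $D$}, and the upper bound reduces to exhibiting \emph{one} orientation in which $D$ together with a single extra vertex is a MAG-set.

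For the upper bound I would take the orientation $\overrightarrow{G_j}$ with arcs $\overrightarrow{x'x},\overrightarrow{xy},\overrightarrow{yy'}$ and, for each $i$, the arcs $\overrightarrow{z_i'z_i},\overrightarrow{z_ix},\overrightarrow{z_iy}$. Here the sources are $x',z_1',\ldots,z_j'$ and the unique sink is $y'$, so $D$ is precisely the set of sources and sinks. A short verification shows: the pair $\{x',y'\}$ monitors $\overrightarrow{x'x},\overrightarrow{xy},\overrightarrow{yy'}$ through the unique shortest path $x'\to x\to y\to y'$; for each $i$ the pair $\{z_i',y'\}$ monitors $\overrightarrow{z_i'z_i},\overrightarrow{z_iy},\overrightarrow{yy'}$ through the unique shortest path $z_i'\to z_i\to y\to y'$; and the only arcs that remain, the $\overrightarrow{z_ix}$, are monitored by no pair of $D$ (a directed path leaving $D$ through $\overrightarrow{z_ix}$ must start at $z_i'$ and can then reach the sink $y'$ only along $z_i'\to z_i\to x\to y\to y'$, which is strictly longer than the shortest $z_i'$--$y'$ path). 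Adding the vertex $x$ repairs this: for every $i$ the pair $\{z_i',x\}$ monitors $\overrightarrow{z_ix}$ through the unique shortest path $z_i'\to z_i\to x$. Thus $D\cup\{x\}$ is a MAG-set of size $j+3$, so $mag^-(G_j)\le j+3$.

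For the lower bound, fix an arbitrary orientation $\overrightarrow{G_j}$ and let $C$ be its ``core'', the subgraph induced by $\{x,y,z_1,\ldots,z_j\}$. Since each $\ell\in D$ has exactly one neighbour $c(\ell)$ in the core, every directed path joining two vertices $\ell_1,\ell_2$ of $D$ has the form $\ell_1\to c(\ell_1)\to\cdots\to c(\ell_2)\to\ell_2$ with the middle stretch a directed path in $C$; hence a pair $\{\ell_1,\ell_2\}\subseteq D$ can monitor an arc lying inside $C$ only if $\ell_1$ is a source, $\ell_2$ is a sink, and $c(\ell_1),c(\ell_2)$ monitor that arc within $C$. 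First I would dispose of the orientations in which some $z_i$ has all three incident arcs oriented consistently (all entering, or all leaving $z_i$): then $z_i$ is a sink or a source, so it lies in every MAG-set by Proposition~\ref{prop:SnS}, while $z_i\notin D$, and $D$ is not a MAG-set. In all remaining orientations, using the automorphism of $G_j$ that exchanges $x\leftrightarrow y$ and $x'\leftrightarrow y'$ I would assume the arc between $x$ and $y$ is $\overrightarrow{xy}$, classify each $z_i$ by the ``type'' $(\pm,\pm,\pm)$ recording the orientations of its three incident arcs (as in Figure~\ref{fig:meg<mag-cases}), and for each combination of types pinpoint an unmonitored arc --- one of $\overrightarrow{xz_i}$, $\overrightarrow{z_iy}$, $\overrightarrow{z_ix}$, $\overrightarrow{yz_i}$, or $\overrightarrow{xy}$, depending on the configuration --- the point being that every candidate monitoring pair of $D$ would force the pendant at $x$ or at $y$ to have the opposite orientation to the actual one, or would require a second $z_k$ of a matching type that is absent. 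This shows $D$ is never a MAG-set, so $mag^-(G_j)\ge j+3$; combined with the upper bound, $mag^-(G_j)=j+3$.

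The hard part is the completeness of this last case analysis: a single failing arc must be produced for \emph{every} profile of types of the $z_i$'s, and the ``mixed'' profiles (some $z_i$ lying on a short $x$--$y$ directed path, others behaving as local sources or sinks relative to $\{x,y\}$, combined with all possible pendant orientations of $x'$, $y'$ and the $z_i'$) interact through the shortest-path structure of the small but non-tree core $C$ and must be checked configuration by configuration. The reduction to the core and the upper-bound construction are routine by comparison.
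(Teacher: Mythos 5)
Your overall skeleton is sound and matches the paper's: the degree-one set $D$ of size $j+2$ is forced into every MAG-set, so it suffices to show (a) some orientation admits an MAG-set of size $j+3$, and (b) in no orientation is $D$ itself an MAG-set. Your upper-bound orientation (all $z_i$ with arcs $\overrightarrow{z_i'z_i},\overrightarrow{z_ix},\overrightarrow{z_iy}$, together with $\overrightarrow{x'x},\overrightarrow{xy},\overrightarrow{yy'}$) is correct and fully verified; it is, up to symmetry, the same orientation the paper uses (all $z_i$ of one fixed type, with $x$ added to $D$).

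The genuine gap is in (b): the entire case analysis over the type profiles of the $z_i$'s is announced but not carried out, and it is precisely this analysis that constitutes the substance of the theorem. Saying that for ``every combination of types'' one can ``pinpoint an unmonitored arc'' is the statement to be proved, not a proof; as you yourself note, completeness of the case distinction is the hard part, and nothing in your text certifies it. Moreover, by discarding only the $z_i$'s that are sources or sinks you leave six admissible types (times all orientations of the pendant edges at $x$ and $y$), which makes the unstructured ``configuration by configuration'' check unwieldy. The paper first cuts this down: with $\overrightarrow{yx}$ fixed, any $z_i$ of a type containing the pattern $y\to z_i\to x$ is shortcut by the arc $\overrightarrow{yx}$ and hence, by the necessity direction of Theorem~\ref{theorem:mageqn}, lies in every MAG-set, so only four types survive; then two structural claims (at most one $z_i$ of type $(+,-,+)$ or $(+,-,-)$ in total, and types $(+,+,-)$ and $(-,-,+)$ cannot coexist) reduce everything to six explicit profiles, each of which is killed by exhibiting a concrete arc — one of $\overrightarrow{yx}$, $\overrightarrow{yz_i}$, etc. — that no pair of $D$ can monitor, using the forced orientations of $x'x$ and $y'y$. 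To turn your proposal into a proof you need to supply these reductions (or an equivalent exhaustive argument) and the explicit failing arc in each remaining case; without that, the lower bound $mag^-(G_j)\ge j+3$ is not established.
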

\begin{proof}
    Proposition~\ref{prop:SnS} tells us that the vertices $x', y'$ and $z_i', i \in \{1, \ldots, j\}$ will be in every MAG-set irrespective of the orientation of $G_j$. Hence, $mag^-(G_j) \geq j + 2$. Let us assume that there is an orientation $\overrightarrow{G_j}$ of $G_j$ such that $mag(\overrightarrow{G_j}) = j + 2$ and the MAG-set is the set of degree one vertices. We try to determine the properties of this orientation, if it exists.

    Without loss of generality, let $\overrightarrow{yx} \in A(\overrightarrow{G_j})$. Note that the vertex $z_i$ has exactly three neighbors, $x, y$, and $z'_i$. Suppose, $z$ is in $N^\alpha(x), N^\beta(y)$ and $N^\gamma(z_i')$ for some $\alpha, \beta, \gamma \in \{+,-\}$. In such a scenario, we say that $z_i$ is of type $(\alpha, \beta, \gamma)$. 
    
    If $z_i$ is of the type $(+,+,+)$, $(-,+,+)$, $(-,+,-)$, or $(-,-,-)$, then $z_i$ satisfies one of the conditions of Theorem~\ref{theorem:mageqn}, and thus is part of any MAG-set of $\overrightarrow{G_j}$. Hence, $z_i$ can only be of one of the types $(+,+,-)$, $(+,-,+)$, $(+,-,-)$, or $(-,-,+)$.

    \noindent \textit{Claim: } $\overrightarrow{G_j}$ can have at most one $z_i$ whose type is in $\{(+,-,+),(+,-,-)\}$.
    
\textit{Proof of claim:} Without loss of generality, suppose that $z_1$ is of type $(+,-,+)$ and $z_2$ is of type $(+,-,-)$ in $\overrightarrow{G_j}$ as shown in Figure~\ref{fig:case1}. It is clear that any shortest path through the arc $\overrightarrow{xz_1}$, whose endpoints monitor this arc, is of the form $\overrightarrow{P} = a \ldots x z_1 y \ldots b$, where $a$ can take one of the values $x'$ or $z_i', i \neq 1, 2$ and $b$ can take one of the values $y'$ or $z_i', i \neq 1, 2$. However, the existence of another path $\overrightarrow{P'} = a \ldots x z_2 y \ldots b$ which is of the same length as $\overrightarrow{P}$ ensures that one of the arcs $\overrightarrow{xz_1}$ or $\overrightarrow{z_2y}$ is never monitored in $\overrightarrow{G_j}$, which is a contradiction. Similarly, we can also show that $\overrightarrow{G_j}$ cannot have two $z_i$'s of type $(+,-,+)$ or $(+,-,-)$, respectively. \hfill$\diamond$

    \noindent \textit{Claim: } $\overrightarrow{G_j}$ cannot have $z_i$'s of types $(+,+,-)$ and $(-,-,+)$ simultaneously.\\
\noindent \textit{Proof of claim: } Without loss of generality, suppose that $z_1$ is of type $(+,+,-)$ and $z_2$ is of type $(-,-,+)$ in $\overrightarrow{G_j}$ as shown in Figure~\ref{fig:case2}. Observe that any shortest path between vertices that monitor $\overrightarrow{xz_1}$ is of the form $P = a \ldots x z_1 z_1'$. The vertex $a$ cannot be $y$ since there is a shorter path from $y$ to $z_1'$. The vertex $a$ cannot be $z_2'$ since there is another shortest path $P' = z_2' z_2 y z_1 z_1'$ not passing through $\overrightarrow{xz_1}$. If $a$ is an arbitrary $z_i'$, then the corresponding $z_i$ will be part of every MAG-set by Theorem~\ref{theorem:mageqn}. Hence, $a$ must be $x'$ and hence, $\overrightarrow{x'x} \in A(\overrightarrow{G_j})$. Similarly, to monitor the arc $\overrightarrow{z_2y}$, we must have $\overrightarrow{yy'} \in A(\overrightarrow{G_j})$. But now to monitor the arc $\overrightarrow{yz_1}$, $\overrightarrow{G_j}$ must contain a $z_i$ of type $(+,-,+)$ and to monitor the arc $\overrightarrow{z_2x}$, $\overrightarrow{G_j}$ must contain a $z_i$ of type $(+,-,-)$, which is a contradiction. \hfill$\diamond$

So the graph $\overrightarrow{G_j}$ must have one of the following properties.
\begin{enumerate}[(i)]
    \item All $z_i$'s in $\overrightarrow{G_j}$ are of type $(+,+,-)$.
    \item In $\overrightarrow{G_j}$, for some particular $k$, $z_k$ is of type $(+,-,+)$ and all other $z_i$'s are of type $(+,+,-)$.
    \item In $\overrightarrow{G_j}$, for some particular $k$, $z_k$ is of type $(+,-,-)$ and all other $z_i$'s are of type $(+,+,-)$.
    \item All $z_i$'s in $\overrightarrow{G_j}$ are of type $(-,-,+)$.
    \item In $\overrightarrow{G_j}$, for some particular $k$, $z_k$ is of type $(+,-,+)$ and all other $z_i$'s are of type $(-,-,+)$.
    \item In $\overrightarrow{G_j}$, for some particular $k$, $z_k$ is of type $(+,-,-)$ and all other $z_i$'s are of type $(-,-,+)$.
\end{enumerate}

In (i), the orientation of the edges $x'x$ and $y'y$ is forced as $\overrightarrow{x'x}$ and $\overrightarrow{y'y}$ to ensure that the arcs $\overrightarrow{xz_i}$ and $\overrightarrow{yz_i}$ are monitored. But then, $\overrightarrow{yx}$ is not monitored by any of the pairs of vertices in the MAG-set. This is illustrated in Figure~\ref{fig:case3}. In (ii), the arc $\overrightarrow{xz_k}$ is monitored only if $\overrightarrow{x'x}, \overrightarrow{yy'} \in A(\overrightarrow{G_j})$ as shown in Figure~\ref{fig:case4}. If there is no $z_i$ of type $(+,+,-)$ in $\overrightarrow{G_j}$, then the arc $\overrightarrow{yx}$ is not monitored and if there is at least one $z_i$ of type $(+,+,-)$ in $\overrightarrow{G_j}$, then the arc $\overrightarrow{yz_i}, i \neq k$ is never monitored. Hence, (ii) is not possible. Observe that in (iii), $z_k'$ does not help in monitoring any of the arcs in the neighborhood of the $z_i$'s, $i \neq k$, if they exist. Monitoring the arc $\overrightarrow{z_ky}$ happens only if $\overrightarrow{x'x}, \overrightarrow{yy'} \in A(\overrightarrow{G_j})$. But then the arc $\overrightarrow{yx}$ is never monitored. This is illustrated in Figure~\ref{fig:case5}. The cases (iv), (v) and (vi) are symmetric to (i), (ii) and (iii). Similar arguments as above deny their possibilities too.

Thus, we have proved that the graph $\overrightarrow{G_j}$ with $mag(\overrightarrow{G_j}) = j + 2$ does not exist. Moreover, if $\overrightarrow{G_j}$ consists solely of $z_i$'s of type $(+,+,-)$, then adding the vertex $x$ to the MAG-set ensures that all the arcs are monitored. Hence, $mag^-(G_j) = j + 3$.
\end{proof}

The following corollary is a direct interpretation of Construction~\ref{cons:mag<mag-} and Theorem~\ref{thm:meg<mag-}.

\begin{corollary}
\label{cor:mag<mag-}
    For any positive integer $j$, where $j \geq 3$, we can find a connected graph $G_j$ such that $meg(G_j) = j + 2$ and $meg(G_j) < mag^-(G_j)$.
\end{corollary}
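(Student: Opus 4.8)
The plan is to observe that this corollary is essentially a bookkeeping statement assembled from three ingredients already established: Construction~\ref{cons:mag<mag-}, the Observation giving $meg(G_j) = j+2$, and Theorem~\ref{thm:meg<mag-} giving $mag^-(G_j) = j+3$. So the ``proof'' is really just a matter of checking that the object $G_j$ has the advertised features and that the two numerical values are in the claimed order.

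First I would fix $j$ (any integer $\geq 3$ will do, though nothing below forces $j\geq 3$) and take $G_j$ to be the graph produced by Construction~\ref{cons:mag<mag-}. I would note that $G_j$ is connected: the vertices $x',y,x,y'$ form a path, and every additional vertex $z_i$ is joined to both $x$ and $y$ while each $z_i'$ hangs off $z_i$, so every vertex reaches the central edge $xy$. Thus $G_j$ is a legitimate connected graph of the required kind, and in particular each $z_i'$, $x'$, $y'$ is a leaf.

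Next I would invoke the Observation, which says $meg(G_j) = j+2$ (the degree-one vertices, namely $x', y', z_1', \dots, z_j'$, form a minimum MEG-set). Then I would invoke Theorem~\ref{thm:meg<mag-}, which established $mag^-(G_j) = j+3$; in particular every orientation $\overrightarrow{G_j}$ of $G_j$ has $mag(\overrightarrow{G_j}) \geq j+3 > j+2$. Combining the two values gives $meg(G_j) = j+2 < j+3 = mag^-(G_j)$, which is exactly the claim, and also exhibits that $meg(G_j)\notin S(G_j)$.

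There is essentially no obstacle here, since the work was done in proving Theorem~\ref{thm:meg<mag-} (the case analysis over the types $(\alpha,\beta,\gamma)$ of each $z_i$ and the argument that an MAG-set consisting only of the leaves cannot exist). The only thing worth double-checking is the trivial point that the definition of $mag^-$ as $\min(S(G_j))$ is being used correctly, i.e.\ that ``$meg(G_j) < mag(\overrightarrow{G_j})$ for every orientation'' is the same as ``$meg(G_j) < mag^-(G_j)$'', which is immediate from the definition of $mag^-$.
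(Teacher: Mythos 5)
Your proposal is correct and matches the paper's (implicit) argument exactly: the corollary is stated as a direct consequence of the Observation giving $meg(G_j)=j+2$ and Theorem~\ref{thm:meg<mag-} giving $mag^-(G_j)=j+3$, which is precisely how you assemble it. Your side remark that nothing in the argument actually requires $j\geq 3$ is also accurate.
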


Hence, there does not exist any trivial relation between the monitoring edge-geodetic number of a graph and the monitoring arc-geodetic number of any of its orientations.

\section{MAG-sets of tournaments}
\label{sec4}
\label{sec:tournaments}
In this section, we will study \emph{tournaments} (oriented graphs where the underlying graph is complete), and give a complete characterization of our notion on them. In particular, the result from the next theorem indicates that there are only two possibilities, any MAG-set consists of either all the vertices of the tournament or at most one vertex may be omitted. 
\begin{theorem}
\label{thm:tornchar}
    Let $\overrightarrow{G}$ be a tournament of order $n$. Then $mag(\overrightarrow{G}) \geq n-1$.
\end{theorem}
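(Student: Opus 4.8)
The plan is to show that any MAG-set $M$ of a tournament $\overrightarrow{G}$ of order $n$ omits at most one vertex, i.e., $|V(\overrightarrow{G}) \setminus M| \le 1$. Suppose, for contradiction, that $M$ omits two distinct vertices $u$ and $v$. The first observation I would make is that $u$ and $v$ are adjacent (since $\overrightarrow{G}$ is a tournament), so without loss of generality $\overrightarrow{uv} \in A(\overrightarrow{G})$. I would then ask who monitors the arc $\overrightarrow{uv}$: there must be vertices $a, b \in M$ (so $a, b \notin \{u, v\}$) such that $\overrightarrow{uv}$ lies on every shortest path from $a$ to $b$ (or from $b$ to $a$); in particular there is a shortest $a$-$b$ path of the form $a \ldots u v \ldots b$. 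The key quantitative point is that in a tournament the distance between any two vertices is at most $2$ (if $a$ does not dominate $b$ directly, some common vertex works since $a$ and $b$ together dominate essentially everything). So the shortest $a$-$b$ path through $\overrightarrow{uv}$ must have length exactly $2$, forcing $a = $ the in-neighbour part to be just $u$'s predecessor and $b$'s successor — more precisely the path is $a u v$ with $a \to u$, or $u v b$ with $v \to b$, or possibly we are in the degenerate case $a = u$... but $u \notin M$, so actually the monitoring path is $a u v$ or $u v b$ is impossible, meaning the path is exactly $a \to u \to v \to b$ of length $3$ — contradiction with diameter $2$ — \emph{unless} the path has length $2$, which forces either $a = u$ or $b = v$, both impossible since $u, v \notin M$.

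Let me restate this cleanly, since that is really the whole argument. Since $u, v \in V \setminus M$ and $a, b \in M$, we have $a \neq u$ and $b \neq v$. The monitoring path for $\overrightarrow{uv}$ starts at $a$, passes through the consecutive vertices $u, v$, and ends at $b$; hence it contains $a, u, v, b$ as (at least four, since $a \ne u$ and $b \ne v$, and $a \ne b$) distinct vertices, so it has length at least $3$. But a tournament has diameter at most $2$: for any $x \ne y$, either $\overrightarrow{xy} \in A(\overrightarrow{G})$, or else $y \to x$ and we pick any $w \in N^+(x) \cap N^-(y)$ — this intersection is nonempty because $N^+(x)$ and $N^-(y)$ together with $\{x,y\}$ exhaust $V$ and if they were disjoint then every vertex other than $x, y$ would be in $N^-(x) \cap N^+(y)$, which is consistent, so I need the sharper statement that the distance from $a$ to $b$ is at most $2$ OR from $b$ to $a$ is at most $2$. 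Actually the cleanest route: $d(a,b) \le 2$ whenever $a \to b$ is not forced — hmm, this needs care. The robust fact I would use is: in a tournament, for every ordered pair $(a,b)$ with $a \ne b$, if $\overrightarrow{ab} \notin A$ then $d(a,b) = 2$, because $b \to a$, and considering any third vertex $w$: if some $w$ has $a \to w$ and $w \to b$ we are done, and if no such $w$ exists then every $w \ne a,b$ satisfies ($w \to a$ or $b \to w$), which does not immediately give a contradiction — so instead I should note $d(b,a) = 1$ here. The point is: \textbf{at least one of $d(a,b), d(b,a)$ equals $1$ or $2$, and in fact both are $\le 2$ cannot fail simultaneously with having a length-$\ge 3$ monitoring path.}

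So the main obstacle is pinning down the diameter claim correctly: a tournament need not be strongly connected, so $d(a,b)$ can be $\infty$. The correct and sufficient statement is that \emph{a shortest path of length $\ge 3$ cannot be the unique/intersection-defining shortest path in a tournament}, which I will establish as follows: if $\overrightarrow{P} = a \ldots u v \ldots b$ is a shortest $a$-$b$ path containing the arc $\overrightarrow{uv}$ and $P$ has length $\ge 3$, let $p$ be the vertex immediately before $u$ on $P$ (possibly $p = a$) and $q$ the vertex immediately after $v$ (possibly $q = b$); since $P$ has length $\ge 3$, at least one of $p \ne a$-side or $q \ne b$-side is "long", but more simply: consider the arc between $p$ and $q$ in the tournament. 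If $\overrightarrow{pq} \in A$, then replacing the subpath $p u v q$ by $p q$ gives a strictly shorter $a$-$b$ walk, contradicting minimality. If $\overrightarrow{qp} \in A$, consider instead $\overrightarrow{pv}$ and $\overrightarrow{uq}$: in all orientations one finds a shortcut of length $\le 2$ replacing $p u v q$, again contradicting that $P$ is shortest. Hence every shortest path realizing a monitored arc has length at most $2$; so $\overrightarrow{uv}$ is monitored by a length-$2$ path $a u v$ or $u v b$, forcing $a = u$ or $b = v$, contradicting $u, v \notin M$. Therefore $M$ omits at most one vertex and $mag(\overrightarrow{G}) \ge n - 1$. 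I would double-check the "shortcut" case analysis (three sub-cases for the two arcs $p$-$v$ and $u$-$q$ given $q \to p$) as the one routine calculation, but it is elementary.
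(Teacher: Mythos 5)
There is a genuine gap: your key claim that ``every shortest path realizing a monitored arc has length at most $2$'' in a tournament is false, and the sub-case you flag for double-checking is exactly where it breaks. In the case $\overrightarrow{qp}\in A$, the minimality of $\overrightarrow{P}$ forces $\overrightarrow{vp}\in A$ (else $p\to v$ would shortcut $p\,u\,v$) and $\overrightarrow{qu}\in A$ (else $u\to q$ would shortcut $u\,v\,q$); in that orientation none of $p,u,v,q$ provides a $p$--$q$ shortcut of length $\le 2$, so no contradiction arises. Concretely, take the tournament on $\{a,x,y,b\}$ with arcs $\overrightarrow{ba},\ \overrightarrow{ax},\ \overrightarrow{xy},\ \overrightarrow{yb},\ \overrightarrow{bx},\ \overrightarrow{ya}$: here $a\,x\,y\,b$ is the \emph{unique} shortest path from $a$ to $b$, it has length $3$, and it monitors $\overrightarrow{xy}$ even though neither $x$ nor $y$ lies in $\{a,b\}$. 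So the configuration you claim is impossible actually occurs, and your contradiction never materializes. (Tournaments can even have arbitrarily long shortest paths, since they need not be strongly connected and not every vertex is a king.)

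The underlying issue is structural: a purely local argument about the single monitoring path of $\overrightarrow{uv}$ cannot work, because that path can legitimately be long. The paper's proof instead exploits a \emph{second} monitoring requirement. From the shortest path $\overrightarrow{P}=a\ldots x\,u\,v\,y\ldots b$ it deduces (by the same minimality reasoning you use) that $\overrightarrow{vx}$ and $\overrightarrow{yu}$ are arcs; it then asks who monitors the arc $\overrightarrow{yu}$, takes a shortest path $c\ldots c'\,y\,u\,d'\ldots d$ through it (using $u\notin M$ to guarantee $u$ has a successor $d'$ on that path), deduces $\overrightarrow{d'y}\in A$, and splices $a\ldots x\,u\,d'\,y\ldots b$ to get an equal-length $a$--$b$ path avoiding $\overrightarrow{uv}$, contradicting that $a,b$ monitor it. If you want to salvage your write-up, you would need to bring in this second arc (or some other globally forced arc) rather than relying on a diameter-type bound.
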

\begin{proof}
    Let $\overrightarrow{G}$ be a tournament of order $n$. Let us assume, on the contrary, that $mag(\overrightarrow{G})$ $\leq n - 2$. Then, for any MAG-set $M$ of $\overrightarrow{G}$ of size $mag(\overrightarrow{G})$, there exist some vertices $u,v \in V(\overrightarrow{G})$ which are not a part of $M$. We assume without loss of generality that $\overrightarrow{uv}$ is an arc of $\overrightarrow{G}$. Be definition, there exist some vertices $a, b$ in $M$ such that $a$ and $b$ monitor $\overrightarrow{uv}$. Consider a shortest path $\overrightarrow{P}$ from $a$ to $b$ containing $\overrightarrow{uv}$. Without loss of generality, let $\overrightarrow{P}$ be the path given by $\overrightarrow{P} = a \ldots x u v y \ldots b$ (possibly, $a=x$ or $b=y$). Since $\overrightarrow{P}$ is a shortest path from $a$ to $b$, we know that $\overrightarrow{vx}$ and $\overrightarrow{yu}$ are arcs of $\overrightarrow{G}$.
    
    Let $c, d$ be a pair of distinct vertices in $M$ which monitor the arc $\overrightarrow{yu}$. Consider a shortest path $\overrightarrow{P'}$ between $c$ and $d$ containing the arc $\overrightarrow{yu}$. Without loss of generality, we can write $\overrightarrow{P'}$ as $\overrightarrow{P'} = c \ldots c' y u d' \ldots d$ (possibly, $c=c'$, $c'=y$, $u=d'$, or $d'=d$). Observe that, although $c$ and $y$ could be the same vertex, it cannot be that $d$ and $u$ are the same vertex because $u$ is not a part of $M$. Since $\overrightarrow{P'}$ is a shortest path, it must be that $\overrightarrow{d'y}$ is an arc of $\overrightarrow{G}$. In this case, the path $a \ldots x u d' y \ldots, b$ where the vertices in the subpaths $a \ldots x$ and $y \ldots b$ are identical to those of $\overrightarrow{P}$, is a path from $a$ to $b$ of the same length as $\overrightarrow{P}$, hence is also a shortest path. The fact that this shortest path between $a$ and $b$ does not contain the arc $\overrightarrow{uv}$ contradicts the fact that $\overrightarrow{uv}$ is monitored by $a$ and $b$. Hence, $mag(\overrightarrow{G})$ $\geq n - 1$. 
\end{proof}
\begin{figure}
    \centering
    \begin{tikzpicture}[inner sep=0.7mm, scale = 1]
    \node[draw, circle, line width=1pt, fill=black] (x) at (0, 0) [label=below: $x$]{};
    \node[draw, circle, line width=1pt, fill=black] (u) at (-1, 1) [label=below left: $u$]{};
    \node[draw, circle, line width=1pt, fill=black] (v) at (1, 1) [label=right: $v$]{};
    \node[draw, circle, line width=1pt, fill=black] (y) at (0, 2) [label=above right: $y$]{};
    \node[draw, circle, line width=1pt, fill=black] (d') at (-2, 1) [label=above: $d'$]{};
    \draw[-stealth, red] (x) -- (u);
    \draw[-stealth, black] (u) -- (v);
    \draw[-stealth, black] (v) -- (x);
    \draw[-stealth, black] (v) -- (y);
    \draw[-stealth, black] (y) -- (u);
    \draw[-stealth, red] (d') -- (y);
    \draw[-stealth, red] (u) -- (d');
    \end{tikzpicture}
    \caption{When $u$ and $v$ are not part of an MAG-set, $\protect\overrightarrow{uv}$ is not monitored.}
    \label{fig:my_label}
\end{figure}

To show that the bound in Theorem~\ref{thm:tornchar} is tight, the next two results show that there exist orientations of any tournament which achieve both the possible sizes of MAG-sets.

\begin{theorem}
\label{thm:mag+}
    If $K_n$ is a complete graph of order $n \geq 3$, then $mag^+(K_n) = n$.
\end{theorem}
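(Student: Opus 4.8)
The plan is to establish the two inequalities $mag^+(K_n) \le n$ and $mag^+(K_n) \ge n$ separately. The upper bound needs no work: for every orientation $\overrightarrow{G}$ of $K_n$ the whole vertex set is an MAG-set, so $mag(\overrightarrow{G}) \le |V(\overrightarrow{G})| = n$ and hence $\max S(K_n) \le n$. All the content is in exhibiting one orientation of $K_n$ which is MAG-extremal, i.e.\ has $mag = n$. The natural candidate is the transitive tournament $\overrightarrow{T}$ on vertices $v_1, \dots, v_n$ with $\overrightarrow{v_i v_j} \in A(\overrightarrow{T})$ exactly when $i < j$.

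I would then simply check that $\overrightarrow{T}$ meets the MAG-extremality criterion of Theorem~\ref{theorem:mageqn}. First, $v_1$ is a source and $v_n$ is a sink, so by Proposition~\ref{prop:SnS} they lie in every MAG-set (condition~(i)). For an intermediate vertex $v_i$ with $1 < i < n$, I would verify condition~(ii) by taking $u = v_1 \in N^-(v_i)$: for every $w = v_j \in N^+(v_i)$ (so $j > i > 1$), the single arc $\overrightarrow{v_1 v_j}$ is a directed path of length $1 \le 2$ from $u$ to $w$ avoiding $v_i$. Thus every vertex of $\overrightarrow{T}$ satisfies one of (i)--(iii), so Theorem~\ref{theorem:mageqn} gives $mag(\overrightarrow{T}) = n$, whence $mag^+(K_n) = n$.

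As an alternative route (and a useful sanity check), one can argue directly without Theorem~\ref{theorem:mageqn}: in $\overrightarrow{T}$ every ordered pair of vertices that is joined by a directed arc is at distance $1$, so for $i < j$ the unique shortest $v_i$--$v_j$ directed path is the arc $\overrightarrow{v_i v_j}$ itself, and there is no directed $v_j$--$v_i$ path at all. Consequently a pair $\{v_p, v_q\}$ with $p < q$ monitors exactly the arc $\overrightarrow{v_p v_q}$ (and nothing through the nonexistent $v_q$--$v_p$ paths), so each arc $\overrightarrow{v_i v_j}$ is monitored only by $\{v_i, v_j\}$. Since every vertex of $\overrightarrow{T}$ is an endpoint of some arc, it must belong to every MAG-set, again yielding $mag(\overrightarrow{T}) = n$.

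I do not expect a genuine obstacle here; the statement is easy once Theorem~\ref{theorem:mageqn} (or Proposition~\ref{prop:SnS}) is in hand. The only points deserving a line of care are: the boundary vertices $v_1, v_n$, which are handled by Proposition~\ref{prop:SnS} rather than by conditions (ii)/(iii); the base case $n = 3$, where $\overrightarrow{T}$ is the transitive triangle, consistent with $mag^+(C_3) = 3$ from Corollary~\ref{cor:cycle}; and, if one prefers the direct argument, being explicit that ``monitored'' presupposes the existence of a shortest path in the relevant direction, so that a pair cannot monitor an arc via a direction in which no directed path exists.
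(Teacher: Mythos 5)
Your proposal is correct and follows essentially the same route as the paper: exhibit the transitive tournament and apply Proposition~\ref{prop:SnS} to $v_1, v_n$ together with condition~(ii) of Theorem~\ref{theorem:mageqn} to the intermediate vertices (the paper takes $u = v_{i-1}$ where you take $u = v_1$, an immaterial difference). Your supplementary direct argument that each arc of the transitive tournament is monitored only by its own endpoints is also valid, but it is just an added sanity check on the same construction.
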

\begin{proof}
    Let $\overrightarrow{K_n}$ be a transitive tournament with vertex set $V(\overrightarrow{K_n}) = \{v_1, \ldots, v_n\}$ and arc set $A(\overrightarrow{K_n}) = \{\overrightarrow{v_i v_j} \colon\ i,j \in [1, n] \text{ and } i < j\}$. 
    Clearly, $v_1$ is a source and $v_n$ is a sink in $\overrightarrow{K_n}$. Further, observe that for any vertex $v_i \in V(\overrightarrow{K_n}), i \in [2, n - 1]$, $v_{i - 1}$ is adjacent to all $v_j \in N^+(v_i)$, since $i - 1 < i < j$. Therefore, by Theorem~\ref{theorem:mageqn} (ii), all vertices of $\overrightarrow{K_n}$ are a part of every MAG-set and hence, $mag^+(K_n) = n$.
\end{proof}

\begin{theorem}
    If $K_n$ is a complete graph of order $n \geq 3$, then $mag^-(K_n) = n - 1$.
\end{theorem}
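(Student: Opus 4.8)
The lower bound is free: every orientation of $K_n$ is a tournament of order $n$, so Theorem~\ref{thm:tornchar} gives $mag(\overrightarrow{K_n})\ge n-1$ for all of them, hence $mag^-(K_n)\ge n-1$. The whole task is therefore to build \emph{one} orientation $\overrightarrow{K_n}$ admitting an MAG-set of size $n-1$, i.e.\ an MAG-set missing a single vertex $v$. The guiding observation I would use is that an arc $\overrightarrow{ab}$ whose two endpoints both lie in a set $M$ is automatically monitored by $a$ and $b$ (the arc itself is the unique shortest $a$--$b$ path); so if $M=V(\overrightarrow{K_n})\setminus\{v\}$, the only arcs that need attention are the arcs incident to $v$.

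Concretely, I would orient $K_n$ on vertex set $\{v,u,w_1,\dots,w_{n-2}\}$ by: $u\to v$; $v\to w_i$ for all $i$; $w_i\to u$ for all $i$; and the $\binom{n-2}{2}$ arcs among $w_1,\dots,w_{n-2}$ arbitrarily (say transitively). Then $v$ is neither a source nor a sink, and the claim to verify is that for each $i$ the distance from $u$ to $w_i$ equals $2$ and $u\to v\to w_i$ is the \emph{only} shortest $u$--$w_i$ path. Indeed there is no arc $u\to w_i$ (since $w_i\to u$), and any hypothetical length-$2$ path $u\to z\to w_i$ with $z\ne v$ would need $z=w_j$ for some $j\neq i$, forcing the non-existent arc $u\to w_j$. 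Granting this, the pair $\{u,w_i\}\subseteq M$ monitors both $\overrightarrow{uv}$ and $\overrightarrow{vw_i}$, and these are all the arcs touching $v$; every remaining arc sits inside $M$ and is monitored by its endpoints. Hence $M$ is an MAG-set and $mag(\overrightarrow{K_n})\le n-1$; combined with the lower bound this yields $mag(\overrightarrow{K_n})=n-1$ and therefore $mag^-(K_n)=n-1$.

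There is no serious obstacle here; the only delicate point is the uniqueness of the $u\to v\to w_i$ shortest path, and the design choice ``every $w_j$ beats $u$'' is precisely what rules out all competing length-$2$ detours (equivalently, in this orientation $v$ is a vertex failing every condition of Theorem~\ref{theorem:mageqn}, so the sufficiency direction there also gives $mag(\overrightarrow{K_n})<n$). As a sanity check, for $n=3$ this construction is just the directed triangle, matching $mag^-(C_3)=2$ from Corollary~\ref{cor:cycle}.
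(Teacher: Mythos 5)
Your proof is correct and is essentially the paper's own argument: the lower bound comes from Theorem~\ref{thm:tornchar}, and your orientation (with the $w_i$'s ordered transitively) is exactly the paper's $\overrightarrow{K_n'}$, obtained from the transitive tournament by flipping the arcs into $v_n$ except the one from $v_{n-1}$, with the same uniqueness-of-the-length-two-path verification. The only (minor) added observation is that the arcs among the $w_i$'s may in fact be oriented arbitrarily.
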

\begin{proof}
    Let us consider a transitive tournament $\overrightarrow{K_n}$ as in the proof of Theorem~\ref{thm:mag+}. Let $\overrightarrow{K_n'}$ be the graph obtained by flipping the arcs $\overrightarrow{v_{i} v_n}$, for $i \in \{1, \ldots, n - 2\}$ in $\overrightarrow{K_n}$, that is $\overrightarrow{K_n'}$ is the graph obtained by replacing the arcs $\overrightarrow{v_{i} v_n}$ of $\overrightarrow{K_n}$ with the arcs $\overrightarrow{v_n v_{i}}$. We claim that this tournament has an MAG-set of size $n - 1$. In particular, we claim that $\{v_1, \ldots, v_{n - 1}\}$ is an MAG-set of $\overrightarrow{K_n'}$. To see this, first observe that the arcs $\overrightarrow{v_i v_j}, i,j \in \{1, \ldots, n - 1\},$ with $i < j$ are all monitored by the vertices incident to them. Similarly, for $i \in \{1, \ldots, n - 2\}$, the vertices $v_i$ and $v_{n - 1}$ monitor the arcs $\overrightarrow{v_n v_i}$ and $\overrightarrow{v_{n - 1} v_n}$ as they lie on the unique shortest path $v_{n - 1} v_n v_i$ from $v_{n - 1}$ to $v_i$. Thus, all arcs of $\overrightarrow{K_n'}$ are monitored, and hence $mag(\overrightarrow{K_n'}) \leq n - 1$. Together with Theorem~\ref{thm:tornchar}, we have that $mag(\overrightarrow{K_n'}) = n - 1$. Hence, $mag^-(K_n) = n - 1$ and this concludes the proof.
\end{proof}

Chartrand and Zhang~\cite{chartrand2000geodetic} proved that the geodetic number of a tournament of order $n$ can take any value in the set $\{2, \ldots, n\}$. The above results for the monitoring arc-geodetic number of tournaments show a surprising contrast from the result for the geodetic number of tournaments. 

Note that checking whether a subset of vertices is a monitoring arc set can be done in polynomial time. Hence, we can enumerate in polynomial time all the MAG-sets of a given tournament by testing the $n$ subsets of vertices of size $n - 1$, where $n$ is the order of the tournament. But finding the MAG-sets of general graphs is not so easy as evidenced by the next section.

\section{Computational hardness of determining the monitoring arc-geodetic number}
\label{sec5}

In this section, we study the algorithmic complexity of two problems -- the \textsc{MAG$^+$-set} problem and the \textsc{$k$-MAG-set} problem.

\subsection{\textsc{MAG$^+$-set} problem}
For an undirected graph $G$, the orientation $\overrightarrow{G}$ of $G$ with $mag(\overrightarrow{G}) = mag^-(G)$ is obtained intuitively, by reducing the number of sources and sinks when orienting the graph. Of course, reducing the number of sources and sinks does not necessarily reduce the monitoring arc-geodetic number as shown by Theorem~\ref{theorem:mageqn}. An interesting observation is that $mag^-(G) = n$, only when it is not possible to orient $G$ such that it has a vertex that is neither a source nor a sink. Such a graph $G$ is trivially $K_2$. 

The study of the possible values that the lower monitoring arc-geodetic number can take, makes for an interesting problem, but we shall focus on the family of graphs for which $mag^+(G) = n$.

Observe that if $mag^+(G) \neq n$, then $G$ must contain an odd cycle. If not, then $G$ is bipartite, and we can orient the edges in such a way that the vertices of one part are sources and that of the other are sinks. However, identifying any other characteristics of graphs $\overrightarrow{G}$ with $mag^+(\overrightarrow{G}) \neq n$ is not so easy. In fact, let us consider the following decision problem:

\bigskip
\begin{mdframed}
\noindent
\textsc{MAG$^+$-set} problem

\noindent
\textbf{Instance:} An undirected graph $G$.

\noindent
\textbf{Question:} Is $mag^+(G) = |V(G)|$?
\end{mdframed}

In what follows, we prove that the \textsc{MAG$^+$-set} problem is NP-hard. To do this, we give a reduction from the monotone not-all-equal 3-satisfiability (\textsc{NAE $3$-SAT}) problem.

An instance $\Phi$ of the \textsc{NAE $3$-SAT} problem is given by a $3$-SAT formula with no negative literals and no literals appearing twice in a same clause, and is satisfiable if every clause contains at least one literal valued at True and at least one literal valued at False. This problem can be proven to be NP-hard, for example, using a reduction from \textsc{$3$-SAT}~\cite{moore2011nature}. 

\begin{construction}
\label{cons:mag+hard}
    We denote by $\{x_1, \ldots, x_n\}$, the literals of an instance $\Phi$ of the monotone \textsc{NAE 3-SAT} problem, and $\{C_1, \ldots, C_m\}$ its clauses. We now construct the graph $G(\Phi)$ with each literal associated with a vertex and each clause associated with a triangle. To be precise, $G(\Phi)$ is the graph defined with $V(G(\Phi)) = \{x_1, \ldots, x_n\} \cup \{c_{1, 1}, \ldots, c_{m, 1}\} \cup \{c_{1, 2}, \ldots, c_{m, 2}\} \cup \{c_{1, 3}, \ldots, c_{m, 3}\}$ and with edges $c_{i, 1}c_{i, 2}, c_{i, 1}c_{i, 3}, c_{i, 2}c_{i, 3}$ for every $i, 1\leq i\leq m$, and the edge $x_ic_{j, k}$ if $x_i$ is the $k$-th literal of the clause $C_j$ for every $i, j, k$ where $1\leq i\leq n, 1\leq j\leq m$ and $1\leq k \leq 3$. This construction is illustrated in Figure~\ref{fig:naeset}. It can easily be done in polynomial time. 
\end{construction}

\begin{figure}
    \centering
    \begin{tikzpicture}[inner sep=0.7mm, scale = 1]
        \node[draw, circle, line width=1pt, fill=black] (c11) at (0, 0) [label=below right: $c_{1,1}$]{};
        \node[draw, circle, line width=1pt, fill=black] (c12) at (2, 0) [label=below right: $c_{1,2}$]{};
        \node[draw, circle, line width=1pt, fill=black] (c13) at (1, 1.5) [label=above: $c_{1,3}$]{};
        \node[draw, circle, line width=1pt, fill=black] (c21) at (0, -3) [label=below right: $c_{2,1}$]{};
        \node[draw, circle, line width=1pt, fill=black] (c22) at (2, -3) [label=below right: $c_{2,2}$]{};
        \node[draw, circle, line width=1pt, fill=black] (c23) at (1, -1.5) [label=above: $c_{2,3}$]{};
        \node[draw, circle, line width=1pt, fill=black] (cm1) at (0, -8) [label=below right: $c_{m,1}$]{};
        \node[draw, circle, line width=1pt, fill=black] (cm2) at (2, -8) [label=below right: $c_{m,2}$]{};
        \node[draw, circle, line width=1pt, fill=black] (cm3) at (1, -6.5) [label=above: $c_{m,3}$]{};
        \node[draw, circle, line width=1pt, fill=black] (x1) at (-5, 1.5) [label=below: $x_1$]{};
        \node[draw, circle, line width=1pt, fill=black] (x2) at (-5, -0.5) [label=below: $x_2$]{};
        \node[draw, circle, line width=1pt, fill=black] (xi) at (-5, -4) [label=below: $x_i$]{};
        \node[draw, circle, line width=1pt, fill=black] (xn) at (-5, -8) [label=below: $x_n$]{};\
        \node[] (xn1) at (-5, -7){\Huge.};
        \node[] (xn2) at (-5, -6){\Huge.};
        \node[] (xn3) at (-5, -3){\Huge.};
        \node[] (xn4) at (-5, -2){\Huge.};
        \node[] (xn5) at (1, -4){\Huge.};
        \node[] (xn6) at (1, -5){\Huge.};
        \node[] (xn7) at (1, -5.75){\Huge.};
        \node[] (xn8) at (-5, -5){\Huge.};
        \node[] (xn9) at (-5, -1.25){\Huge.};
        \draw[] (c11) -- (c12);
        \draw[] (c12) -- (c13);
        \draw[] (c13) -- (c11);
        \draw[] (c21) -- (c22);
        \draw[] (c22) -- (c23);
        \draw[] (c23) -- (c21);
        \draw[] (cm1) -- (cm2);
        \draw[] (cm2) -- (cm3);
        \draw[] (cm3) -- (cm1);
        \draw[] (xi) -- (cm1);
        \draw[] (xi) -- (c23);
        \draw[] (xi) -- (c11);
    \end{tikzpicture}
    \caption{Construction of the graph $G(\Phi)$.}
    \label{fig:naeset}
\end{figure}
\begin{lemma}
\label{lem:mag+hard1}
    If $mag^+(G(\Phi)) = |V(G(\Phi))| = n + 3m$, then $\Phi$ is satisfiable.
\end{lemma}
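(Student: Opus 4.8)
### Proof Proposal

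The plan is to prove the contrapositive: if $\Phi$ is \emph{not} satisfiable, then every orientation of $G(\Phi)$ admits an MAG-set missing at least one vertex, hence $mag^+(G(\Phi)) < |V(G(\Phi))|$. The natural candidate for the omitted vertex will be one of the clause-triangle vertices $c_{i,k}$, since these lie on short cycles (the triangles $c_{i,1}c_{i,2}c_{i,3}$) and Theorem~\ref{theorem:mageqn} tells us exactly when a vertex on such a configuration is forced into every MAG-set. First I would fix the natural orientation associated to a \emph{NAE} assignment: orient a literal vertex $x_i$ as a source if $x_i$ is True and as a sink if $x_i$ is False (so each $x_i$ is a source or sink, satisfying (i) of Theorem~\ref{theorem:mageqn} trivially, and thus $x_i$ being forced in does not help us). Then within each clause triangle, orient the three triangle edges; the key point is that since the clause $C_j$ is \emph{not} not-all-equal-satisfied under this assignment is exactly the bad event we want to avoid — so the interesting direction uses that $\Phi$ is unsatisfiable to force, for \emph{every} assignment, some clause to be monochromatic, and then we exploit that monochromatic triangle.

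Concretely, suppose $\Phi$ is unsatisfiable. Take any orientation $\overrightarrow{G}$ of $G(\Phi)$. From $\overrightarrow{G}$ read off a truth assignment by declaring $x_i$ True if it has more out-neighbors than in-neighbors among clause vertices, or more carefully, I would argue that either some $x_i$ is neither a source nor a sink (in which case the degree-$3$-ish local analysis of $x_i$'s neighborhood via Theorem~\ref{theorem:mageqn} — mimicking the type analysis $(\alpha,\beta,\gamma)$ used in the proof of Theorem~\ref{thm:meg<mag-} — lets us remove $x_i$), or every $x_i$ is a source or a sink, which \emph{is} a genuine truth assignment, and unsatisfiability of $\Phi$ forces some clause $C_j$ all of whose three literal vertices are sources (or all sinks). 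In that clause triangle, say all of $x_{a}, x_{b}, x_{c}$ adjacent to $c_{j,1},c_{j,2},c_{j,3}$ are sources: then each $c_{j,k}$ receives an incoming arc from its literal vertex, and the triangle $c_{j,1}c_{j,2}c_{j,3}$ is oriented either cyclically or transitively. I would then check, using Theorem~\ref{theorem:mageqn}, that at least one triangle vertex $c_{j,k_0}$ satisfies condition (ii) or (iii): its in-neighbor (e.g.\ its literal source or a preceding triangle vertex) reaches all its out-neighbors by a path of length $\le 2$ avoiding $c_{j,k_0}$, because the other two triangle vertices together with the shared literal vertices provide these short connections. Hence $V(\overrightarrow{G})\setminus\{c_{j,k_0}\}$ need not be minimal — more precisely $c_{j,k_0}$ can be dropped, giving $mag(\overrightarrow{G}) < |V(G(\Phi))|$.

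The case analysis on the triangle's orientation (cyclic vs.\ the four transitive orientations) and on whether the relevant literal neighbors are sources or sinks is the bulk of the work; each subcase should reduce to verifying one of conditions (ii)/(iii) of Theorem~\ref{theorem:mageqn} for a well-chosen $c_{j,k_0}$, using that two triangle vertices and the at-most-three literal vertices attached to the triangle supply all needed length-$\le 2$ detours. The step I expect to be the main obstacle is handling the orientations where \emph{some} literal vertex is neither a source nor a sink: there the induced assignment is not well-defined, and I must instead argue directly that such an "internal" literal vertex $x_i$ — having an in-neighbor and an out-neighbor, both clause vertices, which are non-adjacent in $G(\Phi)$ and whose only common length-$2$ connection is through $x_i$ — actually does \emph{not} satisfy (ii) or (iii), so that Theorem~\ref{theorem:mageqn} would then (wrongly) suggest $x_i$ is removable only if no such bad pair exists; I need to show that in fact, whenever every literal is internal we can \emph{still} exhibit a removable vertex, or else route around this by a cleaner choice of which vertex to delete (a triangle vertex adjacent to such an internal $x_i$). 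Tightening exactly which vertex is deleted, and confirming the removed-vertex set is genuinely still an MAG-set (all arcs monitored), is where care is required; I would organize it as: (a) if no valid NAE assignment is encoded, delete a suitable literal or triangle vertex; (b) if a valid assignment is encoded, use unsatisfiability to find a monochromatic clause and delete a triangle vertex there.
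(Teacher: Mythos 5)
Your overall architecture (contrapositive of the paper's direct argument: every literal vertex must be a source or sink because it lies on no $3$- or $4$-cycle, and a monochromatic clause triangle yields a droppable vertex) matches the paper's proof, but the central step is logically inverted. In Theorem~\ref{theorem:mageqn}, a vertex satisfying condition (ii) or (iii) is thereby \emph{forced into every} MAG-set (that is what the necessity direction of that theorem establishes); the vertices that can be dropped are exactly those satisfying \emph{none} of (i), (ii), (iii). So your key claim, ``$c_{j,k_0}$ satisfies (ii) or (iii), hence can be dropped,'' proves the opposite of what you need. Worse, the verification you propose is false in the relevant configuration: if all three literal vertices of clause $C_j$ are sources and the triangle is oriented cyclically, then a triangle vertex $c$ has in-neighbors its literal vertex and the preceding triangle vertex, and its unique out-neighbor is the next triangle vertex; neither in-neighbor reaches that out-neighbor by a directed path of length at most $2$ avoiding $c$ (the literal vertex is adjacent to no other vertex of this triangle, and the preceding triangle vertex's only out-arc goes to $c$). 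This failure is precisely what makes $c$ droppable, so the correct argument is to verify the \emph{negation} of (ii) and (iii) for a suitable $c_{j,k}$ (equivalently, to check directly, as the paper does, that the triangle's arcs are monitored without one of its vertices), in both the cyclic and transitive orientations of the triangle.

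The same inversion is what makes your case (a) (some literal vertex $x_i$ is neither a source nor a sink) look like ``the main obstacle,'' when in fact it is the easy case and is exactly the paper's first claim run contrapositively: since no literal appears twice in a clause and each clause vertex has a unique literal neighbor, $x_i$ lies on no $3$- or $4$-cycle, so no directed detour of length at most $2$ between an in-neighbor and an out-neighbor of $x_i$ exists; hence $x_i$ satisfies none of (i)--(iii) and the sufficiency direction of Theorem~\ref{theorem:mageqn} lets you drop $x_i$ immediately. With the conditions read the right way round, your two-case plan becomes essentially the paper's proof (stated in its contrapositive form), but as written the proposal's pivotal deduction is backwards and does not go through.
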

\begin{proof}
    Let $mag^+(G(\Phi)) = |V(G(\Phi))| = n + 3m$. Hence, there exists an orientation $\overrightarrow{G}$ of $G(\Phi)$ such that $mag(\overrightarrow{G}) = V(\overrightarrow{G}) = n + 3m$.

    \noindent \textit{Claim:} $x_i$ is either a source or a sink in $\overrightarrow{G}$, for all $1 \leq i \leq n$.\\ 
    \noindent \textit{Proof of the claim:} Since no literal can appear more than once in a single clause, $x_i$ cannot be a part of any cycle of length 3 or 4 in $G(\Phi)$. Hence, by Theorem \ref{theorem:mageqn}, it must be that $x_i$ is either a source or a sink in $\overrightarrow{G}$. \hfill\mbox{$\diamond$}

    Let us denote by $x_{j, 1}, x_{j, 2}$ and $x_{j, 3}$, the associated literal vertices of the clause $C_j$, for $1 \leq j \leq m$.

    \noindent \textit{Claim:} For a fixed $j$, $1 \leq j \leq m$, not all of $x_{j, 1}, x_{j, 2}$ and $x_{j, 3}$ are sources, nor are all of them sinks.\\
    \noindent \textit{Proof of the claim:} Without loss of generality, suppose that all the three of $x_{j, 1}, x_{j, 2}$ and $x_{j, 3}$ are sources. Up to renaming, there are only two possible orientations of the cycle $c_{j, 1}c_{j, 2}c_{j, 3}$. If the arcs of the cycle are $\overrightarrow{c_{j, 1}c_{j, 2}}, \overrightarrow{c_{j, 2}c_{j, 3}}$ and $\overrightarrow{c_{j, 3}c_{j, 1}}$, then the choice of any two vertices amongst $c_{j, 1}, c_{j, 2}$ and $c_{j, 3}$ is enough to monitor all the arcs of the cycle, as there is no shortest path through any of $x_{j, 1}, x_{j, 2}$ or $x_{j, 3}$. This contradicts the fact that $mag(\overrightarrow{G}) = V(\overrightarrow{G})$. Hence, without loss of generality, let us assume that $\overrightarrow{c_{j, 1}c_{j, 2}},\overrightarrow{c_{j, 2}c_{j, 3}}$ and $\overrightarrow{c_{j, 1}c_{j, 3}}$ are arcs of $\overrightarrow{G}$. In this case, it is again easy to check that $c_{j, 2}$ and $c_{j, 3}$ together with $x_{j, 1}, x_{j, 2}$ and $x_{j, 3}$ are enough to monitor all the arcs of the cycle, again contradicting the value of $mag(\overrightarrow{G})$. This proves the claim. \hfill\mbox{$\diamond$}
    
\medskip
Given an orientation $\overrightarrow{G}$ of $G(\Phi)$ with $mag^+(G(\Phi)) = |V(G(\Phi))|$, we can now provide a valuation of the literals of $\Phi$. Assign $x_i$ the value True if $x_i$ is a sink, and the value False, otherwise, for $1 \leq i \leq n$. We have proven that for $i \in \{1, 2, 3\}$, not all of $x_{j,i}$ are sources nor are all of them sinks. This implies that in a clause, not all literals have the same valuation. Hence, the valuation of the literals of $\Phi$ given in this way satisfies $\Phi$.
\end{proof}

\begin{lemma}
\label{lem:mag+hard2}
    If $\Phi$ is satisfiable, then $mag^+(G(\Phi)) = |V(G(\Phi))| = n + 3m$.
\end{lemma}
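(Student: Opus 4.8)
The plan is to do the reverse direction of Lemma~\ref{lem:mag+hard1}: starting from a satisfying not-all-equal assignment of $\Phi$, build an explicit orientation $\overrightarrow{G}$ of $G(\Phi)$ and verify via Theorem~\ref{theorem:mageqn} that it is MAG-extremal, i.e. that every vertex of $\overrightarrow{G}$ falls under case (i), (ii), or (iii). The literal vertices $x_i$ are the easy part: orient every edge $x_ic_{j,k}$ \emph{away} from $x_i$ if $x_i$ is False, and \emph{towards} $x_i$ if $x_i$ is True. Then each $x_i$ becomes a source (False) or a sink (True), so all literal vertices satisfy condition (i) automatically, regardless of how the triangle edges are oriented.

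The real work is orienting the three triangle edges $c_{j,1}c_{j,2}$, $c_{j,1}c_{j,3}$, $c_{j,2}c_{j,3}$ of each clause gadget so that every $c_{j,k}$ satisfies one of (i)--(iii). Fix a clause $C_j$. By the NAE property the three literals split as either two False and one True, or one False and two True; up to symmetry say $x_{j,1}, x_{j,2}$ are False (so the arcs $\overrightarrow{x_{j,1}c_{j,1}}$, $\overrightarrow{x_{j,2}c_{j,2}}$ point into the triangle) and $x_{j,3}$ is True (so the arc $\overrightarrow{c_{j,3}x_{j,3}}$ points out). The idea is to orient the triangle \emph{cyclically}, say $\overrightarrow{c_{j,1}c_{j,2}}$, $\overrightarrow{c_{j,2}c_{j,3}}$, $\overrightarrow{c_{j,3}c_{j,1}}$. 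I will then check each of the three triangle vertices in turn. A vertex $c_{j,k}$ has exactly one in-neighbour and one out-neighbour \emph{inside} the triangle, plus the pendant literal $x_{j,k}$ which is attached either as an in-neighbour (if $x_{j,k}$ False) or an out-neighbour (if $x_{j,k}$ True). For a vertex whose literal is False and which is oriented "correctly" in the cycle, I expect the triangle in-neighbour to serve as the witness $u$ of condition (ii): one must check that from this in-neighbour there is a directed path of length $\le 2$ to \emph{every} out-neighbour of $c_{j,k}$ not through $c_{j,k}$, i.e. to the triangle out-neighbour (direct arc, length $1$, via the third triangle vertex — wait, need to be careful) and to $x_{j,k}$ if $x_{j,k}$ is True. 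Symmetrically, a vertex whose literal is True uses its triangle out-neighbour as the witness $w$ of condition (iii). I will arrange the cyclic orientation (choosing its sense according to which literals are False) so that exactly these matchings work out; a short case analysis over the two NAE-types, using the third triangle vertex as the length-$2$ intermediary where a direct arc is absent, closes it.

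Once every vertex is shown to satisfy (i), (ii) or (iii), Theorem~\ref{theorem:mageqn} gives $mag(\overrightarrow{G}) = |V(\overrightarrow{G})| = n + 3m$, hence $mag^+(G(\Phi)) = n+3m = |V(G(\Phi))|$, as required. The main obstacle I anticipate is verifying condition (ii)/(iii) at the triangle vertices: the "directed path of length at most $2$ not visiting $v$" must reach \emph{all} relevant neighbours, and since a triangle vertex $c_{j,k}$ can have up to two out-neighbours (its triangle successor and, if True, $x_{j,k}$) one has to confirm the chosen witness reaches both — the triangle successor directly and $x_{j,k}$ via the third triangle vertex, which forces the sense of the cyclic orientation to be compatible with the True/False pattern. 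Getting the bookkeeping of arc directions consistent across the two NAE-cases (and the three choices of which literal is the "odd one out") is where care is needed; everything else is immediate from the source/sink observation for literal vertices and from Theorem~\ref{theorem:mageqn}.
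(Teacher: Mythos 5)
Your setup for the literal vertices is the same as the paper's (orient so that False literals are sources and True literals are sinks, so they satisfy condition (i) of Theorem~\ref{theorem:mageqn}), but the choice you commit to for the clause gadgets --- orienting each triangle \emph{cyclically} --- does not work, in either sense of the cycle. Concretely, orient the triangle as $\overrightarrow{c_{j,1}c_{j,2}}, \overrightarrow{c_{j,2}c_{j,3}}, \overrightarrow{c_{j,3}c_{j,1}}$ and look at $c_{j,2}$. It has an in-neighbour ($c_{j,1}$) and an out-neighbour ($c_{j,3}$) inside the triangle, so it is neither a source nor a sink and must satisfy (ii) or (iii). For (ii) the witness $u\in N^-(c_{j,2})\subseteq\{c_{j,1},x_{j,2}\}$ must reach $c_{j,3}$ by a directed path of length at most $2$ avoiding $c_{j,2}$. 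But the only in-neighbours of $c_{j,3}$ are $c_{j,2}$ itself and (possibly) $x_{j,3}$; if $x_{j,3}$ is a source it has no incoming arcs, so no length-$2$ path can end with $\overrightarrow{x_{j,3}c_{j,3}}$, and the direct arcs $\overrightarrow{c_{j,1}c_{j,3}}$ and $\overrightarrow{x_{j,2}c_{j,3}}$ do not exist (the triangle arc points the other way, and a literal cannot occur twice in a clause, so $x_{j,2}\neq x_{j,3}$). Condition (iii) fails by the dual argument, since $c_{j,1}\in N^-(c_{j,2})$ cannot reach any out-neighbour of $c_{j,2}$ without passing through $c_{j,2}$. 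The same holds at all three triangle vertices, so $V(\overrightarrow{G})\setminus\{c_{j,2}\}$ is an MAG-set and your orientation is not extremal. Your hedge ``choosing its sense according to which literals are False'' does not save this: reversing the cycle runs into the identical obstruction. The underlying reason is exactly the one you flag mid-proof and then wave away: in a cyclic triangle there is no shortcut arc, so the only length-$\le 2$ directed path from a vertex's triangle predecessor to its triangle successor goes through that vertex.

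The paper instead orients each triangle \emph{transitively}. With, say, $x_{j,1}$ True (a sink) and $x_{j,2},x_{j,3}$ False (sources), it takes the arcs $\overrightarrow{c_{j,1}c_{j,2}}, \overrightarrow{c_{j,1}c_{j,3}}, \overrightarrow{c_{j,2}c_{j,3}}$. Then $c_{j,1}$ is a source and $c_{j,3}$ is a sink of the whole graph (condition (i)), and the single middle vertex $c_{j,2}$ has $N^+(c_{j,2})=\{c_{j,3}\}$ and $c_{j,1}\in N^-(c_{j,2})$ with the shortcut arc $\overrightarrow{c_{j,1}c_{j,3}}$ realizing condition (ii). The NAE hypothesis is what guarantees the pendant arcs at $c_{j,1}$ and $c_{j,3}$ can be made to point out of and into the triangle respectively, so that these two vertices really are a source and a sink; the case with two True and one False literal is handled by reversing all arcs of the gadget, where $c_{j,2}$ satisfies (iii) instead. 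You should replace the cyclic orientation by this transitive one; the rest of your outline (literal vertices via (i), conclusion via Theorem~\ref{theorem:mageqn}) then goes through.
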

\begin{proof}
Let $C$ be a satisfiable truth assignment of $\Phi$. We construct the graph $G(\Phi)$ corresponding to $\Phi$ as per Construction~\ref{cons:mag+hard}. We obtain an orientation $\overrightarrow{G}$ of $G(\Phi)$ as follows. For all $i, 1 \leq i \leq n$, $x_i$ is oriented to be a sink if $X_i \in C$ is True, and it is oriented to be a source otherwise. Since the $x_i$'s are non-adjacent, this is always possible. 

Let us denote by $x_{j, 1}, x_{j, 2}$ and $x_{j, 3}$, the associated literal vertices of the clause $c_{j, 1}, c_{j, 2}, c_{j, 3}$, for any $j, 1 \leq j \leq m$. Since $C$ satisfies $\Phi$, not all of $x_{j, 1}, x_{j, 2}$ and $x_{j, 3}$ are sources, nor are all of them sinks. Without loss of generality, let us assume that $x_{j, 1}$ is a sink, and both $x_{j, 2}$ and $x_{j, 3}$ are sources. We orient the arcs of each cycle $c_{j, 1}, c_{j, 2}, c_{j, 3}$ as $\overrightarrow{c_{j, 1}c_{j, 2}}$, $\overrightarrow{c_{j, 1}c_{j, 3}}$ and $\overrightarrow{c_{j, 2}c_{j, 3}}$. We now claim that $mag(\overrightarrow{G}) = n + 3m$. By Proposition~\ref{prop:SnS}, the vertices $x_i$, where $1 \leq i \leq n$, $c_{j, 1}$ and $c_{j, 3}$, where $1 \leq j \leq m$, are always a part of every MAG-set. Furthermore, observe that $N^+(c_{j, 2}) = \{c_{j, 3}\}$ and that $c_{j, 1} \in N^-(c_{j, 2})$ is a vertex such that $\overrightarrow{c_{j, 1}c_{j, 3}} \in A(\overrightarrow{G})$. Hence, by Theorem~\ref{theorem:mageqn} (ii), $c_{j, 2}$ is also a part of every MAG-set and $mag^+(G(\Phi)) = mag(\overrightarrow{G}) = n + 3m$.
\end{proof}

\begin{theorem}\label{thm:hardmag+}
    \textsc{MAG$^+$-set} is NP-hard.
\end{theorem}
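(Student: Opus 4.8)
The plan is to observe that Construction~\ref{cons:mag+hard} together with Lemmas~\ref{lem:mag+hard1} and~\ref{lem:mag+hard2} already constitutes a complete polynomial-time many-one reduction from monotone \textsc{NAE 3-SAT} to \textsc{MAG$^+$-set}, so the proof amounts to assembling these pieces and checking the bookkeeping.

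First I would recall that monotone \textsc{NAE 3-SAT} is NP-hard (via the reduction from \textsc{3-SAT} mentioned above). Given an instance $\Phi$ with $n$ literals $\{x_1,\dots,x_n\}$ and $m$ clauses $\{C_1,\dots,C_m\}$, the graph $G(\Phi)$ of Construction~\ref{cons:mag+hard} has $|V(G(\Phi))| = n + 3m$ vertices and $O(n+m)$ edges, and is produced in polynomial time, so $\Phi \mapsto G(\Phi)$ is a legitimate polynomial-time reduction candidate.

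Next I would invoke the two lemmas to establish correctness. Lemma~\ref{lem:mag+hard2} gives that if $\Phi$ is satisfiable then $mag^+(G(\Phi)) = |V(G(\Phi))| = n + 3m$, i.e.\ $G(\Phi)$ is a yes-instance of \textsc{MAG$^+$-set}; and Lemma~\ref{lem:mag+hard1} gives the converse, that if $mag^+(G(\Phi)) = |V(G(\Phi))|$ then $\Phi$ is satisfiable. Together these yield the equivalence that $\Phi$ is a satisfiable monotone \textsc{NAE 3-SAT} instance if and only if $mag^+(G(\Phi)) = |V(G(\Phi))|$, which is exactly what a correct reduction requires. Since monotone \textsc{NAE 3-SAT} is NP-hard and the reduction runs in polynomial time, \textsc{MAG$^+$-set} is NP-hard, completing the proof.

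There is no genuine remaining obstacle at this stage, since the conceptual content lives inside the lemmas: Lemma~\ref{lem:mag+hard1} crucially uses that each literal vertex $x_i$ lies in no triangle or $4$-cycle of $G(\Phi)$ (because literals do not repeat within a clause), so Theorem~\ref{theorem:mageqn} forces $x_i$ to be a source or a sink, after which the clause-triangle analysis forces the not-all-equal condition; and Lemma~\ref{lem:mag+hard2} uses the explicit orientation of each clause triangle so that $c_{j,2}$ satisfies condition~(ii) of Theorem~\ref{theorem:mageqn}. The one point worth flagging is that \textsc{MAG$^+$-set} as stated is not obviously in NP — certifying $mag^+(G) = |V(G)|$ appears to require ruling out every orientation — which is why the statement claims only NP-hardness rather than NP-completeness; I would not attempt to place the problem in NP.
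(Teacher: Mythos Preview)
Your proposal is correct and follows essentially the same approach as the paper: both assemble Construction~\ref{cons:mag+hard} with Lemmas~\ref{lem:mag+hard1} and~\ref{lem:mag+hard2} into a polynomial-time reduction from monotone \textsc{NAE 3-SAT}, noting that $|V(G(\Phi))|=n+3m$. Your additional remarks on why only NP-hardness (and not NP-completeness) is claimed are accurate and go slightly beyond what the paper states explicitly, but the core argument is identical.
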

\begin{proof}
    Lemmas~\ref{lem:mag+hard1} and \ref{lem:mag+hard2} prove that an instance of monotone \textsc{NAE 3-SAT} of size $n + m$ can be reduced to an instance of \textsc{MAG$^+$-set} problem of size $n + 3m$. Hence, the problem is NP-hard.
\end{proof}

This result actually underlines the impact of the criteria expressed in Theorem \ref{theorem:mageqn}. Indeed, the easiest way to increase the monitoring arc-geodetic number would be to increase the number of sources and sinks. However, rendering a graph MAG-extremal only with sources and sinks implies that the graph is bipartite, and this can be checked in polynomial time. The statement of Theorem \ref{thm:hardmag+} instead proves that identifying the worst choice of arcs for odd cycles in non-oriented graphs is indeed a hard problem.

\subsection{\textsc{$k$-MAG-set} problem}
Haslegrave~\cite{HASLEGRAVE202379} proved the following theorem for the monitoring edge-geodetic set of graphs:

\begin{theorem}
    The decision problem of determining for a graph $G$ and natural number $k$ whether $meg(G) \leq k$ is NP-complete.
\end{theorem}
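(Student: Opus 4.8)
The plan is to establish both membership in NP and NP-hardness.

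\medskip
\noindent\textbf{Membership in NP.} Given a graph $G$, an integer $k$, and a candidate set $M\subseteq V(G)$ with $|M|\le k$, I would verify in polynomial time that $M$ is an MEG-set. The key observation is that a pair $\{x,y\}$ of distinct vertices monitors an edge $e$ if and only if $d_{G-e}(x,y)>d_G(x,y)$ (allowing $d=\infty$ when the endpoints get disconnected): an edge lies on every shortest $x$--$y$ path exactly when deleting it destroys all length-$d_G(x,y)$ paths between $x$ and $y$. All of the required distances are produced by breadth-first searches in $G$ and in the graphs $G-e$, so checking that every edge of $G$ is monitored by some pair of vertices of $M$ is polynomial; hence the problem lies in NP.

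\medskip
\noindent\textbf{NP-hardness.} I would give a polynomial reduction from a convenient NP-hard constraint problem, say \textsc{3-SAT} (or the monotone \textsc{NAE 3-SAT} problem already exploited in this paper), producing from a formula $\Phi$ a graph $G(\Phi)$ and a threshold $k$ such that $meg(G(\Phi))\le k$ if and only if $\Phi$ is satisfiable. The building blocks would be: (i) pendant vertices or short pendant paths, which by Lemma~\ref{lem:megsimp} must lie in every MEG-set, used to fix a rigid ``base cost'' $k_0$; (ii) a variable gadget for each $x_i$ containing two distinguished vertices $t_i,f_i$ and a critical edge that, by a local construction, is monitored only when exactly one of $t_i,f_i$ is placed in $M$ (this encodes a truth value); (iii) a clause gadget for each $C_j$ --- a small subgraph designed with care, since the edges of a triangle are always monitored by their own endpoints --- containing a critical edge whose only long monitoring pairs route through a selected literal vertex, so that this edge is monitored if and only if at least one literal of $C_j$ has been chosen. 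Setting $k=k_0+n$ then forces any MEG-set of size at most $k$ to spend exactly one vertex per variable gadget while still covering every clause gadget, which is precisely a satisfying assignment; conversely, a satisfying assignment yields an MEG-set of size $k_0+n$. Combined with the membership argument, this gives NP-completeness.

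\medskip
\noindent\textbf{Main obstacle.} The delicate point is that ``$e$ is monitored by some far-away pair'' is a global condition: shortest paths can leave a gadget, so distances between distant parts of $G(\Phi)$ could create unintended monitoring pairs, or shorten some path and kill an intended one. The bulk of the work is therefore to pad the construction --- long connector paths between gadgets, sufficiently long literal--clause attachments --- so that, for every critical edge, all of its shortest-path witnesses stay confined to its own gadget, and then to prove a structural lemma that a minimum MEG-set of $G(\Phi)$ consists exactly of the forced pendant vertices, exactly one of $\{t_i,f_i\}$ per variable, plus only what is needed for the clauses, with no slack. Making this ``no spurious monitoring'' analysis airtight is the heart of the reduction.
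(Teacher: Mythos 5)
First, note that the paper does not prove this theorem at all: it is quoted as a known result of Haslegrave~\cite{HASLEGRAVE202379}, so there is no in-paper argument to compare yours against. Judged on its own terms, your membership-in-NP argument is complete and correct: the criterion ``$\{x,y\}$ monitors $e$ if and only if $d_{G-e}(x,y)>d_G(x,y)$'' is exactly the right polynomial-time test (it is the same test the authors invoke for the oriented version in the proof of Theorem~\ref{thm:hardness}), and the BFS-based verification over all pairs of $M$ and all edges is clearly polynomial.

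The NP-hardness half, however, is a genuine gap rather than a proof. You describe what the variable and clause gadgets ``would'' accomplish --- a critical edge monitored if and only if exactly one of $t_i,f_i$ is selected, a clause edge monitored if and only if some chosen literal vertex witnesses it --- but you never exhibit these gadgets, never fix the threshold $k$ beyond the schematic $k_0+n$, and never rule out that the budget is spent differently (for instance, placing both of $t_i,f_i$ in $M$ for some variables and neither for others, or letting forced pendant vertices or clause-gadget vertices accidentally do a variable gadget's job from far away). You yourself identify the ``no spurious monitoring'' analysis --- confining all shortest-path witnesses of each critical edge to its own gadget --- as ``the heart of the reduction,'' and that analysis is precisely what is absent. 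Every feature that makes such reductions delicate (triangle edges are monitored by their own endpoints, degree-one vertices are forced into every MEG-set by Lemma~\ref{lem:megsimp} and can create unintended monitoring pairs, and shortest paths freely leave gadgets) is named but not resolved. As written, the hardness direction is a research plan rather than an argument, so the proposal cannot be accepted as a proof of NP-completeness.
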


In what follows, we prove that the oriented analogue of this problem is also NP-complete. Formally, we define MAG-set to be the following decision problem:

\bigskip
\begin{mdframed}[nobreak=true]
\noindent
\textsc{$k$-MAG-set} problem

\noindent
\textbf{Instance:} An oriented graph $\overrightarrow{G}$ and an integer $2 \leq k \leq |V(\overrightarrow{G})|$.

\noindent
\textbf{Question:} Does there exist an MAG-set of $\overrightarrow{G}$ of size $k$?
\end{mdframed}

\begin{figure}
    \centering
    \begin{tikzpicture}[inner sep=0.7mm, scale = 1]
        \node[draw, circle, line width=1pt, fill=black] (x0) at (0, 0) [label=below: $e_1$]{};
        \node[draw, circle, line width=1pt, fill=black] (x1) at (1, 0) [label=below: $e_2$]{};
        \node[draw, circle, line width=1pt, fill=black] (x2) at (2, 0) [label=below: $e_3$]{};
        \node[draw, circle, line width=1pt, fill=black] (x3) at (3, 0) [label=below: $e_4$]{};
        \node[draw, circle, line width=1pt, fill=black] (x4) at (4, 0) [label=below: $e_5$]{};
        \node[draw, circle, line width=1pt, fill=black] (S0) at (0, 2) [label=above: $v_1$]{};
        \node[draw, circle, line width=1pt, fill=black] (S1) at (1, 2) [label=above: $v_2$]{};
        \node[draw, circle, line width=1pt, fill=black] (S2) at (2, 2) [label=above: $v_3$]{};
        \node[draw, circle, line width=1pt, fill=black] (S3) at (3, 2) [label=above: $v_4$]{};
        \node[draw, circle, line width=1pt, fill=black] (S4) at (4, 2) [label=above: $v_5$]{};
        \draw[] (S0) -- (x0);
        \draw[] (S1) -- (x0);
        \draw[] (S1) -- (x3);
        \draw[] (S1) -- (x4);
        \draw[] (S2) -- (x2);
        \draw[] (S2) -- (x4);
        \draw[] (S3) -- (x1);
        \draw[] (S3) -- (x3);
        \draw[] (S4) -- (x1);
        \draw[] (S4) -- (x2);
    \end{tikzpicture}
    \caption{A representation of an instance of \textsc{Vertex Cover}}
    \label{fig:exsetcover}
\end{figure}
\begin{figure}
    \centering
    \begin{tikzpicture}[inner sep=0.7mm, scale = 1]
        \node[draw, circle, line width=1pt, fill=black] (x0) at (0, 0) [label=below right: $e_1$]{};
        \node[draw, circle, line width=1pt, fill=black!30] (x1) at (1, 0) [label=below right: $e_2$]{};
        \node[draw, circle, line width=1pt, fill=black] (x2) at (2, 0) [label=below right: $e_3$]{};
        \node[draw, circle, line width=1pt, fill=black!30] (x3) at (3, 0) [label=below right: $e_4$]{};
        \node[draw, circle, line width=1pt, fill=black!30] (x4) at (4, 0) [label=below right: $e_5$]{};
        
        \node[draw, circle, line width=1pt, fill=black] (e0) at (-.25, -1) [label=below: $g_1$]{};
        \node[draw, circle, line width=1pt, fill=black!30] (e1) at (.75, -1) [label=below: $g_2$]{};
        \node[draw, circle, line width=1pt, fill=black] (e2) at (1.75, -1) [label=below: $g_3$]{};
        \node[draw, circle, line width=1pt, fill=black!30] (e3) at (2.75, -1) [label=below: $g_4$]{};
        \node[draw, circle, line width=1pt, fill=black!30] (e4) at (3.75, -1) [label=below: $g_5$]{};

        \node[draw, circle, line width=1pt, fill=black] (f0) at (.25, -1) [label=below: $f_1$]{};
        \node[draw, circle, line width=1pt, fill=black!30] (f1) at (1.25, -1) [label=below: $f_2$]{};
        \node[draw, circle, line width=1pt, fill=black] (f2) at (2.25, -1) [label=below: $f_3$]{};
        \node[draw, circle, line width=1pt, fill=black!30] (f3) at (3.25, -1) [label=below: $f_4$]{};
        \node[draw, circle, line width=1pt, fill=black!30] (f4) at (4.25, -1) [label=below: $f_5$]{};

        \node[draw, circle, line width=1pt, fill=black] (S0) at (0, 2) [label=above left: $v_1$]{};
        \node[draw, circle, line width=1pt, fill=black!30] (S1) at (1, 2) [label=above: $v_2$]{};
        \node[draw, circle, line width=1pt, fill=black!30] (S2) at (2, 2) [label=above: $v_3$]{};
        \node[draw, circle, line width=1pt, fill=black!30] (S3) at (3, 2) [label=above: $v_4$]{};
        \node[draw, circle, line width=1pt, fill=black!30] (S4) at (4, 2) [label=above: $v_5$]{};
        
        \node[draw, circle, line width=1pt, fill=black] (a0) at (-.5, 4) [label=above: $a_1$]{};
        \node[draw, circle, line width=1pt, fill=black] (b0) at (.5, 4) [label=above: $b_1$]{};
        \node[draw, circle, line width=1pt, fill=black] (c0) at (0, 3) [label=left: $c_1$]{};
        \draw[-stealth, black] (a0) -- (c0);
        \draw[-stealth, black] (c0) -- (b0);
        \draw[-stealth, black] (c0) -- (S0);
        \draw[-stealth, black] (a0)to[in=180, out =210] (e0);
        \draw[-stealth, black] (a0)to[in=150, out =210] (e2);
        \draw[-stealth, black] (S0) -- (x0);
        \draw[-stealth, black] (S0) -- (x2);
        \draw[-stealth, black] (S0) -- (x0);
        \draw[-stealth, black] (S1) -- (x0);
        \draw[-stealth, black] (S1) -- (x3);
        \draw[-stealth, black] (S1) -- (x4);
        \draw[-stealth, black] (S2) -- (x2);
        \draw[-stealth, black] (S2) -- (x4);
        \draw[-stealth, black] (S3) -- (x1);
        \draw[-stealth, black] (S3) -- (x3);
        \draw[-stealth, black] (S4) -- (x1);
        \draw[-stealth, black] (S4) -- (x2);
                
        \draw[-stealth, black] (x0) -- (e0);
        \draw[-stealth, black!30] (x1) -- (e1);
        \draw[-stealth, black] (x2) -- (e2);
        \draw[-stealth, black!30] (x3) -- (e3);
        \draw[-stealth, black!30] (x4) -- (e4);
        \draw[-stealth, black] (x0) -- (f0);
        \draw[-stealth, black!30] (x1) -- (f1);
        \draw[-stealth, black] (x2) -- (f2);
        \draw[-stealth, black!30] (x3) -- (f3);
        \draw[-stealth, black!30] (x4) -- (f4);
    \end{tikzpicture}
    \caption{A transformation of an instance of \textsc{Vertex Cover} into an instance of $k$-MAG-set.}
    \label{fig:transVcover}
\end{figure}

\begin{theorem}
\label{thm:hardness}
    The \textsc{$k$-MAG-set} problem is NP-complete when restricted to acyclic oriented graphs with maximum degree $4$.
\end{theorem}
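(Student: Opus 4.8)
Membership in NP is immediate: as noted earlier, one can decide in polynomial time whether a given vertex subset is a MAG-set, so a MAG-set of size $k$ is a polynomial-size certificate. Moreover, since every superset of a MAG-set is a MAG-set and $V(\overrightarrow{G})$ is always one, the statement ``there is a MAG-set of size exactly $k$'' is equivalent to $mag(\overrightarrow{G})\le k$ whenever $k\le |V(\overrightarrow{G})|$; hence it suffices to control the size of a \emph{minimum} MAG-set. For hardness I would give a polynomial reduction from \textsc{Vertex Cover} restricted to graphs of maximum degree $3$ with no isolated vertices, which is NP-hard; the degree bound on the input is exactly what keeps the output of maximum degree $4$.

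Given such an instance $(H,\ell)$ with $V(H)=\{v_1,\dots,v_n\}$ and $E(H)=\{e_1,\dots,e_m\}$, build the acyclic oriented graph $\overrightarrow{G}$ (the construction underlying Figure~\ref{fig:transVcover}) as follows: for each vertex $v_i$ of $H$ create a gadget on $a_i,b_i,c_i,v_i$ with arcs $a_i\to c_i$, $c_i\to b_i$, $c_i\to v_i$; for each edge $e_j$ of $H$ create a gadget on $e_j,f_j,g_j$ with arcs $e_j\to f_j$ and $e_j\to g_j$; and whenever $v_i$ is an endpoint of $e_j$, add the arc $v_i\to e_j$ together with the ``cross arc'' $a_i\to g_j$. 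Orienting everything along the layers $a\to c\to v\to e\to\{f,g\}$ makes $\overrightarrow{G}$ acyclic, and a direct count gives maximum degree $4$: an edge vertex $e_j$ has in-degree $2$ and out-degree $2$; a vertex $v_i$ has in-degree $1$ and out-degree $\deg_H(v_i)\le 3$; $c_i$ has degree $3$; $a_i$ has out-degree $1+\deg_H(v_i)\le 4$; $g_j$ has in-degree $1+2=3$; and $b_i,f_j$ have degree $1$. The construction is clearly polynomial; set $k:=2n+2m+\ell$.

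The analysis rests on two facts, both proved via Proposition~\ref{prop:SnS} and Theorem~\ref{theorem:mageqn}. First, the $2n+2m$ sources and sinks $a_i,b_i,f_j,g_j$ are in every MAG-set by Proposition~\ref{prop:SnS}, while the vertices $c_i$, $v_i$, $e_j$ fail all three conditions of Theorem~\ref{theorem:mageqn} (the single in-neighbour of $c_i$ or $v_i$ has no escape route avoiding it, and the out-neighbours of $e_j$ are only reachable through $e_j$), so none of them is forced. Second, and this is the heart of the matter, every arc of $\overrightarrow{G}$ except the arcs $e_j\to g_j$ is already monitored by a pair of forced vertices: for $v_i\in e_j$ the unique shortest path from $a_i$ to $f_j$ is $a_i c_i v_i e_j f_j$, which monitors $a_i\to c_i$, $c_i\to v_i$, $v_i\to e_j$ and $e_j\to f_j$; the pair $a_i,b_i$ monitors $c_i\to b_i$; and $a_i,g_j$ monitors the cross arc $a_i\to g_j$. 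In contrast, the cross arcs kill every shortest path from a forced vertex through $e_j$ to $g_j$ (now $d(a_i,g_j)=1$), so $e_j\to g_j$ is monitored only by a pair $(x,g_j)$ with $x\in\{v_i,v_{i'},c_i,c_{i'},e_j\}$ for $e_j=\{v_i,v_{i'}\}$, i.e.\ only if $M$ contains one of these five vertices. As $c_i$ (resp.\ $c_{i'}$) covers exactly the same $e_j\to g_j$-constraints as $v_i$ (resp.\ $v_{i'}$), and $e_j$ covers only its own, the minimum number of non-forced vertices needed is $\min_{T\subseteq V(H)}\big(|T|+|E(H-T)|\big)=\tau(H)$, the vertex cover number. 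Hence $mag(\overrightarrow{G})=2n+2m+\tau(H)$, so $(H,\ell)$ is a yes-instance iff $\overrightarrow{G}$ has a MAG-set of size $k$.

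The main obstacle is precisely the second fact: one must check carefully that the cross arcs $a_i\to g_j$ are positioned so as to break \emph{every} parasitic monitoring of $e_j\to g_j$ by the forced sources and sinks, while still leaving all other arcs monitored and not creating unwanted shortcuts. Everything else is routine: the case analysis of shortest paths in the layered graph $\overrightarrow{G}$, the elementary identity $\min_T\big(|T|+|E(H-T)|\big)=\tau(H)$, and the boundary check $2\le k\le |V(\overrightarrow{G})|=4n+3m$ (which holds after deleting isolated vertices of $H$ and assuming $\ell\le n$).
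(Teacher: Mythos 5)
Your proposal is correct and follows essentially the same route as the paper: a reduction from \textsc{Vertex Cover} on graphs of maximum degree $3$ using exactly the gadget of Construction~\ref{cons:hardness} (the $a_i,b_i,c_i$ and $f_j,g_j$ pendants with the cross arcs $a_i\to g_j$), with the same two observations that the sources and sinks are forced and monitor everything except the arcs $e_j\to g_j$, and that monitoring those arcs amounts to hitting the sets $\{v_i,v_{i'},c_i,c_{i'},e_j\}$, which costs exactly $\tau(H)$ extra vertices. Your packaging of the two directions as the single identity $mag(\overrightarrow{G})=2n+2m+\tau(H)$, together with the explicit degree and acyclicity counts, matches the content of Lemmas~\ref{lem:hard1} and~\ref{lem:hard2} and the concluding argument of the paper.
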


To prove this result, we give a reduction from the \textsc{Vertex Cover} decision problem, which is well known to be NP-complete~\cite{karp2010reducibility}. In fact, the \textsc{Vertex Cover} problem is known to be NP-complete even when restricted to graphs with maximum degree $3$~\cite{10.1145/800119.803884}.
The vertex cover problem is defined as follows: 

\bigskip
\begin{mdframed}[nobreak=true]
\noindent
\textsc{Vertex Cover}

\noindent
\textbf{Instance}: A connected graph $G$ and an integer $k$.

\noindent
\textbf{Question}: Does there exist a subset $M\subset V(G)$ of size at most $k$ such that every edge of $G$ has at least one endpoint in $M$?    
\end{mdframed}

\medskip
Let $\Phi$ be an instance of \textsc{Vertex Cover}, given as a list of edges $e_1, \dots, e_m$ and the integer $k$. It can be encoded as a bipartite graph $G(\Phi)$ with parts $T$ and $U$, with the edges represented by vertices in $U$, the $n$ distinct vertices of $\Phi$ represented by vertices in $T$, and the edges of $G(\Phi)$ represent the incidence of a vertex and an edge. For example, the graph of Figure \ref{fig:exsetcover} is an encoding of the \textsc{Vertex Cover} problem $\{v_1v_2, v_4v_5, v_3v_5, v_2v_4, v_2v_3\}$.

\begin{construction}
\label{cons:hardness}
    We modify this representation of the \textsc{Vertex Cover} problem $\Phi$ to obtain an instance $\overrightarrow{H(\Phi)}$ of the \textsc{$k$-MAG-set} problem. First, we orient all the arcs from the vertices representing vertices $v_i$ to the vertices representing edges $e_j$. Then, for each vertex $e_i$ representing an edge, we add two new vertices $g_i$ and $f_i$ and the arcs of the form $\overrightarrow{e_ig_i}$ and $\overrightarrow{e_if_i}$. Subsequently, corresponding to each vertex $v_i$ representing a vertex, we add three new vertices $a_i, b_i$ and $c_i$ and the arcs $\overrightarrow{a_ic_i}, \overrightarrow{c_ib_i}$ and $\overrightarrow{c_iv_i}$. Finally, for every arc of the form $\overrightarrow{v_ie_j}$ in the graph $\overrightarrow{H(\Phi)}$, we add the arc $\overrightarrow{a_ig_j}$. We denote by $A$, the set of all the vertices $a_i, b_i, g_i$ and $f_i$ (note that it does not include the vertices of the type $c_i$). The size of $A$ is $|A| = 2|V(G(\Phi))|$.
\end{construction}     

    As an example of this construction, Figure~\ref{fig:transVcover} is a partial transformation of the \textsc{Vertex Cover} problem shown in Figure~\ref{fig:exsetcover} to the \textsc{$k$-MAG-set} problem.
    
\begin{lemma}
\label{lem:hard1}
    Let $\Phi$ be an instance of the \textsc{Vertex Cover} problem with $n$ vertices and $m$ edges. If $\Phi$ has a solution set of size $k$, then $\overrightarrow{H(\Phi)}$ has an MAG-set of size $k + 2n + 2m$.
\end{lemma}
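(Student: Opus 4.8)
The plan is to exhibit an explicit MAG-set. Let $S$ be a vertex cover of $\Phi$ with $|S|=k$ (pad an arbitrary smaller cover if needed), and identify $S$ with the corresponding set of vertex-vertices $v_i$ of $\overrightarrow{H(\Phi)}$. I claim that
$$M \;=\; A \;\cup\; \{\, c_i \;:\; v_i \in S \,\}$$
is an MAG-set of $\overrightarrow{H(\Phi)}$, where $A$ is the set from Construction~\ref{cons:hardness} with $|A| = 2(n+m)$; since the $c_i$ do not lie in $A$, this gives $|M| = k + 2n + 2m$, as required. I would first record the local structure of the gadget: the vertices $a_i$ are sources and the vertices $b_i, f_i, g_i$ are sinks (so $A\subseteq M$ is forced anyway by Proposition~\ref{prop:SnS}), while $c_i$, $v_i$ and $e_j$ are neither. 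I would also note the trivial monitoring principle that will be used repeatedly: if $\overrightarrow{xy}$ is an arc with $x,y\in M$, then $x,y$ monitor $\overrightarrow{xy}$, since the length-$1$ path is the unique shortest $x$–$y$ path; in particular all arcs $\overrightarrow{a_ig_j}$ are monitored.

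Next I would go through the remaining arc types and, for each, name a monitoring pair in $M$, the point being that $\overrightarrow{H(\Phi)}$ is acyclic and its in-/out-neighbourhoods are thin enough to force uniqueness of the relevant directed paths. Concretely: (a) $(a_i,b_i)$ monitors $\overrightarrow{a_ic_i}$ and $\overrightarrow{c_ib_i}$, because $a_i \to c_i \to b_i$ is the unique directed path from $a_i$ to $b_i$; (b) for each $v_i$ and each edge $e_j$ incident to it, the pair $(a_i, f_j)$ monitors $\overrightarrow{c_iv_i}$, $\overrightarrow{v_ie_j}$ and $\overrightarrow{e_jf_j}$, because the only in-neighbour of $f_j$ is $e_j$, the only in-neighbours of $e_j$ are its two endpoint-vertices, each endpoint-vertex has its $c$-vertex as its only in-neighbour, and each such $c$-vertex has its $a$-vertex as its only in-neighbour — and $a_i$ cannot reach $f_j$ via any $g$-vertex since those are sinks — so $a_i \to c_i \to v_i \to e_j \to f_j$ is the unique shortest $a_i$–$f_j$ path (of length $4$); (c) for each edge $e_j$, choose an endpoint $v_i\in S$ (this is exactly where the vertex-cover hypothesis enters), so $c_i\in M$, and then $(c_i,g_j)$ monitors $\overrightarrow{c_iv_i}$, $\overrightarrow{v_ie_j}$ and $\overrightarrow{e_jg_j}$, since $c_i \to v_i \to e_j \to g_j$ is the unique shortest $c_i$–$g_j$ path. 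Note that $\overrightarrow{a_ig_j}$ being an arc is precisely why we cannot replace $c_i$ by $a_i$ in (c), which is what makes the arcs $\overrightarrow{e_jg_j}$ the "hard" ones that force a cover.

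Collecting (a)–(c) together with the trivial principle, every arc of $\overrightarrow{H(\Phi)}$ is monitored by some pair of vertices of $M$, so $mag(\overrightarrow{H(\Phi)}) \le |M| = k + 2n + 2m$, which is the statement of the lemma. I expect the only genuine work to be the uniqueness-of-shortest-path verifications in steps (b) and (c); each follows from a short case analysis of the out- and in-neighbourhoods inside the gadget together with acyclicity, but it must be written carefully, since the correctness of the whole reduction hinges on the arc $\overrightarrow{a_ig_j}$ spoiling the pair $(a_i,g_j)$ for the arc $\overrightarrow{e_jg_j}$ while every other arc remains monitorable using $A$ alone.
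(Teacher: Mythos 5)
Your proof is correct and follows essentially the same route as the paper: force $A$ into the set via Proposition~\ref{prop:SnS}, check that every arc except those of the form $\overrightarrow{e_jg_j}$ is already monitored using $A$ alone, and then use one additional vertex per cover element to monitor $\overrightarrow{e_jg_j}$ via a unique shortest path ending at $g_j$. The only (immaterial) difference is that you add the vertices $c_i$ for $v_i$ in the cover, whereas the paper adds the vertices $v_i$ themselves; both give a set of size $k+2n+2m$ and both yield a unique shortest path to $g_j$ through $\overrightarrow{e_jg_j}$.
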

\begin{proof}
Let $\overrightarrow{H(\Phi)}$ be constructed from $G(\Phi)$ as described in Construction~\ref{cons:hardness}. Here, $|A| = 2n + 2m$ and all the vertices of $A$ are either sinks or sources. Hence, by Proposition~\ref{prop:SnS}, the vertices of $A$ will always be part of every MAG-set. This also implies that the arcs of the form $\overrightarrow{a_ic_i}, \overrightarrow{c_ib_i}$ are always monitored by $a_i$ and $b_i$, and so are the arcs of the form $\overrightarrow{a_ig_j}$, since they are incident to two elements of the MAG-set. The arcs of the form $\overrightarrow{c_iv_i}, \overrightarrow{v_ie_j}, \overrightarrow{e_jf_j}$ are always monitored by $a_i$ and $f_j$. This means that only the arcs of the form $\overrightarrow{e_ig_i}$ are not monitored by the vertices of $A$.

Suppose we have a solution set $C$ for $\Phi$, that is $C = \{v_{i_1}, \dots, v_{i_k}\}$ such that for each $i\in \{1, \dots, m\}$, there exists $\alpha$ such that $e_i$ is incident to $v_{i_\alpha}$. We claim that $A\cup C$ is an MAG-set of $\overrightarrow{H(\Phi)}$, where we identify the elements of $C$ to their corresponding vertex in $\overrightarrow{H(\Phi)}$, and this set is of size $k + 2n +2m$. Recall that all the arcs of $\overrightarrow{H(\Phi)}$ except those of the form $\overrightarrow{e_jg_j}$ are monitored by the vertices of $A$. Let us consider $j\in \{1, \dots, n-1\}$. There exists $\alpha\in \{1, \dots, k\}$ such that $e_j$ is incident to $v_{i_\alpha}$ where $v_{i_\alpha}\in C$. There is a unique shortest path from $v_{i_\alpha}$ to $g_j$, and it contains $\overrightarrow{e_jg_j}$, thus monitoring it.
\end{proof}

\begin{lemma}
\label{lem:hard2}
    Let $\overrightarrow{H(\Phi)}$ be an oriented graph obtained from $G(\Phi)$ by Construction~\ref{cons:hardness} with $|V(H(\Phi))| = 4n + 3m$. If $\overrightarrow{H(\Phi)}$ has an MAG-set of size $k + 2n + 2m$, then $\Phi$ has a solution set of size $k$.
\end{lemma}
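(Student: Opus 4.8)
\textit{Proof proposal.} The plan is to reverse-engineer a vertex cover of $\Phi$ from the MAG-set, exploiting the ``shortcut'' arcs $\overrightarrow{a_ig_j}$ that were placed in Construction~\ref{cons:hardness} precisely for this purpose. First I would record, exactly as at the start of the proof of Lemma~\ref{lem:hard1}, that every vertex of $A$ is a source or a sink of $\overrightarrow{H(\Phi)}$ (each $a_i$ is a source, and each $b_i$, $g_i$, $f_i$ is a sink), so by Proposition~\ref{prop:SnS} every MAG-set contains $A$. Hence if $M$ is an MAG-set with $|M| = k + 2n + 2m = k + |A|$, then $M' := M \setminus A$ has size exactly $k$ and is contained in $\{c_1,\dots,c_n\}\cup\{v_1,\dots,v_n\}\cup\{e_1,\dots,e_m\}$, which are the only vertices of $\overrightarrow{H(\Phi)}$ that are neither sources nor sinks.

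The heart of the argument is the following claim about the arcs $\overrightarrow{e_jg_j}$: for each $j$, a pair $\{x,y\}$ monitors $\overrightarrow{e_jg_j}$ only if $\{x,y\}=\{z,g_j\}$ with $z$ lying in $\{e_j\}\cup\{v_i : v_ie_j\in E(G(\Phi))\}\cup\{c_i : v_ie_j\in E(G(\Phi))\}$. To prove this, since $g_j$ is a sink, any monitoring pair must have $g_j$ as a member, so the other member $z$ must admit a directed path to $g_j$ passing through $e_j$; the only vertices able to reach $e_j$ are $e_j$ itself, the neighbors $v_i$ of $e_j$, the corresponding $c_i$, and the corresponding $a_i$. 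For $z\in\{e_j,v_i,c_i\}$ one checks directly that the unique shortest $z$-to-$g_j$ path is $e_jg_j$, $v_ie_jg_j$, or $c_iv_ie_jg_j$ respectively, so $\overrightarrow{e_jg_j}$ is genuinely monitored; but for $z=a_i$ the arc $\overrightarrow{a_ig_j}$ is a length-$1$ shortest path avoiding $\overrightarrow{e_jg_j}$, so $a_i$ monitors $\overrightarrow{e_jg_j}$ with no partner. This last observation --- the shortcut arc destroys the ``cheap'' monitoring option --- is the delicate point I expect to be the main obstacle, and it is exactly what the extra arcs of Construction~\ref{cons:hardness} were designed to enforce.

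With the claim established, the rest is bookkeeping. For each edge $e_j$ of $\Phi$ the arc $\overrightarrow{e_jg_j}$ must be monitored by $M$, so $M'$ contains a vertex $z_j$ of the form above. I would then define $S\subseteq V(\Phi)$ by replacing each $v_i\in M'$ and each $c_i\in M'$ by the vertex $v_i$, and each $e_j\in M'$ by an arbitrarily chosen endpoint of the edge $e_j$; this gives $|S|\le |M'| = k$. Finally one checks that $S$ is a vertex cover of $\Phi$: for each $j$, if $z_j$ is $v_i$ or $c_i$ with $v_i$ incident to $e_j$ then $v_i\in S$ covers $e_j$, while if $z_j=e_j$ then the chosen endpoint of $e_j$ lies in $S$ and covers $e_j$. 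Thus $\Phi$ admits a vertex cover of size at most $k$, which (padding with arbitrary vertices, and assuming as we may that $k\le n$) yields a solution set of size exactly $k$, completing the proof.
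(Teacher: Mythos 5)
Your proposal is correct and follows essentially the same route as the paper's proof: force $A$ into $M$ via Proposition~\ref{prop:SnS}, observe that a pair monitoring $\overrightarrow{e_jg_j}$ must consist of $g_j$ and one of $e_j$, $v_i$, or $c_i$ (with $a_i$ excluded precisely because of the shortcut arc $\overrightarrow{a_ig_j}$), and then project each such witness to an incident vertex $v_i$ to obtain a vertex cover of size at most $k$. The only cosmetic issue is the phrasing ``$a_i$ monitors $\overrightarrow{e_jg_j}$ with no partner,'' which should read that the pair $\{a_i,g_j\}$ fails to monitor that arc; the logic you apply is the intended one.
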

\begin{proof}
     Suppose $M$ is an MAG-set of $\overrightarrow{H(\Phi)}$ of size $k + 2n + 2m$. We shall build $M'$ a solution for $\Phi$ of size $k$. As noted before, $M$ must contain all the vertices of $A$, leaving exactly $k$ other vertices. Since $M$ is an MAG-set, for any $j\in \{1, \dots, n \}$, the arc $\overrightarrow{e_jg_j}$ is monitored. This implies that there exist some vertices $u, v\in V(H(\Phi))$ such that $u, v\in M$ and a shortest path from $u$ to $v$ akin to $u \ldots e_j g_j \ldots v$. Note that $u$ may be equal to $e_j$ and since $g_j$ is a sink, it must be that $v=g_j$. Moreover, $u$ cannot be a vertex of the type $a_i$ for any index $i$, because the arc $a_ig_j$ is always a shorter path, hence $u \notin A$. This implies that $u=e_j$ or $u\in \{v_i, c_i\}$ for some index $i$. If $u=e_j$, then we choose an $i$ such that $\overrightarrow{v_ie_j}$ is an arc of $\overrightarrow{H(\Phi)}$ and add $v_i$ to $M'$. Note that there can be multiple such $i$, but we fix an arbitrary one. If $u=c_i$, then we simply add the corresponding $v_i$ to $M'$. If $u=v_i$, we keep the same $v_i$. Either way, we add $v_i$ to $M'$. This way, $|M'| \leq |M| - |A|$, hence is of size $k$. Moreover, our choice ensures that for every $e_j$, there is a vertex $v_i \in M'$ such that $\overrightarrow{v_ie_j}$ is an arc of $\overrightarrow{H(\Phi)}$. That is, for every edge $e_j$ in $\Phi$, there is an incidence vertex $v_i$ in $M'$. Hence, $M'$ is a vertex cover of $\Phi$.
\end{proof}

\begin{proof}[Proof of Theorem~\ref{thm:hardness}]
    A certificate for the MAG-set on a graph $\overrightarrow{G}$ is a subset of $V(\overrightarrow{G})$. For every pair of vertices $u$ and $v$ in the certificate, whether they monitor an arc $\overrightarrow{a}$ can be checked in polynomial time by finding and comparing the shortest distances between $u$ and $v$ in $\overrightarrow{G}$ and in $\overrightarrow{G} - \overrightarrow{a}$. Hence, the problem lies in NP. Lemmas~\ref{lem:hard1} and~\ref{lem:hard2} prove that an instance $\Phi$ of \textsc{Vertex Cover} of size at most $n + m$, and a solution of size $k$, can be reduced to an instance of \textsc{$k$-MAG-set} of size $4n + 3m$ with a solution of size $k + 2n + 2m$. Hence, the problem is NP-complete. To complete the proof of the theorem, observe that if $\Phi$ is a \textsc{Vertex Cover} instance restricted to graphs with maximum degree $3$, then $\overrightarrow{H(\Phi)}$ is an instance of \textsc{$k$-MAG-Set} of maximum degree $4$, and that there are neither any oriented cycles in $\Phi$, nor does Construction~\ref{cons:hardness} create any oriented cycles in $\overrightarrow{H(\Phi)}$.
\end{proof}

\section{Conclusions}
\label{sec6}
In this work, we have defined and initiated the study of monitoring arc-geodetic sets of graphs. We conclude by mentioning several directions with scope for carrying out further research and explorations.
\paragraph{Operations on graphs:}
Exploring the effects of graph operations like vertex deletion, contraction of an arc, etc. on the monitoring arc-geodetic number of the graph, would be an interesting prospect. To construct the $k$th power $\overrightarrow{G}^k$ of a graph $\overrightarrow{G}$, we add an arc between vertices at a distance of at most $k$. Consequently, vertices in an MAG-set of $\overrightarrow{G}$ might be connected via a shorter path in $\overrightarrow{G}^k$ and hence monitor fewer arcs in $\overrightarrow{G}^k$, potentially increasing the monitoring arc-geodetic number. Investigating the existence of bounds on $mag(\overrightarrow{G})$ in terms of $mag(\overrightarrow{G}^k)$ is another intriguing area of research. 

We can also ask about the relation between the monitoring arc-geodetic number of graphs and their Cartesian product, categorical product, tensor product, etc. Such studies hold promise due to the existence of works like~\cite{foucaud2024monitoring,HASLEGRAVE202379} which have addressed and studied analogous questions on monitoring edge-geodetic sets.

\paragraph{Other related parameters:} Geodetic sets, edge-geodetic sets, strong edge-geodetic sets, distance edge monitoring sets etc., are distance distinguishing parameters that are related to the monitoring edge-geodetic sets~\cite{foucaud2024monitoring}. While research on geodetic sets for oriented graphs is well-established~\cite{chartrand2000geodetic,lu2007geodetic,chang2004geodetic,dong2009upper,hung2006hull,hung2009hull,araujo2022hull,farrugia2005orientable}, the definition of the oriented analogues of the other parameters and the study of their properties is an open area of research. In particular, investigating the inter-relationships of the oriented versions of these parameters, especially with the monitoring arc-geodetic number, is of significant interest. 

\paragraph{Graphs with specific properties:} Construction~\ref{cons:mag<mag-} gave us a family of graphs where the monitoring edge-geodetic number and the lower monitoring arc-geodetic number differed by one. This prompts the question, can we find examples of families of graphs where this difference can be arbitrarily large? To be precise, we can ask the following question:

\begin{question}
For any positive integer $j$, $j \geq 3$, can we find a connected graph $G_j$ such that $mag^-(G_j) - meg(G_j) \geq j$?
\end{question}

We can also explore the following intriguing question about the existence of graphs with specific monitoring arc-geodetic numbers.

\begin{question}
    For positive integers $n,m$ and $k$ with $k \geq 2$ and $m \leq n(n - 1)/2$, does there exist an oriented graph $\overrightarrow{G}$ such that $|V(\overrightarrow{G})| = n$, $|A(\overrightarrow{G})| = m$ and $mag(\overrightarrow{G}) = k$?
\end{question}

Determining the values of $n, m$ and $k$ for which the above question has an answer is non-trivial. For instance, our study on tournaments has revealed that if $m = n(n - 1)/2$, then $k$ must be either $n - 1$ or $n$; the graph cannot exist for any other values of $k$ under these conditions.

This question can be addressed for some specific values of $n, m$ and $k$ by considering a balanced binary tree on $n$ vertices and arcs oriented from the root to the leaves. The monitoring arc-geodetic number of this graph is one more than the number of leaves. By adjusting the orientations of some arcs as desired, oriented graphs with different values of $m$ and $k$ are obtained for fixed $n$. Studying the bounds of $m$ and $k$ for which such a construction is feasible could also be a viable direction of research.

The following question can be considered as an intermediate value problem for monitoring arc-geodetic numbers of graphs.
\begin{question}
    For every connected graph $G$ and positive integer $k$ with $mag^-(G) \leq k \leq mag^+(G)$, does there exist an orientation $\overrightarrow{G}$ of $G$ such that $mag(\overrightarrow{G}) = k$?
\end{question}
In fact, the answer to this question is in the negative since Proposition~\ref{prop:magcycle} proves that for $n \geq 3, S(C_n) = \{3\} \cup \{2k \colon 2 \leq 2k \leq n\}$. But it is interesting to study whether cycles are the only such exceptional family. Answering such questions will provide valuable insights into properties of monitoring arc-geodetic sets.

\paragraph{Relation between $mag^+$ and $mag^-$:} For an undirected graph $G$, the orientations with monitoring arc-geodetic number equal to $mag^-(G)$ and $mag^+(G)$, could be considered, in some sense, to be the best and worst orientations respectively, of $G$. Therefore, investigating graphs with high $mag^-$ and low $mag^+$ values is of significant interest. For instance, the only graphs $G$ with $mag^-(G) = 2$ are paths and cycles. But what can we say about the graphs with $mag^-(G) = 3$?

\begin{question}
    Characterize the undirected graphs $G$ with $mag^-(G) = 3$.
\end{question}

To solve this question, one could draw inspiration from~\cite{chartrand2000geodetic} where a similar study has been carried out for lower geodetic numbers.

MAG-extremal graphs and the \textsc{MAG$^+$-set} problem study the case when $mag^+(G) = n$. This prompts the following question:

\begin{question}
    Characterize the undirected graphs $G$ with the property $mag^-(G) = |V(G)| - 1$.
\end{question}
A related question which could help in the study of this problem is the following.

\begin{question}
    For an oriented graph $\overrightarrow{G}$ with $|V(\overrightarrow{G})| \geq 3$, what are the properties of a vertex $v \in V(\overrightarrow{G})$ such that $v$ is never a part of any minimal MAG-set of $\overrightarrow{G}$?
\end{question}
Identifying the properties of such a vertex $v$ will help determine the graphs with $mag^+ = n - 1$.

We know by definition that trivial bounds exist for $mag^+$ and $mag^-$ of a graph, that is, 
\begin{equation}
\label{eq:conc}
    2 \leq mag^-(G) \leq mag^+(G) \leq n.
\end{equation}
Improving this inequality is a nice challenge. For a graph $G$ of order $n$, define $h(G) = mag^+(G) - mag^-(G)$. We know from Equation~\ref{eq:conc} that $0 \leq h(G) \leq n - 2$. This leads to the following questions.

\begin{question}
    Are there arbitrarily large graphs $G$ with the property that $mag^-(G) = mag^+(G)$?
\end{question}

The graph consisting of a single edge trivially satisfies this problem, but we have seen in this paper that several classes of graphs such as paths, cycles, tournaments, etc. do not. If one can show that for any graph $G$, we can find two orientations with different monitoring arc-geodetic numbers, then the above question will be answered in the negative. In fact, it is known~\cite{farrugia2005orientable} that there are no such graphs with respect to the upper and lower orientable geodetic numbers defined by Chartrand and Zhang~\cite{chartrand2000geodetic} and a similar answer can be expected for our question too.

\begin{question}
    What is the minimum value of $h(G)$ for an undirected graph $G$?
\end{question}

For example, for paths, we have seen that $h(G) = n - 2$. Studying the value of $h(G)$ across various graph classes, and identifying graphs for which the value of $h(G)$ is minimum, could provide valuable insights.

\paragraph{Algorithmic problems:} The optimisation version of the \textsc{$k$-MAG-set} problem involves finding the minimum MAG-set of $\overrightarrow{G}$. Studying this problem allows us to compare and contrast results with the undirected case~\cite{bilo2024inapproximability}. Additionally, exploring the Fixed Parameter Tractability (FPT) of this problem remains an uncharted area for future research in this field.

\bigskip
\noindent\textbf{Acknowledgements:} This research was financed by the IFCAM project ``Applications of graph homomorphisms'' (MA/IFCAM/18/39) and SERB-MATRICS ``Oriented chromatic and clique number of planar graphs'' (MTR/2021/000858). Florent Foucaud was financed by the French government IDEX-ISITE initiative CAP 20-25 (ANR-16-IDEX-0001), the International Research Center ``Innovation Transportation and Production Systems" of the I-SITE CAP 20-25, and the ANR project GRALMECO (ANR-21-CE48-0004). Pavan P D was financed by Academy of Finland grant number 338797.

\bibliographystyle{abbrv}
\bibliography{references(oriented)}

\end{document}